\documentclass[aps,prd,10pt,notitlepage,nofootinbib,superscriptaddress,showkeys,showpacs]{revtex4-1}

\usepackage{amsmath,amssymb,amsthm,latexsym,bbm}
\usepackage[english]{babel}
\usepackage{color}
\usepackage{xspace}
\usepackage{graphicx}
\usepackage{pifont,dsfont}
\usepackage{marvosym}
\usepackage{slashed}
\usepackage{enumitem}
\usepackage{multirow}
\usepackage{subcaption}
\usepackage{caption}
\usepackage{hyperref}

\theoremstyle{definition}
\newtheorem{lemma}{Lemma}

\newtheorem{theorem}{Theorem}

\newtheorem{proposition}{Proposition}

\newtheorem*{theorem*}{Theorem}

\DeclareMathOperator{\tr}{tr}

\makeatletter
\renewcommand\paragraph{\@startsection{paragraph}{4}{\z@}%
                                     {3ex\@plus 1ex \@minus -.2ex}%
                                     {-1ex \@plus .2ex}%
                                     {\normalfont\normalsize\bfseries}}
\makeatother


\begin{document}

\title{\Large \bf Tensor models with generalized melonic interactions}

\author{{\bf Valentin Bonzom}}\email{bonzom@lipn.univ-paris13.fr}
\affiliation{LIPN, UMR CNRS 7030, Institut Galil\'ee, Universit\'e Paris 13,
99, avenue Jean-Baptiste Cl\'ement, 93430 Villetaneuse, France, EU}

\date{\small\today}

\begin{abstract}
\noindent Tensor models are natural generalizations of matrix models. The interactions and observables in the case of unitary invariant models are generalizations of matrix traces. Some notable interactions in the literature include the melonic ones, the tetrahedral one as well as the planar ones in rank three, or necklaces in even ranks. Here we introduce generalized melonic interactions which generalize the melonic and necklace interactions. We characterize them as tree-like gluings of quartic interactions. We also completely characterize the Feynman graphs which contribute to the large $N$ limit. For a subclass of generalized melonic interactions called totally unbalanced interactions, we prove that the large $N$ limit is Gaussian and therefore the Feynman graphs are in bijection with trees. This result further extends the class of tensor models which fall into the Gaussian universality class. Another key aspect of tensor models with generalized melonic interactions is that they can be written as matrix models without increasing the number of degrees of freedom of the original tensor models. In the case of totally unbalanced interactions, this new matrix model formulation in fact decreases the number of degrees of freedom, meaning that some of the original degrees of freedom are effectively integrated. We then show how the large $N$ Gaussian behavior can be reproduced using a saddle point analysis on those matrix models.
\end{abstract}

\medskip

\keywords{Tensor models, Matrix models, Bubbles, Large $N$ limit, Intermediate field}

\maketitle

\section*{Introduction}

Random tensor models \cite{GurauBook} are generalization of random matrix models \cite{MatrixReview}. Here we consider the complex case, which for matrices typically corresponds to a joint distribution of the form $\exp -\tr V(MM^\dagger)$ where $V$ is a polynomial \cite{ColoredMatrixModels}. Such a distribution has a $U(N)\times U(N)$ invariance, $M\mapsto UMV^\dagger$. The natural generalization to tensors, introduced in \cite{Uncoloring}, is to consider a joint distribution of the form $\exp - V(T, \bar{T})$ where $V$ is invariant under $U(N)^d$ transformations. Here $T$ is a complex tensor with $d$ indices ranging from 1 to $N$. In particular, it has $N^d$ degrees of freedom. We consider polynomials of the form
\begin{equation} \label{Action}
V(T, \bar{T}) = \sum_{a_1, \dotsc, a_d} T_{a_1 \dotsb a_d} \bar{T}_{a_1 \dotsb a_d} + N^s\,t\, B(T, \bar{T})
\end{equation}
where $B(T, \bar{T})$ is called a bubble polynomial. Bubble polynomials are in bijection with $d$-regular, bipartite, edge-colored, connected graphs called {\bf bubbles} \cite{Uncoloring} which simply describe the pattern of index contractions of the polynomials.

The typical first question when trying to solve a tensor model is the large $N$ limit: evaluate expectations of $U(N)^d$-invariant polynomials at large $N$. To do so, one cannot recourse to typical methods of random matrices such as eigenvalues. Instead, a natural method for physicists is the Feynman expansion. While it is not a rigorous tool to study the convergence of random tensors at large $N$ in a probabilistic setting, it is expected to give correct predictions (like in random matrices).

Furthermore, the Feynman expansion connects random tensors to random triangulations of $d$-dimensional (pseudo-)manifolds \cite{GurauBook, Uncoloring, SigmaReview, 3D}, just like it does between random matrices and random combinatorial maps. It means that results obtained in this setting for random triangulations have a straight up interpretation in random tensors. For instance, the generating functions of rooted planar maps correspond to the calculation of 2-point functions in matrix models \cite{MatrixModelsCombinatorics}; similarly, 2-point functions of tensor models are this way expressed as generating functions of some connected, face-colored triangulations. For combinatorial purposes, it is easier and sufficient to consider the dual 1-skeletons of those triangulations, which are edge-colored, $d$-regular graphs \cite{ItalianSurvey,LinsMandel,TopologyTensor-Casali-Cristofori-Dartois-Grasselli}. Another way to think about it is as the Feynman expansion being a definition of our objects of interest, which we could call ``combinatorial tensor models'' to emphasize their definition in terms of Feynman graphs.

In contrast with matrix models, it is actually a difficult task to find a non-trivial large $N$ limit in tensor models, the first non-trivial one being due to Gurau \cite{1/NExpansion}. Indeed, if $s$ in \eqref{Action} is ``sufficiently'' small, a large $N$ limit exists but is trivial, in the sense that only a finite number of Feynman graphs contribute at large $N$. A non-trivial large $N$ limit is thus a limit where an infinite number of graphs contribute to the expectations of observables. It was shown in \cite{PhDLionni} by Lionni that if a large $N$ limit for a given bubble interaction, then there is a unique value of $s$ which makes it non-trivial.

Such tensor models have been studied and understood at large $N$ in a few of cases. At $d=3$, the most general result \cite{3D} is that when $B(T, \bar{T})$ is a planar bubble, i.e. dual to a colored triangulation of the 2-sphere, the large $N$ limit is Gaussian: expectations of $U(N)^3$-invariant polynomials at large $N$ are those of a Gaussian model whose covariance is the large $N$, 2-point function.

In terms of Feynman graphs, a non-trivial large $N$ limit which is Gaussian (such as in \cite{3D}) generates graphs in bijection with trees. In contrast with such a behavior, we recall that single-trace matrix models with polynomial interactions are typically non-Gaussian at large $N$. In terms of Feynman graphs, this is due to the fact that planar maps (i.e. ribbon graphs) lie in a different universality class than trees, whose $n$-point functions are not factorizations of 2-point functions.

The first theorem showing that a large class of distributions for random tensors converge universally to a Gaussian is Gurau's universality theorem \cite{Universality}. In the context of distributions given as $\exp -V(T, \bar{T})$ as above, Gurau's universality theorem only applies to a class of bubbles known as melonic bubbles (otherwise the large $N$ limit is trivial). Melonic bubbles are special bubbles which are series-parallel and in the large $N$ limit, they give rise to melonic Feynman graphs which are themselves series-parallel and are responsible for the Gaussian large $N$ limit.

The theorem of \cite{3D} is thus an extension of \cite{Universality}, although only in $d=3$. Another extension, to tensors with indices having different ranges, is in \cite{New1/N}. In even dimensions, non-Gaussian universality classes can be reached, in particular all those of ordinary matrix models, i.e. planar maps with matter. So far they always seem to rely on planar maps and multi-matrix models. In particular, one finds phase transitions between the large $N$ Gaussian behavior and the large $N$, planar limit of matrix models \cite{SigmaReview, MelonoPlanar, LucaJohannes}. Other tensor models have been solved as large $N$ on a case by case basis, i.e. for some given interactions like the one whose interaction bubble is $K_{3,3}$, \cite{StuffedWalshMaps, PhDLionni}.

The world of tensor models with different symmetries than $U(N)^d$ has also started to be explored. An obvious generalization is to consider $O(N)^d$-invariant models. The first step in that direction was Tanasa's multiorientable which is equipped with a $U(N)\times O(N)\times U(N)$-invariance (at $d=3$), \cite{MO-Review, FusyTanasaMO}. Carrozza and Tanasa then introduced an $O(N)^3$-invariant model with tetrahedral interaction \cite{CarrozzaTanasa} (the latter is not bipartite, so does not appear in $U(N)^d$-invariant models), and its expansion beyond the large $N$ limit can be found in \cite{NadorCTKT}. Further generalizations to $O(N)\times O(D)\times O(N)$-symmetry have been obtained in \cite{FerrariRivasseauValette}. In particular, the concept of mirror melons as a  generalization of melonic graphs was introduced. Those models lead to interesting connections with SYK physics \cite{KlebanovTarnopolsky, FerrariLargeD}. Other tensor models inspired by the SYK model have been shown to have non-trivial large $N$ limits: two symmetric tensors \cite{TwoTensors-Gurau}, and tensors in irreducible representations of the symmetric group of the three indices (at $d=3$) \cite{SymTracelessTensors, Irreducible-Carrozza}.

In this article, we stay in the realm of tensor models with $U(N)^d$-invariance. We define tensor models, with non-trivial large $N$ limits, for polynomials $B(T, \bar{T})$ associated to {\bf generalized melonic} (GM) bubbles and show that they extend some universality theorems mentioned above. 

\paragraph{Characterization of GM bubbles --} In Section \ref{sec:TensorialInteractions} we define GM bubbles as a generalization of melonic bubbles. The latter can be constructed using dipole moves. As for GM bubbles, they are constructed from local moves called $C$-bidipole insertions, for $C\subset \{1, \dotsc, d\}$ and $|C|\leq d/2$.

In addition to this local structure, we can characterize their global structure, for which once again they generalize melonic bubbles: GM bubbles correspond to trees of quartic bubbles. There is indeed a canonical gluing of bubbles, which in terms of polynomials consists in removing a $T$ from one polynomial and a $\bar{T}$ from another and contracting the free indices in a canonical way. There is a general theorem stating that all bubbles can be obtained from the gluing of quartic bubbles, where quartic bubble polynomials are those quadratic in $T$ and in $\bar{T}$. This theorem was proved in \cite{StuffedWalshMaps}. In the case of GM bubbles, we can thus show the following additional property.

\begin{theorem} \label{thm:QuarticTree}
GM bubbles are exactly the gluings of quartic bubbles which form trees.
\end{theorem}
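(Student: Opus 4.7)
The plan is to prove both inclusions by induction, using the fact (cited from \cite{StuffedWalshMaps}) that any bubble already admits \emph{some} presentation as a gluing of quartic bubbles, so that the content of the theorem is really about (i) showing the gluing produced from the bidipole construction is a tree, and (ii) showing that any tree-shaped gluing of quartic bubbles can be reassembled by bidipole moves.

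For the forward direction, I would induct on the number $k$ of $C$-bidipole insertions used to build the GM bubble from the elementary bubble (two vertices joined by $d$ parallel colored edges). The base case $k=0$ is the elementary bubble, which is a ``trivial tree'' with a single quartic piece (or the empty tree, depending on convention; a small remark will clarify this). For the induction step, the key observation is that a $C$-bidipole insertion is local: it acts on a pair of parallel edges (or a small neighborhood) inside the current bubble and replaces this local pattern by the same pattern with one new $T$-vertex and one new $\bar{T}$-vertex inserted, connected by the canonical contraction pattern associated to $C$. I would verify directly from the definition of the canonical gluing of bubbles that this local replacement is exactly the operation of gluing a $C$-quartic bubble onto the existing bubble along one edge of the gluing tree. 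Since a single new quartic bubble is attached by a single gluing edge to the previously built tree, the tree structure is preserved.

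For the reverse direction, I would induct on the number of quartic bubbles $q$ in the tree. The key step is to pick a leaf quartic bubble $Q$ in the tree; by definition, $Q$ is attached to the rest of the tree $\mathcal{T}'$ along exactly one gluing edge. By induction $\mathcal{T}'$ corresponds to a GM bubble $B'$. One then shows that the gluing of $Q$ onto $B'$ along a single edge is exactly a $C$-bidipole insertion in $B'$, where $C\subset\{1,\dots,d\}$ (up to complementation, to enforce $|C|\leq d/2$) is the subset labeling the quartic bubble $Q$. The resulting bubble is therefore GM.

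The main obstacle I anticipate is the bookkeeping in step (i) and (ii) above: one must carefully match the definition of a $C$-bidipole insertion, which is phrased as a local move on two parallel edges, with the definition of the canonical gluing of bubbles, which is phrased as removing a $T$ from one polynomial, a $\bar T$ from another, and contracting free indices in a prescribed way. Showing that these two operations agree when one of the glued bubbles is a quartic $C$-bubble amounts to a careful unpacking of the two definitions rather than a deep argument; once this dictionary is set, both inductions close easily. A secondary, cosmetic point is that the tree associated to a GM bubble may not be unique (different sequences of bidipole insertions can produce the same bubble with isomorphic but differently rooted trees), but uniqueness of the tree is not claimed by the theorem, so this does not affect the argument.
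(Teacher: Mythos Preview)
Your proposal is correct and follows essentially the same route as the paper: the paper packages the tree-like gluings into a set $\mathbbm{G}_V$ equipped with a boundary operator $\partial$ (contraction of all dashed gluing lines), proves by induction on $V$ that $\partial G$ is GM for every $G\in\mathbbm{G}_V$ using exactly your leaf-removal step, and handles the converse by the obvious construction that reads off a tree from any bidipole-insertion sequence \eqref{BubbleSequence}. One small imprecision worth fixing: a $C$-bidipole insertion is defined in the paper as acting on a \emph{vertex} (replacing it by three vertices as in \eqref{qMove}), not on ``a pair of parallel edges''; with that correction your dictionary between ``glue a quartic $Q_C$ along one dashed line'' and ``perform a $C$-bidipole insertion'' is precisely the paper's equation \eqref{BoundaryBidipole}, and your two inductions close exactly as in Propositions~\ref{thm:G_V} and~\ref{thm:BoundaryBubbles}.
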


Indeed, a graph can be associated to the gluing by representing each quartic bubble as a vertex and each gluing as an edge between two vertices. A bubble is GM if and only if this graph is a tree.

\paragraph{Tensor models with GM interactions --} In Section \ref{sec:SPTensors}, we study tensor models whose interactions are GM. We start by recalling that as proved in \cite{StuffedWalshMaps}, the Feynman graphs of any tensor model form a subset of those of tensor models with quartic interactions, see Proposition \ref{thm:Surjection}. We thus study quartic models and describe their large $N$ limit in Theorem \ref{thm:Quartic}. Somewhat surprisingly this had never appeared before in the literature, although it only requires a slight generalization of the rank 4 case detailed in \cite{MelonoPlanar}. By combining those results, we find the following theorem.

\begin{theorem} \label{thm:Scaling}
There is a non-trivial large $N$ limit if and only if the scaling coefficients are
\begin{equation} \label{ScalingCoefficients}
s = \sum_{C} |C| b_C - \frac{d(V-2)}{2}
\end{equation}
where the sum is over the $C$-bidipole insertions of the bubble and $V$ is its number of vertices. The graphs contributing to the large $N$ limit are a subset of those of the large $N$ limit of the quartic models.
\end{theorem}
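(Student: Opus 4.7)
My plan is to combine three ingredients already in place: Proposition \ref{thm:Surjection}, which embeds the Feynman graphs of any bubble model into those of a suitable quartic model; Theorem \ref{thm:QuarticTree}, which identifies GM bubbles with tree-like gluings of quartic bubbles; and Theorem \ref{thm:Quartic}, which provides the large-$N$ analysis of the pure quartic model together with its critical scaling.

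The second claim---that the large-$N$ Feynman graphs of the GM model form a subset of those of the quartic models---follows immediately from Proposition \ref{thm:Surjection} applied to the quartic tree decomposition of Theorem \ref{thm:QuarticTree}. Explicitly, the GM tensor model is realized as a quartic multi-interaction model in which a quartic bubble of type $C$ sits at each node of the underlying tree $\mathcal{T}$, and the gluings of $\mathcal{T}$ become prescribed identifications of fields across nodes. Any Feynman graph that is subleading for the quartic model remains subleading under these identifications, so the Feynman graphs of the GM model contributing at large $N$ inject into those given by Theorem \ref{thm:Quartic}.

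For the scaling formula, I would argue that the critical exponent is additive over the tree $\mathcal{T}$. The natural tool is the intermediate-field (Hubbard--Stratonovich) representation: at each quartic block, introduce an auxiliary matrix field linearizing the corresponding quartic interaction, so that the GM interaction becomes a tree of bilinear couplings between auxiliary matrices and $(T,\bar{T})$. Because $\mathcal{T}$ is a tree, the auxiliary matrix fields do not close any loop, and the Gaussian integration over $(T,\bar{T})$ produces no residual $N$-face factors coupling distinct blocks. Hence the power of $N$ factorizes across blocks and $s$ is the sum of the critical scalings of the individual quartic pieces. The number of blocks is $n = (V-2)/2$, since each of the $n-1$ gluings removes two vertices from a total of $4n$; reading the single-block case ($V=4$, one $b_C=1$) of the claimed formula back provides the critical scaling $|C|-d$ per block via Theorem \ref{thm:Quartic}. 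Summing yields
\begin{equation*}
s \;=\; \sum_{C}(|C|-d)\,b_{C} \;=\; \sum_{C} |C|\,b_{C} \;-\; d\,\frac{V-2}{2},
\end{equation*}
which is \eqref{ScalingCoefficients}. The ``only if'' direction is standard: for smaller $s$ the interaction is suppressed and only finitely many Feynman graphs contribute (trivial large-$N$ limit), while for larger $s$ the partition function diverges.

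The main obstacle is justifying this additivity rigorously---equivalently, ruling out residual contributions in the intermediate-field formulation that would enhance subleading graphs to leading order at large $N$. The tree hypothesis on $\mathcal{T}$ is essential here: for a general (non-tree) gluing pattern, closed face loops among the auxiliary matrices would produce extra powers of $N$ and additivity would fail. Once this bookkeeping is carried out inside the intermediate-field representation, the formula \eqref{ScalingCoefficients} follows by direct summation and the theorem is established.
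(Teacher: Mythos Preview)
Your overall strategy---embedding into the quartic model via Proposition~\ref{thm:Surjection} and invoking Theorem~\ref{thm:Quartic}---matches the paper's. However, there are two genuine gaps.

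First, you are over-engineering the additivity of the scaling. You identify it as ``the main obstacle'' and propose to justify it through a full Hubbard--Stratonovich analysis, worrying about residual face loops among auxiliary fields. But the paper's argument is far more direct and does not need any of this. Proposition~\ref{thm:Surjection} already states that the surjection $S$ \emph{preserves the number of bicolored cycles}. Hence on $\mathbbm{G}(\{b_r,H_r\})$ one has two expressions for the exponent of $N$ in the amplitude of the same graph: one using the scaling $s_r$ attached to each copy of $H_r$, and one using the quartic scalings $s_C$ attached to each $Q_C$ inside $H_r$. Equating the non-cycle parts immediately gives $s_r = \sum_C b_C^{(r)} s_C$, and plugging in the quartic value $s_C = |C|-d$ from Theorem~\ref{thm:Quartic} yields \eqref{ScalingCoefficients}. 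No intermediate field, no tree-loop bookkeeping---the cycle-preservation of $S$ is the whole story.

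Second, your ``only if'' direction is incomplete. Saying that smaller $s$ makes the limit trivial is not standard without exhibiting, for every $b_r\geq 1$, a graph in $\mathbbm{G}(b_r,B_r)$ that actually achieves $\delta(G)=d$. The paper supplies this via Lemma~\ref{thm:Trees}: one builds $G_B$ by connecting each pair $\{v,\pi_B(v)\}$ of the canonical pairing with a color-$0$ edge, counts its bicolored cycles explicitly, and then glues $b$ copies together by cutting and reconnecting color-$0$ edges. This family has exactly the right cycle count to saturate $\delta=d$ at the claimed $s_r$. Without such a construction you have only shown that $s_r=\sum_C|C|b_C^{(r)}-d(V_r-2)/2$ gives a bounded $\delta$, not that the bound is attained infinitely often. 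The uniqueness then follows (as in Lionni's argument, reproduced in the paper) by writing $\sigma_r=s_r^*-s_r\geq 0$ and observing that at fixed $\delta$ the number of bubbles $b_r$ is bounded whenever $\sigma_r>0$.
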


\paragraph{Gaussian universality at large $N$ --} The analysis at large $N$ reveals that tensor models with bubbles which have $C$-bidipoles with $|C|=d/2$ behave differently from those which have none. If $B$ has a $C$-bidipole insertion with $|C|=d/2$, it can create planar objects like in matrix models at large $N$. Otherwise, we say that $B$ is {\bf totally unbalanced} and we prove the following theorem in Section \ref{sec:TotallyUnbalanced}.

\begin{theorem} \label{thm:Gaussian}
Tensor models with totally unbalanced GM interactions are Gaussian at large $N$. The covariance satisfies an explicit polynomial equation.
\end{theorem}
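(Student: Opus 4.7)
My plan is to prove Theorem \ref{thm:Gaussian} by combining the quartic tree decomposition of GM bubbles (Theorem \ref{thm:QuarticTree}) with a power-counting argument that uses the totally unbalanced hypothesis $|C|<d/2$ in an essential way. The overall strategy mirrors Gurau's universality proof \cite{Universality}: show that only tree-like Feynman graphs survive at large $N$, deduce Gaussianity via Wick's theorem, and close the Schwinger--Dyson equation for the 2-point function into a polynomial relation.

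First I would pass through the intermediate-field (Hubbard--Stratonovich) representation of the quartic vertices. By Theorem \ref{thm:QuarticTree} the bubble $B$ decomposes as a tree of quartic bubbles indexed by subsets $C_i\subset\{1,\dotsc,d\}$ with $|C_i|\le d/2-1$ under the totally unbalanced hypothesis. Each quartic vertex of type $C_i$ is linearized by introducing an auxiliary rectangular matrix $M_i$ of size $N^{|C_i|}\times N^{d-|C_i|}$ coupled linearly to $T\bar T$, after which $T$ is integrated out. The scaling \eqref{ScalingCoefficients} provided by Theorem \ref{thm:Scaling} is exactly what normalizes the $M_i$ so that their covariances have finite large $N$ limits, and the tree structure of the decomposition ensures that the induced couplings among the $M_i$ are themselves tree-like.

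Second, I would perform a stranded-face power counting on the resulting multi-matrix Feynman graphs, using that Theorem \ref{thm:Scaling} has already reduced the problem to the large $N$ subfamily of the quartic models (via Proposition \ref{thm:Surjection}). The decisive input is $|C_i|<d/2$: because $M_i$ is strictly rectangular, the two sides of each $M_i$-edge carry index sets of unequal sizes $N^{|C_i|}$ and $N^{d-|C_i|}$, and a stranded count shows that any closed face traversing more than two $M_i$-edges of the same type costs a strictly positive power of $1/N$. A balanced $C$-bidipole with $|C|=d/2$ would instead yield square matrices whose planar diagrams survive at leading order, producing the non-Gaussian matrix-model universality classes mentioned in the introduction; excluding this is precisely what totally unbalanced enforces. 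The surviving dominant graphs are thus trees whose vertices are elementary melonic contractions across a single quartic vertex.

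Third, once tree dominance is established, Gaussianity follows by Wick's theorem: every $2n$-point function factorizes into products of the covariance $G$. A Schwinger--Dyson equation for $G$ is obtained by expanding around the first quartic vertex attached to the root; every subtree hanging off that vertex contributes another factor of $G$, and since only finitely many $C_i$-types appear in the decomposition of $B$, the resulting self-consistent equation is a finite-degree polynomial $P(G,t)=0$. The main obstacle I anticipate is the face-counting step: one must show a strict deficit at every closed face longer than a double edge, and verify that gluing subtrees across different $C_i$-types never accidentally reconstructs a balanced configuration that would restore planar contributions. Once this lemma is in place, the Gaussian structure and the polynomial equation for $G$ follow from standard manipulations on rooted trees, and the same conclusion can alternatively be reproduced by a saddle-point analysis on the integrated-out matrix model, as hinted at in the abstract.
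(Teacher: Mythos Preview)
Your approach differs from the paper's. The paper does not use the matrix-model/intermediate-field picture to prove Theorem \ref{thm:Gaussian}; instead it gives a short, direct combinatorial argument based on the maximal 2-cut property (Theorem \ref{thm:2Cut}). The key move is a local ``flip'' on the two color-0 edges incident to a $C$-bidipole: if the pair $\{v,\bar v\}$ is not already joined by a 2-edge-cut, flipping creates $|\widehat C|>d/2$ new bicolored cycles while destroying at most $|C|<d/2$, a strict net gain. An induction on the number of vertices of the marked bubble then shows that every totally unbalanced GM bubble in a dominant graph satisfies the maximal 2-cut property, and the extension to arbitrary observable bubbles is borrowed from \cite{3D}. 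The polynomial equation for the covariance then drops out of a Schwinger--Dyson equation. The matrix-model formulation appears only later (Section \ref{sec:MatrixModel}) and is used to \emph{reproduce} the result via a saddle point, not to prove it.

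Your route via the intermediate field and tree dominance is a legitimate alternative, and its core power-counting step is essentially Lemma \ref{lemma:Unhooking}: unhooking a non-bridge edge with $|C_e|<d/2$ strictly increases $\delta$, so in the totally unbalanced case all edges of a dominant map are bridges and the map is a tree. One technical claim in your outline is wrong, however: the Hubbard--Stratonovich field for a quartic bubble of type $C$ is a \emph{square} matrix in $V_C^*\otimes V_C$ of size $N^{|C|}\times N^{|C|}$ (Lemma \ref{thm:QuarticBubbleHS}), not a rectangular $N^{|C|}\times N^{d-|C|}$ matrix. The asymmetry that drives tree dominance does not come from a rectangular shape of $M_i$ but from the face count: colors in $\widehat C_e$ always gain a face upon unhooking (the new isolated vertex contributes one), outweighing the at most $|C_e|$ faces lost on colors in $C_e$. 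If you replace the rectangular-matrix heuristic by this face argument your plan goes through; what it buys over the paper's proof is a direct link to the saddle-point picture, at the cost of setting up the intermediate-field machinery before any Gaussianity statement is available.
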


This concludes the analysis at large $N$, with a new extension of the universality theorem of Gurau.

\paragraph{Matrix models --} Tensor models with GM interactions have another interesting property which is the object of Section \ref{sec:MatrixModel}. They can be recast as matrix models by a series of Hubbard-Stratonovich transformations. This gives a new intermediate field theory. Compared to the one of \cite{StuffedWalshMaps}, it only applies to GM interactions and not arbitrary interactions. However, the key point is that it does not increase the number of degrees of freedom, like \cite{StuffedWalshMaps} does. It is a longstanding discussion for tensor models which is rooted in the unitary invariance. Indeed, in unitary-invariant matrix models, although the random matrices have $N^2$ degrees of freedom, unitary invariance is such that matrix models only have $N$ degrees of freedom, the eigenvalues, after integrating over the unitary group.

However in tensor models, no such reduction of the number of degrees of freedom is known in general. It is possible to explicitly integrate some angular degrees of freedom to evaluate expectations in the Gaussian distribution as effective matrix correlations \cite{UnitaryGaussianExpectations}, but this method applies case by case only.

For models with melonic cycles for instance (such as quartic bubbles), there is an intermediate field theory which reduces the number of degrees of freedom to $N$ instead of $N^d$, see e.g. \cite{ConstructiveQuartic} where it is used to prove Borel summability of the perturbative series. The most general intermediate field theory of \cite{StuffedWalshMaps} can however, typically, increase the number of degrees of freedom up to $N^{2d}$. Although it can still be useful to study the large $N$ limit of many models, see e.g. \cite{Octahedra} for the large $N$ limit of the model with the cube as bubble (dual to the octahedron), it goes against the idea of integrating some degrees of freedom to find intermediate field matrix models.

In fact for totally unbalanced GM interactions we show that our new matrix models do reduce the number of degrees of freedom. It is multi-matrix model with matrices of sizes $N^{2|C|}$ for each $C$-bidipole insertion. The interaction can be represented as a tree $\mathcal{T}$ with edges colored by sets $C_e\subset \{1, \dotsc, d\}$ satisfying $|C|\leq d/2$. The matrix model associates a pair of matrix $(X_e, X_e^\dagger)$ to each edge of $\mathcal{T}$, or equivalently a complex matrix $X_h$ to each half-edge $h$ of the tree with the contraint $X_{h_2} = X_{h_1}^\dagger$ for $e=\{h_1 h_2\}$. We prove the following theorem.

\begin{theorem} \label{thm:MatrixModel}
The partition of a tensor model with a GM interaction writes as
\begin{equation}
Z_N(t) = \int \prod_{e\in\mathcal{T}} dX_e dX_e^\dagger\ \exp -\sum_{e\in\mathcal{T}} \tr_{V_{C_e}} (X_e X^\dagger_e) - \tr_{E_d}\ln \Biggl(\mathbbm{1}_{E_d} - \sum_{v\in\mathcal{T}} \prod_{h_v}^{\substack{\text{counter-}\\\text{-clockwise}}} (N^{s_B} t_B)^{\frac{1}{V-2}} \epsilon_h\tilde{X}_{h_v} \Biggr) 
\end{equation}
where the sum in the logarithm is over the vertices $v\in\mathcal{T}$, and the product over the half-edges $h_v$ incident to $v$, and $\epsilon_h$ are signs whose product is $-1$.
\end{theorem}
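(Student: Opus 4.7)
The overall plan is to apply the Hubbard--Stratonovich (HS) transformation once per edge of the tree $\mathcal{T}$ supplied by Theorem~\ref{thm:QuarticTree}, then to Gaussian-integrate out the original tensor $T,\bar T$, and finally to recognise the resulting effective action as the stated $\tr\ln$. Theorem~\ref{thm:QuarticTree} decomposes $B(T,\bar T)$ as a tree of quartic bubbles glued along $\mathcal{T}$: at each edge $e$, the colour set $C_e$ prescribes the bidipole implementing the gluing, with the indices coloured in $C_e$ flowing through one side of the bidipole and those in the complement $\bar C_e=\{1,\dotsc,d\}\setminus C_e$ through the other. Hence the bidipole coupling factorises as the trace of a product of two bilinears in $(T,\bar T)$ on the tensor product $E_{C_e}\otimes E_{\bar C_e}$.

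Next, I would apply the standard quartic HS identity one edge at a time. For each edge $e$ the associated quartic coupling $-\alpha_e\,\tr\mathcal{O}_e\mathcal{O}_e^\dagger$ is rewritten as $\int dX_e\,dX_e^\dagger\,\exp\bigl(-\tr X_eX_e^\dagger+\sqrt{\alpha_e}\,\tr(X_e\mathcal{O}_e^\dagger+X_e^\dagger\mathcal{O}_e)\bigr)$, introducing an $N^{|C_e|}\times N^{|C_e|}$ complex matrix $X_e$, which we identify with $X_{h_1}$ and $X_{h_1}^\dagger=X_{h_2}$ on the two half-edges of $e$. Because Theorem~\ref{thm:QuarticTree} guarantees that distinct edges of $\mathcal{T}$ share only individual $T$ or $\bar T$ slots and not entire bilinears, the HS transformations can be performed simultaneously. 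Once this is done, the action becomes bilinear in $(T,\bar T)$, of the form $\bar T\cdot K(X)\cdot T$ with $K(X)=\mathbbm{1}_{E_d}-M(X)$, where $M(X)$ is a sum, indexed by the vertices $v\in\mathcal{T}$, of tensorial monomials in the $X_{h_v}$ at the half-edges incident to $v$.

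Gaussian integration over $(T,\bar T)$ then yields $\det K(X)^{-1}=\exp\bigl(-\tr_{E_d}\ln K(X)\bigr)$, which is precisely the $\tr\ln$ in the statement. The contribution of a given vertex $v\in\mathcal{T}$ to $M(X)$ is the product of the $X_{h_v}$ at the half-edges incident to $v$, read off in the canonical cyclic (counter-clockwise) order inherited from the quartic bubble at $v$. The signs $\epsilon_h$ with $\prod_{h\in e}\epsilon_h=-1$ encode both the conjugation $X_{h_2}=X_{h_1}^\dagger$ and the sign of the Gaussian weight in the HS step, while the per-half-edge factor $(N^{s_B}t_B)^{1/(V-2)}$ is fixed by homogeneity: $\mathcal{T}$ has $V-2$ half-edges in total, and this symmetric distribution is the unique one reproducing $N^{s_B}t_B\,B(T,\bar T)$ when the $X_e$ are classically eliminated.

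The main obstacle is this last combinatorial identification: although the HS transformations are independent at each edge, the bilinear operator $K(X)$ couples all the $X_e$'s together through the global contraction pattern of the bubble, and the counter-clockwise cyclic product at each $v$ has to emerge from that global structure. I would handle this by induction on $|\mathcal{T}|$, taking as base case a single quartic bubble, where one recovers the familiar intermediate-field representation of quartic tensor models, and peeling off one leaf of $\mathcal{T}$ at each step. A chosen orientation of the tree should then fix the global assignment of signs $\epsilon_h$ and the counter-clockwise ordering at each vertex, with the inductive step verifying that absorbing a new leaf merely multiplies the previously-assembled argument of the logarithm by an additional factor located at the newly-attached vertex.
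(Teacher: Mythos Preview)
There is a genuine gap in the central step. You write that after applying Hubbard--Stratonovich once per edge of $\mathcal{T}$, ``the action becomes bilinear in $(T,\bar T)$, of the form $\bar T\cdot K(X)\cdot T$''. This is not true as stated. The original action contains $e^{-N^{s_B}t_B B(T,\bar T)}$ with $B$ a single irreducible polynomial of degree $V$ in $T,\bar T$; it is \emph{not} a sum or product of quartic factors that one could HS-split independently. A $C$-bidipole inside $B$ indeed produces one bilinear factor $(T|T)_{\widehat C}$, but the remaining factor it is traced against has degree $V-2$, not two, so your ``trace of a product of two bilinears'' description fails for $V>4$.

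The paper closes this gap in two steps you do not have. First (Lemma~\ref{thm:Zint1}), it un-glues the tree by introducing one auxiliary tensor pair $(T_\ell,\bar T_\ell)$ per dashed line of $G$ (equivalently per unmarked corner of $\mathcal{T}$), so that $e^{-tB}$ becomes an integral over $T_1,\dotsc,T_{V/2-2}$ of the exponential of a \emph{sum} of genuine quartic polynomials $Q_{C_j}$, each involving possibly different tensors. Only then can HS be applied edge by edge (Lemma~\ref{thm:QuarticBubbleHS}), and the result is bilinear in the whole collection $\{T,T_1,\dotsc,T_{V/2-2}\}$, not in $T$ alone. The quadratic form $K_{\mathcal T}$ is therefore a $(V/2-1)\times(V/2-1)$ block matrix indexed by corners of $\mathcal T$, with entries in $E_d^*\otimes E_d$. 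Second (Lemma~\ref{thm:Determinant}), the reduction of $\det K_{\mathcal T}$ to the single determinant $\det_{E_d}\bigl(\mathbbm{1}-\sum_v\prod_{h_v}\epsilon_h\tilde X_{h_v}\bigr)$ is a separate, non-obvious computation; the paper proves it by the leaf-peeling induction you mention, but applied to the block determinant rather than to the HS step. Your induction is placed at the wrong stage: it is needed to collapse a large block determinant, not to ``identify $K(X)$'' after an already-bilinear action.
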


We also show that a simple ansatz allows to reproduce the Gaussian behavior at large $N$ in terms of a saddle point. As expected in a Gaussian model, the Vandermonde determinant does not contribute and all eigenvalues fall into the potential well (also why we expect our simple ansatz to be correct), see Proposition \ref{thm:SaddlePoint}.

After that, the next step would be to study the fluctuations around the saddle point, which at leading order will by definition be Gaussian, as in the case of quartic melonic models \cite{Quartic-Nguyen-Dartois-Eynard}. Most interestingly, this procedure of using the intermediate field, finding the saddle point and looking at the fluctuations around it, is the only method known so far to exhibit some version \cite{BlobbedTR-Borot, BlobbedTR-Borot-Shadrin} of the topological recursion (see \cite{QuarticTR} in the quartic melonic case) and potential connections to integrability \cite{GiventalTensor-Dartois}. This method thus requires to not work with the original tensor models but with the intermediate field model because it has fewer degrees of freedom and it is easier to understand the latter through their Schwinger-Dyson equations rather than the original ones\footnote{In comparison, Schwinger-Dyson equations of tensor models are much harder to analyze \cite{SchwingerDysonTensor, Revisiting, DoubleScaling}.}.

We can thus hope that this new intermediate field theory can help understand tensor models beyond the large $N$ limit and also shed some light on how or whether it is possible to integrate some degrees of freedom in tensor models.

%

\section{GM tensorial interactions} \label{sec:TensorialInteractions}

\subsection{Tensors, unitary invariance and bubbles}

\paragraph*{Unitary invariance --} Let $V\simeq \mathbbm{C}^N$ and denote $V_1, \dotsc, V_d$, $d>2$ copies of $V$ where $V_c$ is said to be of color $c\in\{1, \dotsc, d\}$. If $C\subset \{1, \dotsc, d\}$ is a subset of colors, we denote
\begin{equation}
V_C = \bigotimes_{c\in C} V_c \qquad \text{and} \qquad
E_d = \bigotimes_{c=1}^d V_c.
\end{equation}
We consider a tensor $T$ as an element of $E_d$ and its complex conjugate $\bar{T}$ in the dual space $E_d^*$. The components are denoted $T_{a_1 \dotsb a_d}$ and $\bar{T}_{a_1 \dotsb a_d}$, for $d>2$ and $a_c =1, \dotsc, N$ for $c=1, \dotsc, d$. 

We focus on interactions which are invariant under the natural action of $U(V_1)\otimes \dotsb \otimes U(V_d) \simeq U(N)^d$, i.e.
\begin{equation}
T'_{a_1 \dotsb a_d} = \sum_{b_1, \dotsc, b_d} U^{(1)}_{a_1 b_1} \dotsm U^{(d)}_{a_d b_d}\ T_{b_1\dotsb b_d}.
\end{equation}
where the unitary matrices $U^{(c)}$ are independent.

\paragraph*{Bubbles --} The ring of invariant polynomials is generated by the bubble polynomials. A {\bf bubble} is a bipartite, connected graph: $i)$ whose vertices have degree $d$, $ii)$ each edge carries a color from the set $\{1, \dotsc, d\}$, $iii)$ each color is incident exactly once on each vertex. If $B$ is such a bubble, its associated polynomial $B(\{T_v\})$ is obtained by assigning a tensor $T_v$ to each white vertex $v$ of $B$, a $\bar{T}_v$ to each black vertex, and contracting their indices as follows. For an edge of color $c$ between two vertices $v_1$ and $v_2$, we identify and sum the indices of the corresponding $T_{v_1}$ and $\bar{T}_{v_2}$ which are in position $c$,
\begin{equation}
\sum_{a_c=1}^N (T_{v_1})_{\dotsb a_c \dotsb} (\bar{T}_{v_2})_{\dotsb a_c \dotsb} = \begin{array}{c} \includegraphics[scale=.4]{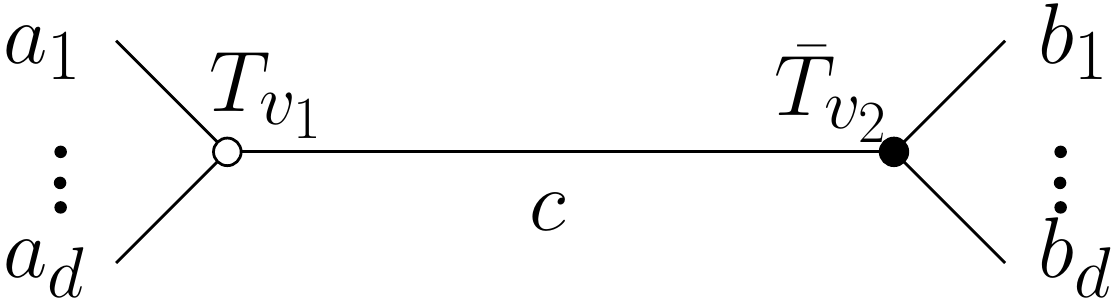}\end{array}.
\end{equation}
This way all indices are summed between $T$s and $\bar{T}$s, which ensures invariance. 

For two tensors $T_{v_1}, T_{v_2}$ and a color set $C\subset \{1, \dotsc, d\}$ we denote the set of complementary colors $\widehat{C} = \{1, \dotsc, d\}\setminus C$ and 
\begin{equation} \label{TwoTensorContraction}
(T_{v_2} | T_{v_1})_{\widehat{C}} = \begin{array}{c} \includegraphics[scale=.4]{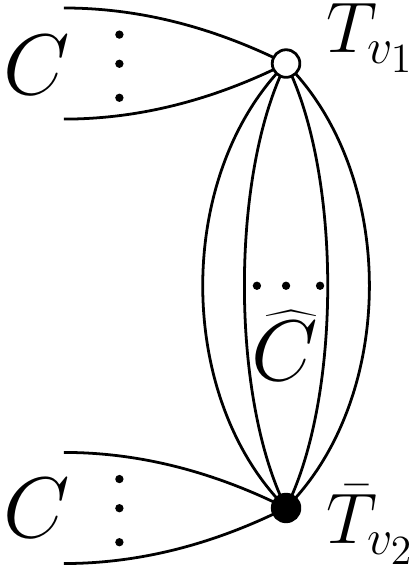} \end{array} \quad \in \quad \bigotimes_{c\in C} V^*_c\otimes V_c
\end{equation}
the matrix obtained by contracting the indices of $T_{v_1}$ and $\bar{T}_{v_2}$ whose colors are in $\widehat{C}$. It is a $N^{|C|}\times N^{|C|}$-square matrix. In case $C=\emptyset$, we simply omit $\widehat{C} = \{1, \dotsc, d\}$ in the notation, i.e. $(T_{v_2}|T_{v_1})$ which is a scalar.

\paragraph*{The 2-vertex bubble --} There is a single bubble with two vertices, corresponding to the {\bf only invariant of degree 2},
\begin{equation}
\begin{array}{c} \includegraphics[scale=.35]{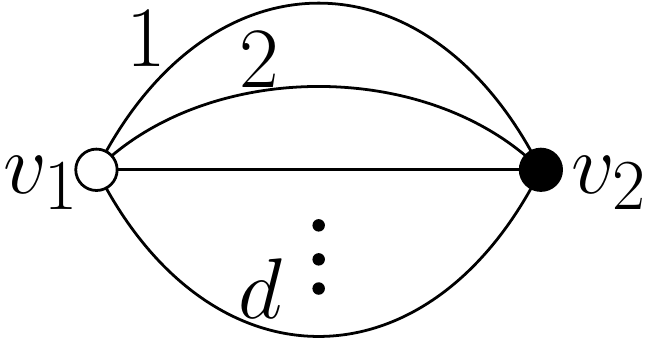}\end{array} = \sum_{a_1, \dotsc, a_d} (T_{v_1})_{a_1 \dotsb a_d} (\bar{T}_{v_2})_{a_1 \dotsb a_d} = (T_{v_2}|T_{v_1})
\end{equation}
and its complex conjugate (except when $T_{v_1} = T_{v_2}$ since $(T|T)$ is real).

\paragraph*{Quartic bubbles --} At degree 4, the bubbles are called quartic bubbles and can be obtained this way. Consider a white vertex: there are only two black vertices it can connect to. Therefore, it can have edges with colors in $C\subset \{1, \dotsc, d\}$ connected to one black vertex and $\widehat{C}=\{1, \dotsc, d\}\setminus C$ connected to the other. This determines the bubble $Q_C$
\begin{equation}
Q_{C} = \begin{array}{c} \includegraphics[scale=.35]{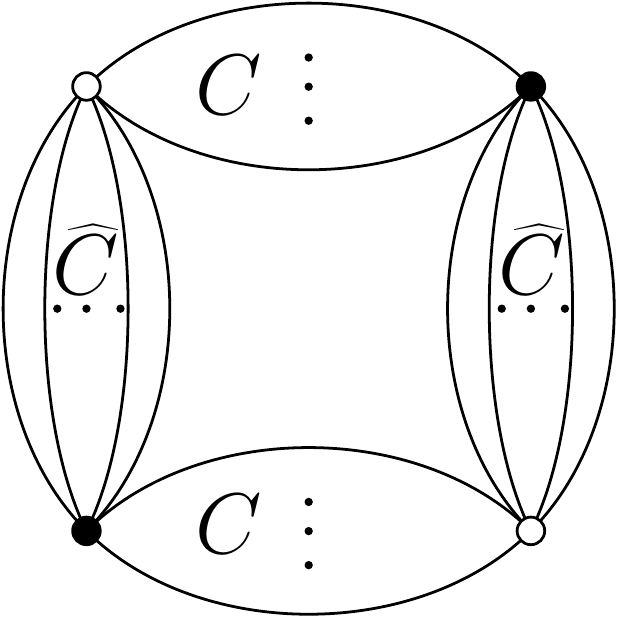}\end{array} = Q_{\widehat{C}}
\end{equation}
As for the associated polynomial, we write
\begin{equation} \label{CutQuartic}
\begin{array}{c} \includegraphics[scale=.35]{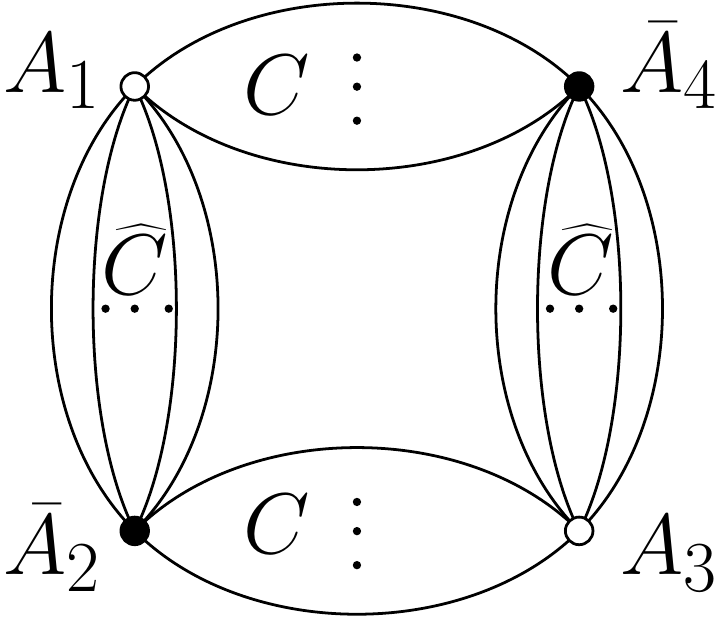} \end{array} = Q_{C}(A_1, \bar{A}_2; A_3, \bar{A}_4) =  Q_{\widehat{C}}(A_1, \bar{A}_4 ; A_3, \bar{A}_2)
\end{equation}
or equivalently
\begin{equation} \label{QuarticMatrix}
\begin{array}{c} \includegraphics[scale=.35]{4VertexBubblePolynomial.pdf} \end{array}
 = \tr_{V_C} \Bigl((A_2|A_1)_{\widehat{C}} (A_4|A_3)_{\widehat{C}}\Bigr) = \tr_{V_{\widehat{C}}} \Bigl((A_4|A_1)_{{C}} (A_2|A_3)_{{C}}\Bigr)
\end{equation}
where $\tr_V$ denotes the trace of an endomorphism of $V$. In the first equality, the trace is therefore on square matrices of size $N^{|C|}\times N^{|C|}$. In the second equality, the exchange symmetry $C\leftrightarrow \widehat{C}$ is used.

If $A_1=A_3=T$ and $\bar{A}_2 = \bar{A}_4 = \bar{T}$, we simply write $Q_C(T, \bar{T})$.


\paragraph*{Admissible color sets --} $C$ and $\widehat{C}$ play equivalent roles since $Q_C = Q_{\widehat{C}}$. However, we will see that in order to reduce the number of degrees of freedom of tensor models, it will be important to distinguish between the subset with more than $d/2$ colors and the one with less than $d/2$ colors. As a convention, we thus choose $C$ such that
\begin{itemize}
\item $|C|\leq d/2$,
\item if $|C|=d/2$ (then $|\widehat{C}|=d/2$ too), then we choose $1\in C$ as a convention.
\end{itemize}
We denote $\mathcal{C}$ the set of such admissible color sets.

\subsection{Generalized melonic bubbles} \label{sec:GMB}

\paragraph*{Melonic bubbles --} A {\bf melonic dipole} is a pair of vertices connected by exactly $d-1$ edges. A melonic dipole insertion of color $c$ is the move inserting a melonic dipole on an edge of color $c$,
\begin{equation} \label{DipoleInsertion}
\begin{array}{c} \includegraphics[scale=.5]{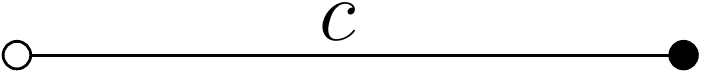} \end{array} \quad \to \quad \begin{array}{c} \includegraphics[scale=.5]{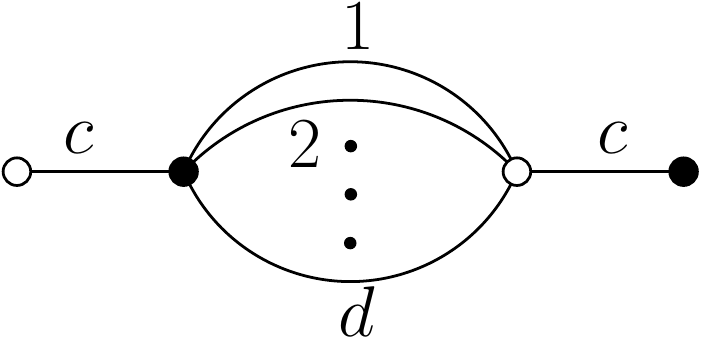} \end{array}
\end{equation}
It is a sequence of two series and $(d-1)$ parallel extensions.

A melonic bubble is any bubble obtained by repeatedly inserting melonic dipoles on arbitrary edges, starting from the bubble with two vertices. The first step starting from the bubble with two vertices produces the quartic bubble $Q_{\{c\}}$ introduced above. Then another edge is selected, in $Q_{\{c\}}$, to perform a new melonic dipole insertion, creating a bubble of degree 6, and so on. There are obviously several sequences of melonic moves to construct a given melonic bubble.

\paragraph*{Bidipoles --} If $C\subset \{1, \dotsc, d\}$, a $C$-bidipole $\mathcal{V}=\{v, \bar{v},w\}$ is a set of three vertices $v$, $\bar{v}$, $w$, such that $\bar{v}$ is connected to $w$ by the edges with colors in $C$, and to $v$ by the edges with colors in $\widehat{C}= \{1, \dotsc, d\}\setminus C$. A $C$-bidipole {\bf insertion} is the removal of a vertex and in its place the insertion of a $C$-bidipole,
\begin{equation} \label{qMove}
\begin{array}{c} \includegraphics[scale=.5]{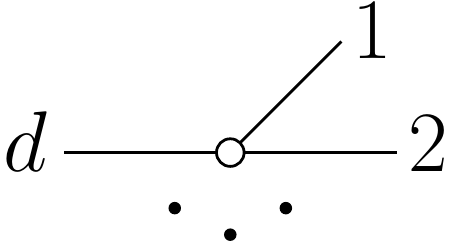} \end{array} \quad \to \quad \begin{array}{c} \includegraphics[scale=.5]{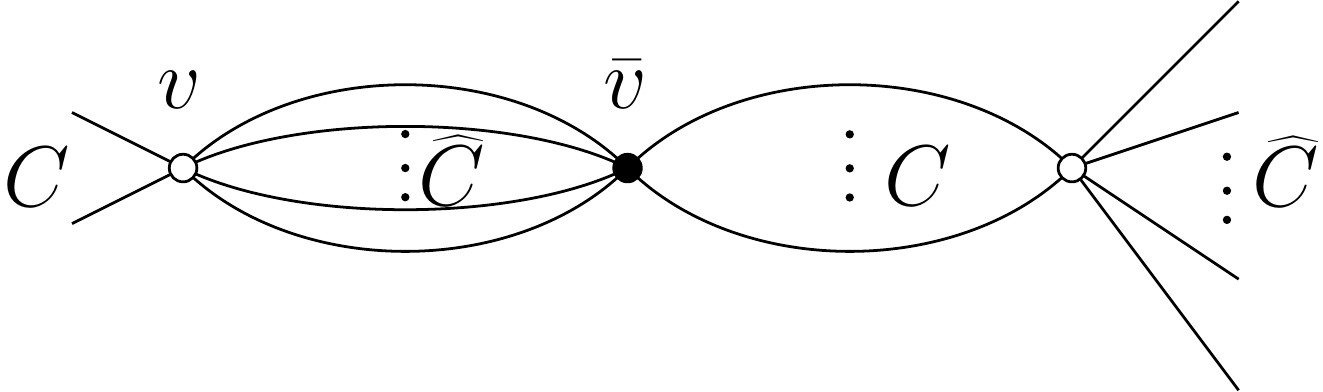} \end{array}
\end{equation}
and we call bidipole {\bf removal} the inverse operation.

There is no difference between a $C$-bidipole and a $\widehat{C}$-bidipole. The move also exists with white and black vertices exchanged. The melonic move corresponds to $C$ (or $\widehat{C}$) of cardinality 1.

\paragraph*{Generalized melonic bubbles --} We call bubbles built from the two-vertex bubble by arbitrary sequences of bidipole insertions {\it generalized melonic} (GM) bubbles.

\paragraph*{Uniqueness of the insertion set --} By definition, for every GM bubble $B$ there exists a (non-unique) sequence of GM bubbles 
\begin{equation} \label{BubbleSequence}
B = B^{(0)}\ \underset{C_1, v_1}{\leftarrow}\ \ B^{(1)}\ \leftarrow \dotsb \underset{C_{V/2-2}, v_{V/2-2}}{\leftarrow} \ B^{(V/2-2)} = Q_{C_{V/2-2}}
\end{equation}
where $(B^{(i)})$ has $V-2i$ vertices, and a sequence of sets $(C_i\subset \{1, \dotsc, d\})$ and a sequence of vertices $(v_i \in B^{(i)})$, such that the bubble $B^{(i)}$ is obtained from the bubble $B^{(i+1)}$ by a $C_i$-bidipole insertion at the vertex $v_i$.

Here is an example of a construction of a GM bubble using bidipole moves. At each step we have circled the vertex $v_i$ on which the next bidipole insertion is performed and indicated the corresponding set $C_i$.
\begin{multline} \label{ExampleSPBubble}
\begin{array}{c} \includegraphics[scale=.4]{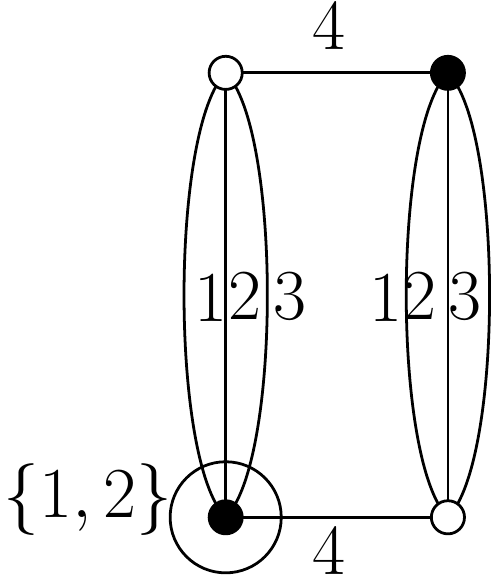} \end{array} \to \begin{array}{c} \includegraphics[scale=.4]{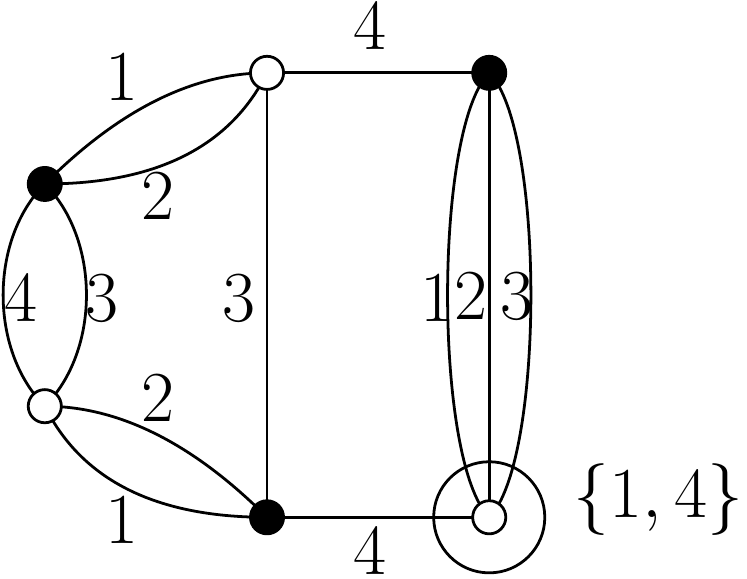} \end{array} \to \begin{array}{c} \includegraphics[scale=.4]{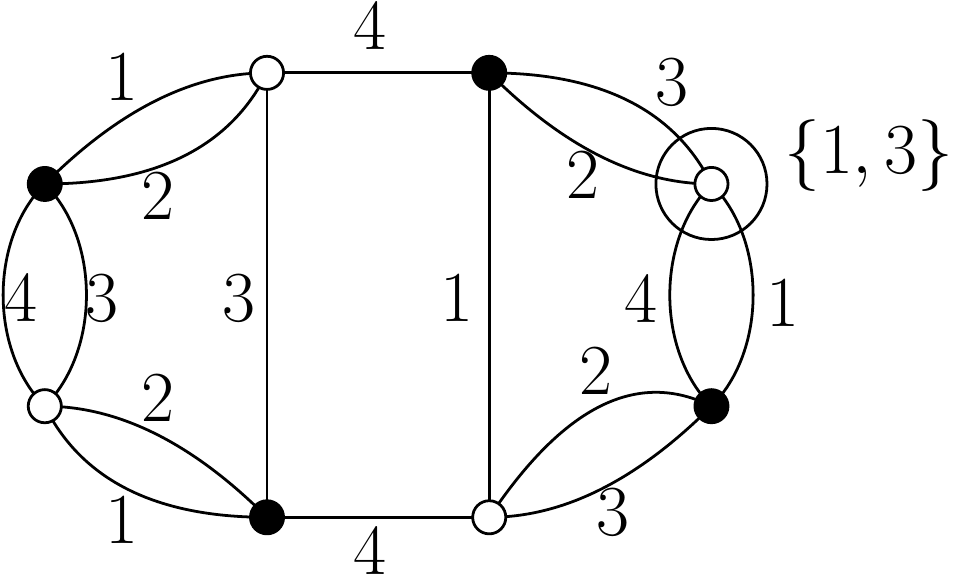} \end{array} \to \begin{array}{c} \includegraphics[scale=.4]{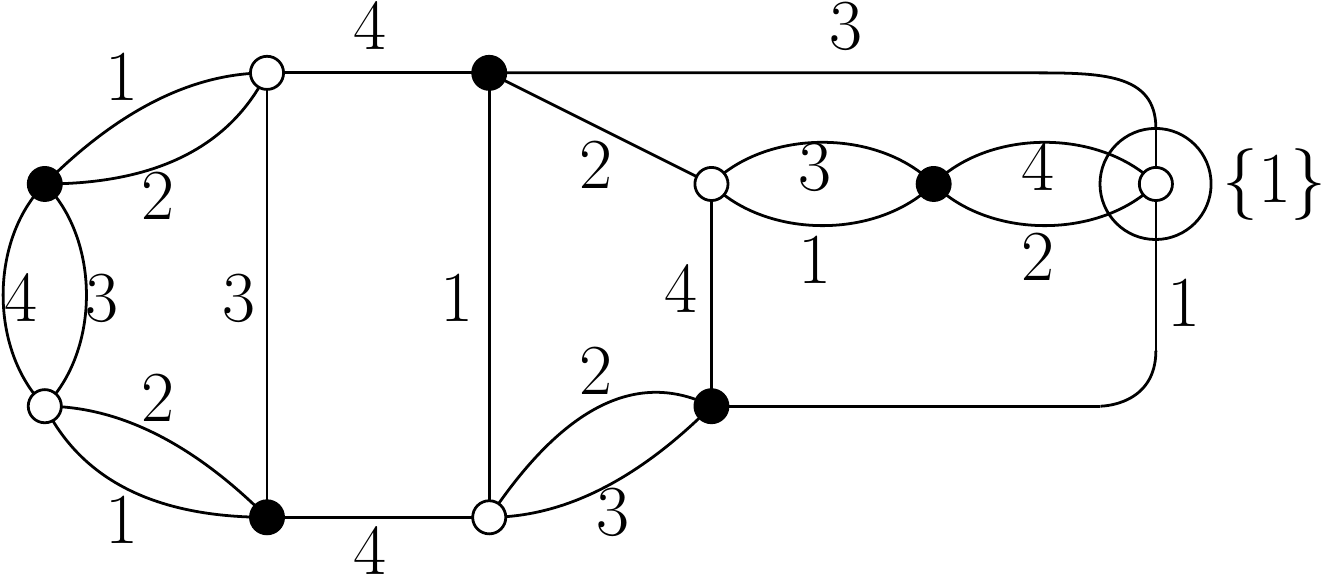} \end{array} \\
\to \begin{array}{c} \includegraphics[scale=.4]{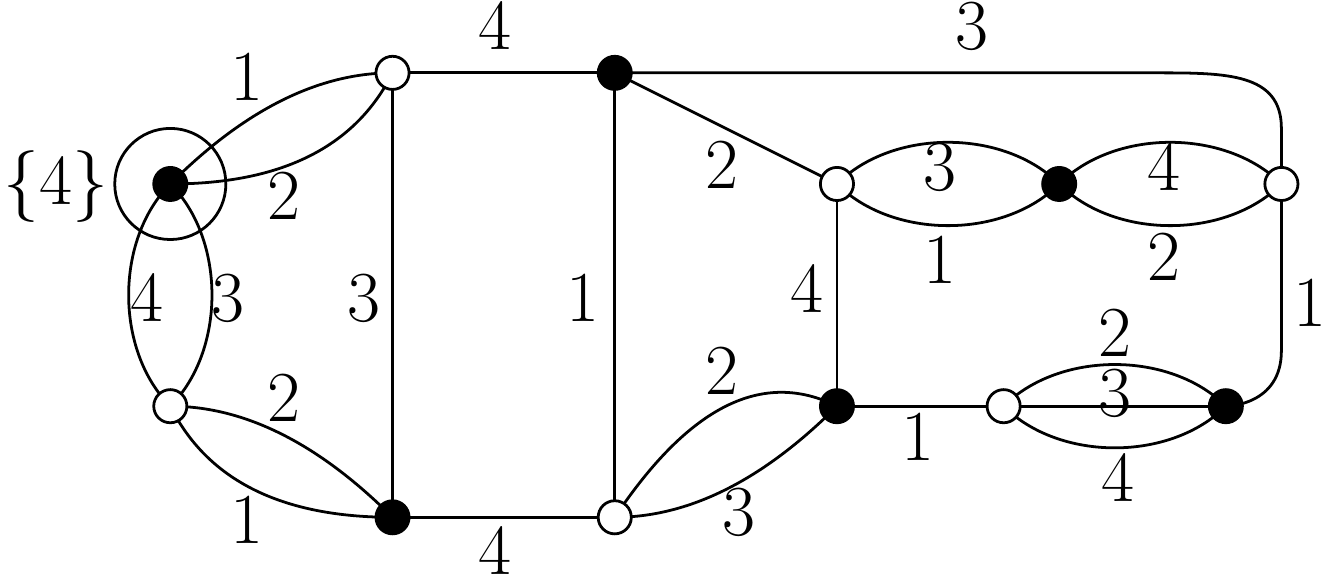} \end{array} \to \begin{array}{c} \includegraphics[scale=.4]{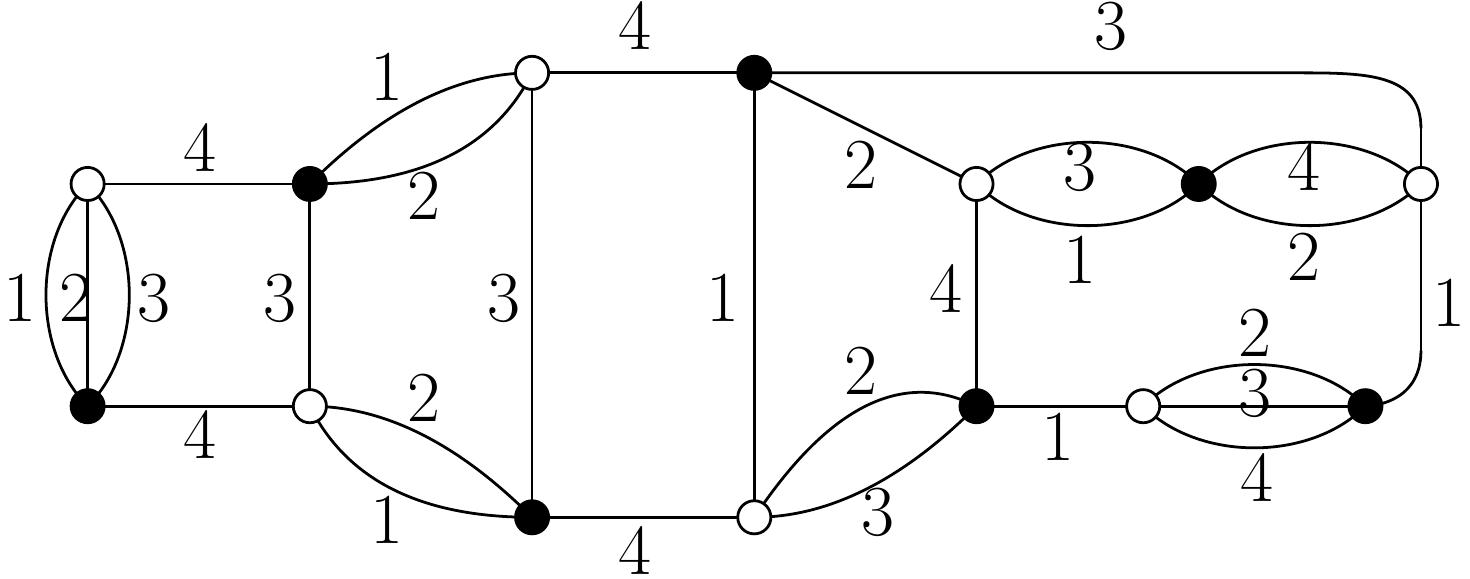} \end{array}
\end{multline}

\begin{proposition} \label{thm:SetsUniqueness}
The multiset of bidipole insertion color sets, $\mathcal{C}_B = \{C_1, \dotsc, C_{V/2-2}\}_{i=1,\dotsc,V/2-2}$, is uniquely determined by $B$.
\end{proposition}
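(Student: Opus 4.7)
The plan is to prove the statement by strong induction on $V$, via a confluence (diamond-lemma) argument for bidipole removals. For the base case, $V=4$ forces $B$ to be a quartic bubble and the multiset $\mathcal{C}_B$ is trivially determined by $B$. For the inductive step, given two construction sequences $(C_i, v_i)_i$ and $(\tilde C_j, \tilde v_j)_j$ for the same bubble $B$, I would compare their first backward moves. If these remove the same bidipole, then $B^{(1)} = \tilde B^{(1)}$ and $C_1 = \tilde C_1$, and the inductive hypothesis applied to $B^{(1)}$ closes the argument.

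If instead the first backward moves remove distinct bidipoles $\mathcal{V} = \{v_1, \bar v_1, w_1\}$ (color $C_1$) and $\mathcal{V}' = \{\tilde v_1, \bar{\tilde v}_1, \tilde w_1\}$ (color $\tilde C_1$) of $B$, the crucial step is a \emph{confluence lemma}: the triple $\mathcal{V}'$, possibly with one of its vertices replaced by the new vertex created in the removal of $\mathcal{V}$, remains a $\tilde C_1$-bidipole in $B^{(1)}$, and symmetrically. Granting this, both $B^{(1)}$ and $\tilde B^{(1)}$ reduce by one further removal to a common smaller GM bubble $B''$. Applying the inductive hypothesis to $B^{(1)}$ and $\tilde B^{(1)}$ yields $\mathcal{C}_{B^{(1)}} = \mathcal{C}_{B''} \sqcup \{\tilde C_1\}$ and $\mathcal{C}_{\tilde B^{(1)}} = \mathcal{C}_{B''} \sqcup \{C_1\}$, so both full sequences give $\mathcal{C}_B = \mathcal{C}_{B''} \sqcup \{C_1, \tilde C_1\}$.

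The main obstacle is the confluence lemma, which I would prove by case analysis on $|\mathcal{V} \cap \mathcal{V}'|$. The key observation is that the black vertex $\bar v$ of a bidipole has all $d$ of its incident edges contained inside the bidipole (namely $|C|$ edges to $w$ and $|\widehat C|$ edges to $v$). Hence if $\bar v_1 = \bar{\tilde v}_1$, the edge-color pattern at that shared vertex forces $\mathcal{V}$ and $\mathcal{V}'$ to coincide (possibly under the exchange $C \leftrightarrow \widehat C$), contradicting distinctness. The remaining overlaps therefore only involve white vertices ($v_1, w_1$ vs.\ $\tilde v_1, \tilde w_1$); a direct inspection of how external edges are re-attached in a bidipole removal shows that the new vertex replacing the triple $\{v_1, \bar v_1, w_1\}$ in $B^{(1)}$ inherits exactly the external edges needed so that $\mathcal{V}'$, with its shared white vertex replaced by this new one when applicable, remains a valid $\tilde C_1$-bidipole. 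This yields the confluence property and completes the induction.
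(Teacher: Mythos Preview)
Your proposal is correct and follows essentially the same approach as the paper: strong induction on $V$ together with a confluence (diamond) argument for bidipole removals, proved by case analysis on $|\mathcal{V}\cap\mathcal{V}'|$. The paper makes the three overlap cases $|\mathcal{V}\cap\mathcal{V}'|\in\{0,1,2\}$ fully explicit (in particular, when $|\mathcal{V}\cap\mathcal{V}'|=2$ the two removals actually yield the \emph{same} $B^{(1)}$ with $C_1=C_1'$, rather than a common $B''$ one step further down), whereas you subsume these under a single confluence lemma; you should also note that bidipoles may have either a black or a white middle vertex, so your ``remaining overlaps involve only white vertices'' statement needs the obvious symmetric counterpart.
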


Equivalently $\mathcal{C}_B$ is independent of the choice of sequence \eqref{BubbleSequence}. The bubbles $(B^{(i)})$ and the vertices $(v_i)$ are not canonical but the color sets $C_i$ are.

\begin{proof}
We proceed by induction on the number of vertices of $B$. If it is quartic, i.e. $B=Q_C$ for some $C$, there is obviously a single set $C$. Similarly, it can be checked that if $B$ has six vertices, it is fully characterized (up to exchanging black with white vertices) by two sets $C_1, C_2$. Indeed, one can perform a $C_2$-bidipole insertion on a vertex of $Q_{C_1}$ or a $C_1$-bidipole insertion on a vertex of $Q_{C_2}$ with the same result.

Assume that the proposition holds for bubbles with $V\geq 6$ vertices and consider $B$ GM with $V+2$ vertices. We find a $C_1$-bidipole for some $C_1$, $\mathcal{V}_1=\{u_1,v_1,w_1\}$. Removing the bidipole we get another GM bubble $B^{(1)}$ with $V$ vertices. From the induction hypothesis, it has a unique set $\mathcal{C}_{B^{(1)}}$ and
\begin{equation}
\mathcal{C}_B=\{C_1\} \cup \mathcal{C}_{B^{(1)}}.
\end{equation}

Since $\mathcal{C}_{B^{(1)}}$ is unique, the only way to possibly get a different $\mathcal{C}_B$ is to start by removing from $B$ another bidipole, say a $C_1'$-bidipole $\mathcal{V}_1'=\{u_1', v_1', w_1'\}\neq \mathcal{V}_1$. If $\mathcal{V}_1\cap\mathcal{V}_1' = \emptyset$, then both bidipoles are non-overlapping and we can either remove $\mathcal{V}_1$ then $\mathcal{V}_1'$ or the other way around and get the same GM bubble $B^{(2)}$. It comes
\begin{equation}
\mathcal{C}_B=\{C_1,C_1'\} \cup \mathcal{C}_{B^{(2)}}.
\end{equation}
where $\mathcal{C}_{B^{(2)}}$ is unique.

If $|\mathcal{V}_1\cap \mathcal{V}_1'|=2$, it implies $C_1'=C_1$ and removing $\mathcal{V}_1$ or $\mathcal{V}_1'$ gives the same bubble $B^{(1)}$
\begin{equation}
\begin{array}{c} \includegraphics[scale=.4]{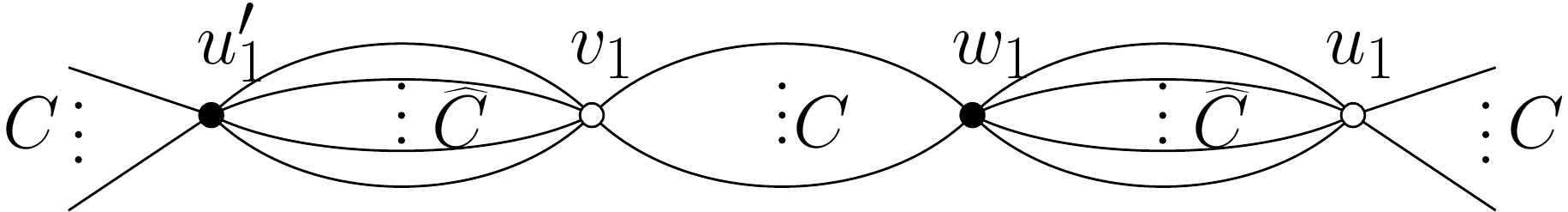} \end{array} \qquad \to \qquad \begin{array}{c} \includegraphics[scale=.4]{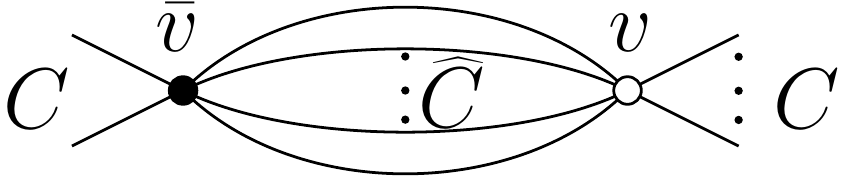} \end{array}
\end{equation}
Again, $\mathcal{C}_B = \{C_1\}\cup \mathcal{C}_{B^{(1)}}$.

If $|\mathcal{V}_1\cap \mathcal{V}_1'|=1$ then we have a situation as follows, where we can remove both $\mathcal{V}_1$ and $\mathcal{V}'_1$ in any order and get a GM bubble $B^{(2)}$,
\begin{multline}
\begin{array}{c} \includegraphics[scale=.4]{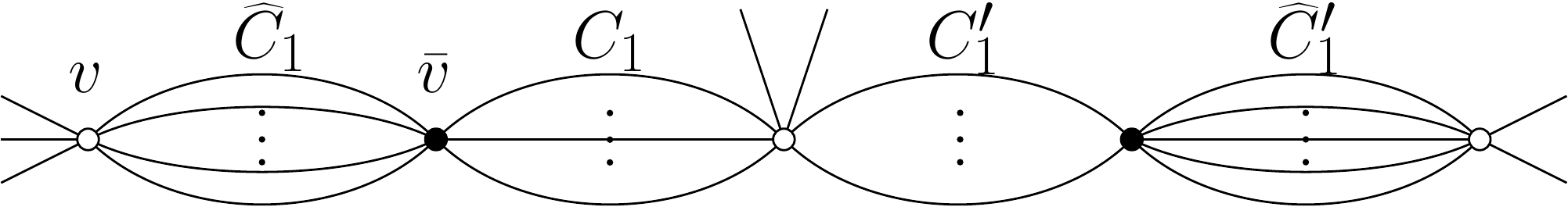} \end{array} \subset B\\
\to \qquad \begin{array}{c} \includegraphics[scale=.4]{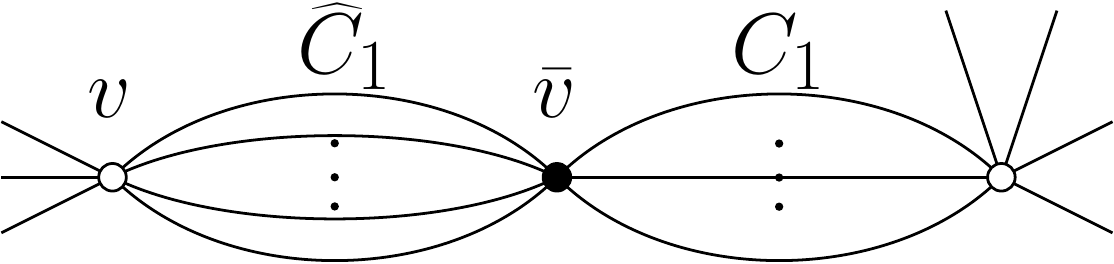} \end{array} \subset B^{(1)'} \qquad \to \qquad \begin{array}{c} \includegraphics[scale=.4]{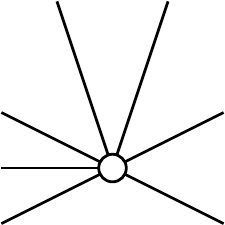} \end{array} \subset B^{(2)}
\end{multline}
Therefore $\mathcal{C}_B=\{C_1,C_1'\} \cup \mathcal{C}_{B^{(2)}}$.
\end{proof}

\paragraph*{Topology --} For a bubble $B$, the $C$-bubbles are the connected subgraphs formed by the edges with colors in $C$. A topological $C$-dipole is a subgraph made of two vertices connected by all the colors in $\widehat{C}$ such that the $C$-bubbles incident to the two vertices are distinct.

Topological $C$-dipoles can be created or removed without changing the topology \cite{DipoleMoves},
\begin{equation}
\begin{array}{c} \includegraphics[scale=.5]{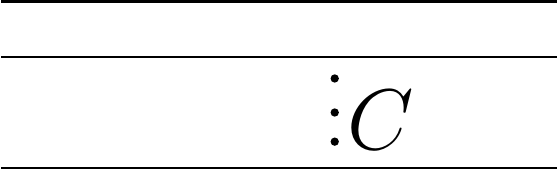} \end{array} \quad \leftrightarrow \quad \begin{array}{c} \includegraphics[scale=.5]{CDipoleInsertion.pdf} \end{array}
\end{equation}
As is clear from \eqref{qMove}, bidipole insertions are topological. Since the 2-vertex bubble is a $d$-ball, we conclude that GM bubbles are $d$-balls.

\subsection{GM bubbles as trees of quartic bubbles}

In this section we prove Theorem \ref{thm:QuarticTree}.

\paragraph*{The set $\mathbbm{G}_V$ --} Consider graphs made of quartic bubbles which are connected along some additional edges which we call {\bf dashed lines}. A vertex can be incident to one or no dashed line. We denote $\mathbbm{G}_V$ the set of connected graphs made of quartic bubbles and dashed lines, with $V$ vertices having no incident dashed lines, and such that all dashed lines are edge-cuts (meaning that cutting any dashed line disconnects the graph).

\begin{proposition} \label{thm:G_V}
The following statements are equivalent.
\begin{itemize}
\item[a)] $G\in \mathbbm{G}_V$
\item[b)] For all dashed lines in $G$, there exist $G_1 \in \mathbbm{G}_{V_1}$, $G_2\in\mathbbm{G}_{V_2}$ with $V_1+V_2-2=V$ such that
\begin{equation}
G = \begin{array}{c} \includegraphics[scale=.4]{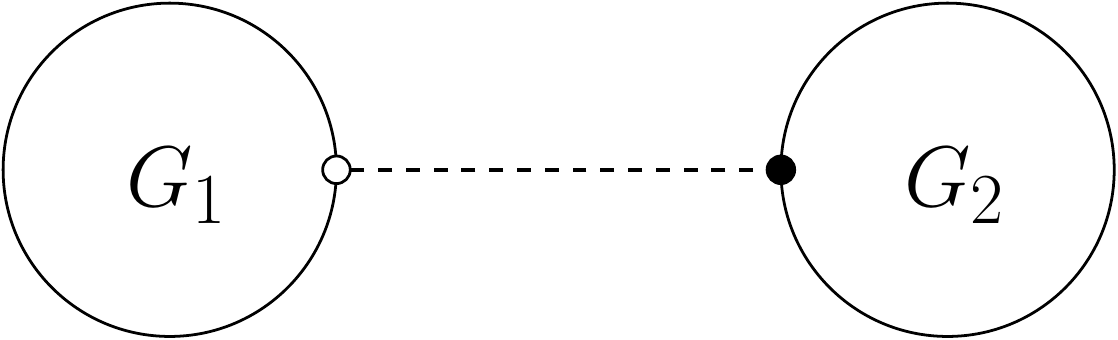} \end{array}
\end{equation}
\item[c)] There exists $G'\in\mathbbm{G}_{V-2}$ and a color set $C$ such that
\begin{equation} \label{BubbleInduction}
G = \begin{array}{c} \includegraphics[scale=.4]{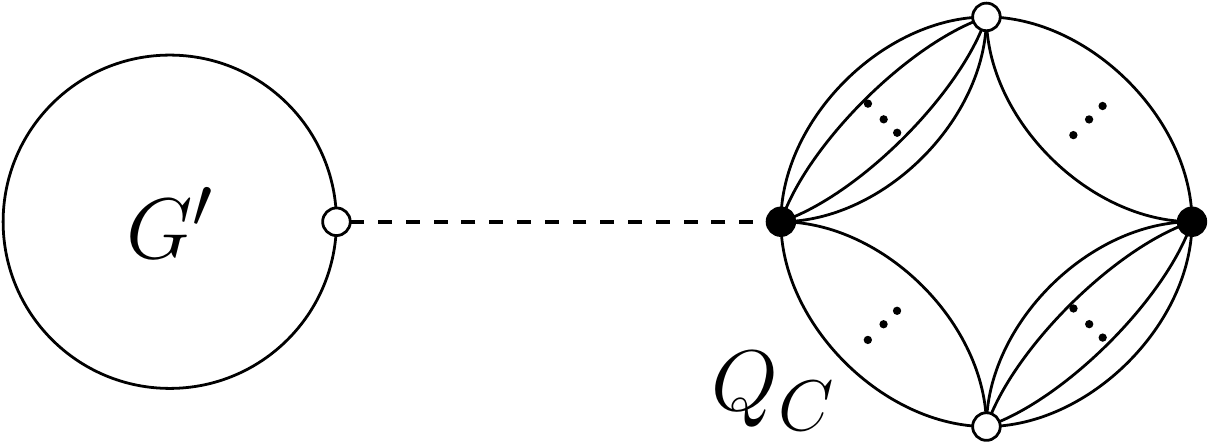} \end{array}
\end{equation}
\end{itemize}
\end{proposition}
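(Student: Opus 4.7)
The plan is to prove the cyclic implications a) $\Rightarrow$ b) $\Rightarrow$ c) $\Rightarrow$ a). The key structural observation is that, in any $G\in\mathbbm{G}_V$, the ``super-graph'' obtained by contracting each quartic bubble to a super-vertex and regarding the dashed lines as super-edges is a \emph{tree}: it is connected because $G$ is, and it has no super-cycle because every dashed line is required to be an edge-cut (an edge belonging to a super-cycle cannot be an edge-cut, since the two sides remain joined through the rest of the cycle).

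For a) $\Rightarrow$ b), let $\ell$ be any dashed line of $G$. By definition $\ell$ is an edge-cut, so removing it yields two connected pieces $G_1, G_2$, each still built from quartic bubbles and dashed lines. The two endpoints of $\ell$ lose their unique incident dashed line and thus become free in their respective components, while every other vertex keeps its status. Writing $V_i$ for the number of free vertices of $G_i$, we therefore get $V_1 + V_2 = V + 2$, i.e.\ $V_1 + V_2 - 2 = V$. Every remaining dashed line is still an edge-cut in the corresponding $G_i$ (an edge-cut of $G$ stays an edge-cut in any subgraph containing both of its shores), so $G_i \in \mathbbm{G}_{V_i}$.

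For b) $\Rightarrow$ c), if $G$ has no dashed line it is a single quartic bubble $Q_C$ ($V=4$) and the statement is the base case. Otherwise the super-graph, being a tree with at least two super-vertices, admits a leaf: a quartic bubble $Q_C$ incident to exactly one dashed line $\ell$. Applying b) to $\ell$ yields $G = G_1 \cup_\ell G_2$ with $G_1 = Q_C$ (whose four vertices are all free once $\ell$ is cut), so $V_1 = 4$ and $V_2 = V - 2$. Taking $G' = G_2 \in \mathbbm{G}_{V-2}$ gives exactly the form \eqref{BubbleInduction}.

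For c) $\Rightarrow$ a), one checks directly that the construction preserves all the required properties. The result is connected, made only of quartic bubbles and dashed lines; the selected free vertex of $G'$ loses its free status while three vertices of $Q_C$ become free, so the free vertex count is $(V-2)-1+3 = V$; the new dashed line is obviously an edge-cut; and each old dashed line of $G'$ remains an edge-cut in $G$ because the attached $Q_C$ sits entirely on one side of it. The only subtle step is the leaf argument in b) $\Rightarrow$ c), where one must rely on the edge-cut condition to rule out any super-cycle and hence identify a quartic bubble with a single incident dashed line; the rest of the proof is straightforward bookkeeping.
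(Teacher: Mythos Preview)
Your proof is correct and follows precisely the approach the paper sketches: contract each quartic bubble to a super-vertex, view dashed lines as super-edges, observe that connectedness plus the edge-cut condition forces this super-graph to be a tree, and then read off the three equivalences as standard tree facts (every edge separates, every tree with at least two vertices has a leaf, attaching a leaf preserves tree-ness). The paper's own proof is the single sentence ``This works exactly like for trees, by mapping quartic bubbles to vertices of degree at most four and dashed lines to edges''; you have simply unpacked that sentence in full detail.
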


\begin{proof}
This works exactly like for trees, by mapping quartic bubbles to vertices of degree at most four and dashed lines to edges.
\end{proof}

It is easily found that $G\in\mathbbm{G}_V$ has $V/2-1$ quartic bubbles and $V/2-2$ dashed lines.

\paragraph*{Dashed line contraction --} The contraction of a dashed line removes the dashed line and the two vertices it is incident to, and reconnects the edges of color $\{1, \dotsc, d\}$ respecting their colors,
\begin{equation}
\begin{array}{c} \includegraphics[scale=.4]{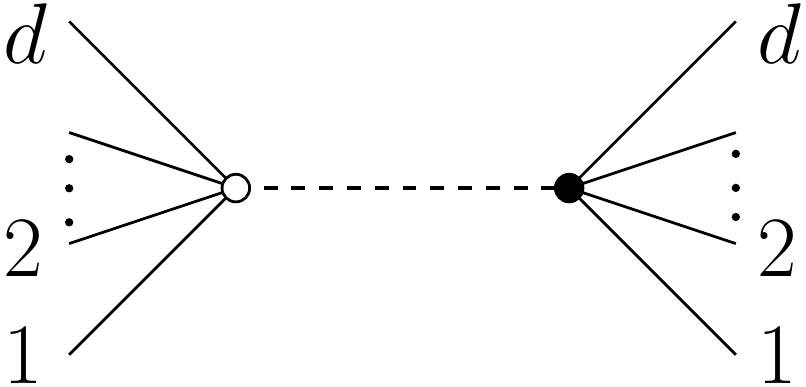} \end{array} \quad \to \quad \begin{array}{c} \includegraphics[scale=.4]{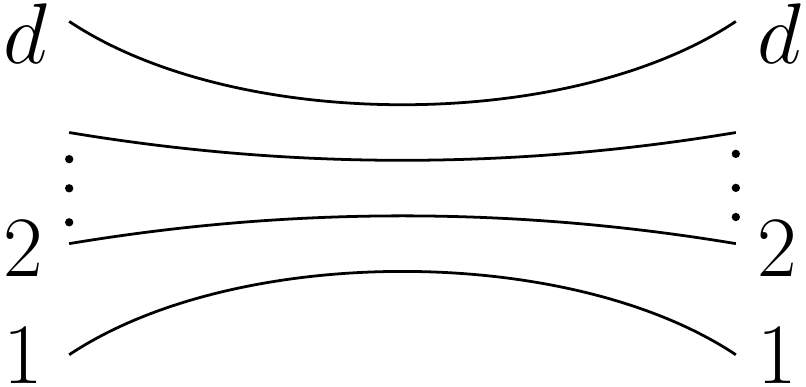} \end{array}
\end{equation}
The {\bf boundary operator} $\partial$ is a defined as the contraction of all dashed lines (it is independent of the order of the contractions). Note that it is defined not only on $\mathbbm{G}_V$ but for any collection of bubbles connected by dashed lines.

\begin{proposition} \label{thm:BoundaryBubbles}
For all $G\in \mathbbm{G}_V$, the boundary bubble $\partial G$ is a GM bubble with $V$ vertices.
\end{proposition}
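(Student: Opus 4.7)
The plan is to proceed by induction on the number of vertices $V$, using the equivalent characterization (c) of Proposition~\ref{thm:G_V}.

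For the base case $V=4$, a graph in $\mathbbm{G}_4$ consists of a single quartic bubble $Q_C$ with no dashed lines, so $\partial G = Q_C$, which is a GM bubble with four vertices (obtained by a single $C$-bidipole insertion from the two-vertex bubble).

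For the inductive step, assume the statement holds for $\mathbbm{G}_{V-2}$ and let $G\in\mathbbm{G}_V$. By part (c) of Proposition~\ref{thm:G_V}, $G$ is the union of some $G'\in\mathbbm{G}_{V-2}$ and a quartic bubble $Q_C$ joined by a single dashed line. Since the boundary operator is independent of the order in which dashed lines are contracted, I first contract all dashed lines internal to $G'$ to obtain $\partial G'$, which by induction is a GM bubble with $V-2$ vertices, still attached to $Q_C$ by the one remaining dashed line. The remaining task is to show that contracting this last dashed line amounts to a $C$-bidipole insertion at the corresponding vertex of $\partial G'$, which will immediately give that $\partial G$ is GM with $V$ vertices.

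The key computation is thus to identify the effect of the last contraction. Say the dashed line joins a black vertex $b$ of $Q_C$ to a white vertex $w^*$ of $\partial G'$, and label the remaining vertices of $Q_C$ as $w_1,w_2$ (white) and $b_2$ (black), so that in $Q_C$ the vertex $b$ is connected to $w_1$ by the colors in $C$ and to $w_2$ by the colors in $\widehat{C}$, while $b_2$ is connected to $w_2$ by $C$ and to $w_1$ by $\widehat{C}$. After removing $b$ and $w^*$ and reconnecting colored edges, the external edges of $w^*$ of colors $c\in C$ emanate from $w_1$, and those of colors $c\in\widehat{C}$ emanate from $w_2$. The three surviving vertices $\{w_1,b_2,w_2\}$ of $Q_C$ satisfy precisely the definition of a $C$-bidipole, with $b_2$ playing the role of $\bar v$, $w_2$ the role of $w$, and $w_1$ the role of $v$. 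Hence the contraction performs exactly a $C$-bidipole insertion at $w^*\in\partial G'$, proving that $\partial G$ is GM.

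The main obstacle is the bookkeeping in this last step: one must carefully verify that the colored edges of $w^*$ are redistributed to the two white vertices $w_1,w_2$ in the way prescribed by \eqref{qMove}, and that the leftover triple $\{w_1,b_2,w_2\}$ matches the definition of a $C$-bidipole (rather than a $\widehat C$-bidipole attached differently). The case where the dashed line lands on a white vertex of $Q_C$ and a black vertex of $\partial G'$ is handled symmetrically, using $Q_C=Q_{\widehat C}$ and the fact that bidipole insertions exist with white and black vertices exchanged.
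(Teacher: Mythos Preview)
Your proof is correct and follows essentially the same approach as the paper: induction on $V$ using characterization (c) of Proposition~\ref{thm:G_V}, with the inductive step reducing to the observation that contracting the dashed line joining $Q_C$ to $\partial G'$ is exactly a $C$-bidipole insertion. The paper states this last fact via the pictorial identity \eqref{BoundaryBidipole}, while you spell out the vertex-by-vertex bookkeeping; the content is the same.
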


\begin{proof}
One can use the characterization of $G\in \mathbbm{G}_V$ in \eqref{BubbleInduction} to perform an induction. Graphs in $\mathbbm{G}_4$ have no dashed lines so the action of $\partial$ is trivial and they are directly GM bubbles. Now we use \eqref{BubbleInduction} and from the induction hypothesis we get that $\partial G'$ is a GM bubble on $V-2$ vertices.

Then it is enough to check that the contraction of the dashed line in \eqref{BubbleInduction} is equivalent to the insertion of a $C$-bidipole,
\begin{equation} \label{BoundaryBidipole}
\partial \begin{array}{c} \includegraphics[scale=.4]{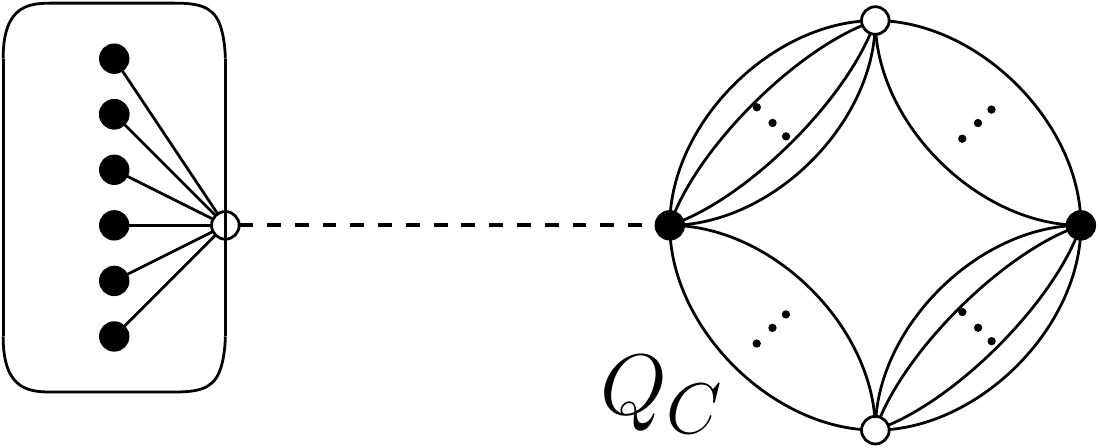} \end{array} \quad = \quad \begin{array}{c} \includegraphics[scale=.4]{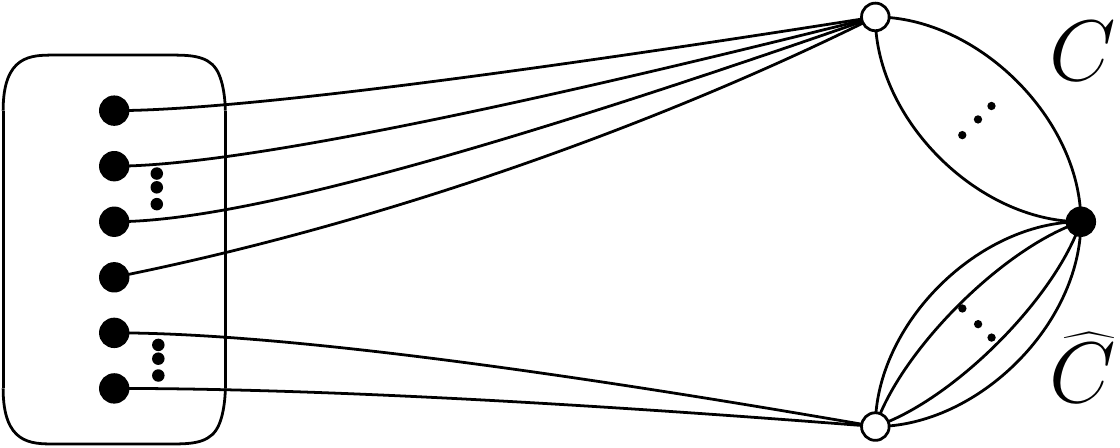} \end{array}
\end{equation}
\end{proof}

Theorem \ref{thm:QuarticTree} follows as a corollary of Propositions \ref{thm:G_V} and \ref{thm:BoundaryBubbles}.

\paragraph*{Construction --} Any sequence \eqref{BubbleSequence} gives rise to a graph $G\in\mathbbm{G}_V$ such that $\partial G=B$ which is built as follows. Consider $v_1$ in $B^{(1)}$ and connect it with a dashed line to the quartic bubble $Q_{C_1}$. We are assured by \eqref{BoundaryBidipole} that the contraction of that dashed line produces $B$. Then proceed iteratively with $B^{(i-1)}$ until the dashed line which is attached to $v_{V/2-2}$ on the quartic bubble $Q_{C_{V/2-2}}$. Due to Proposition \ref{thm:SetsUniqueness}, the list of quartic bubbles required in $G$ is unique, although $G$ itself is not. As an exercize, it can be checked that
\begin{equation}
\partial \left(\begin{array}{c} \includegraphics[scale=.3]{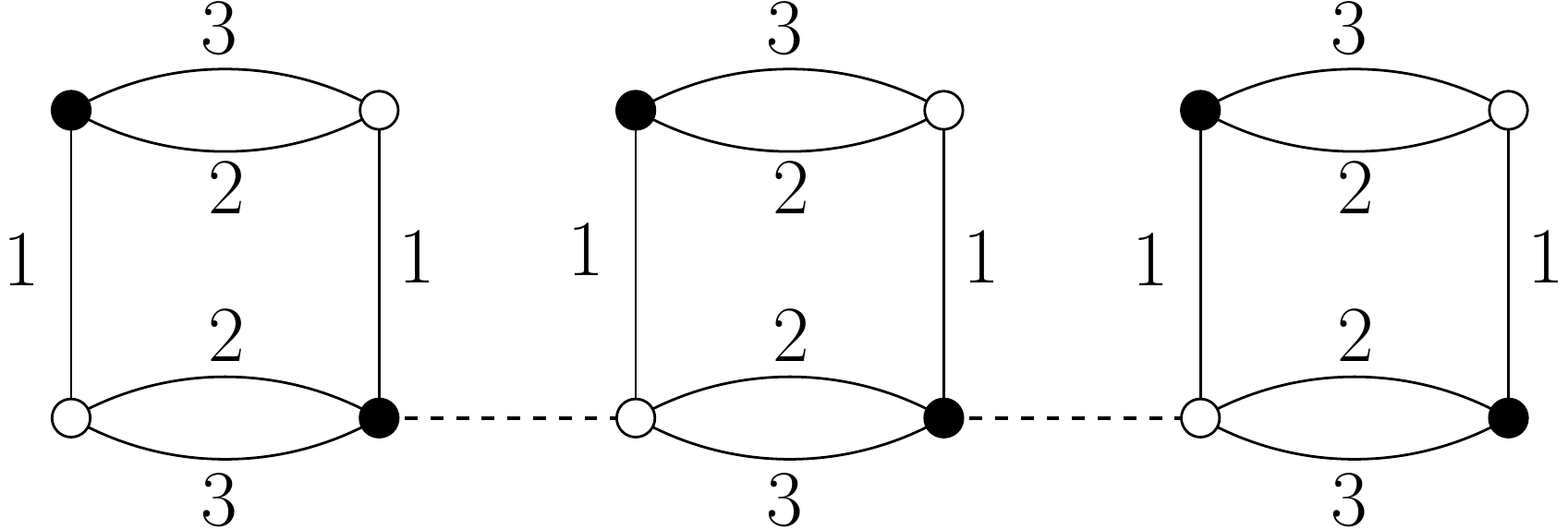} \end{array}\right) = \partial \left(\begin{array}{c} \includegraphics[scale=.3]{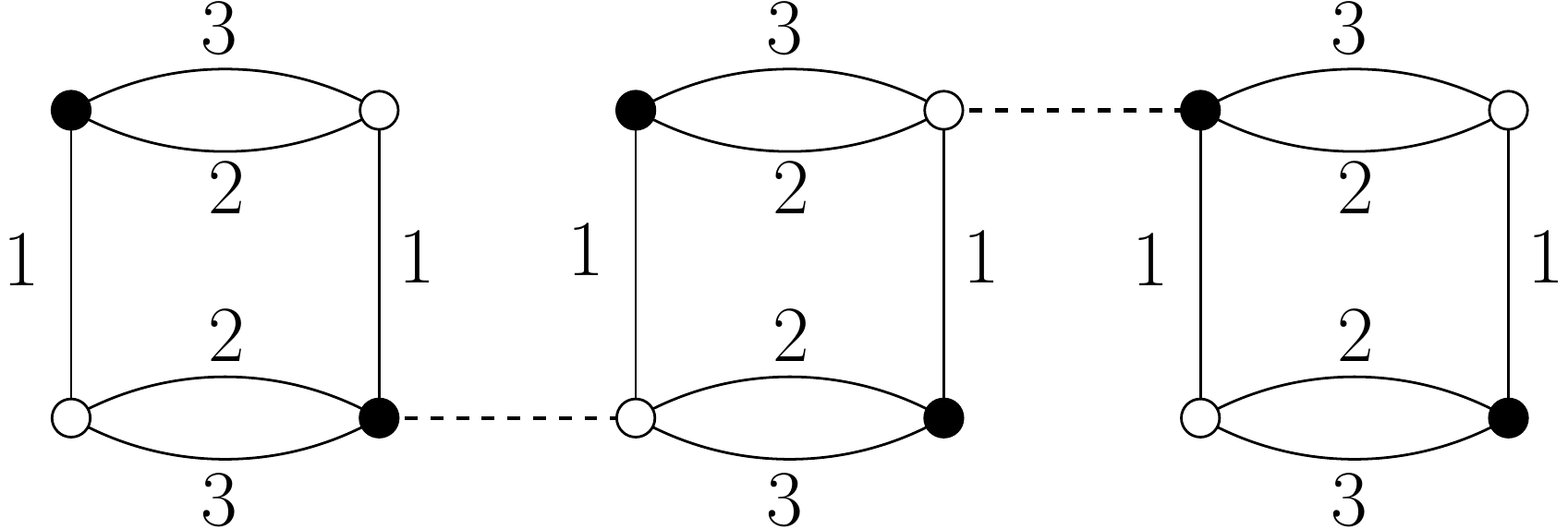} \end{array}\right) =  \begin{array}{c} \includegraphics[scale=.3]{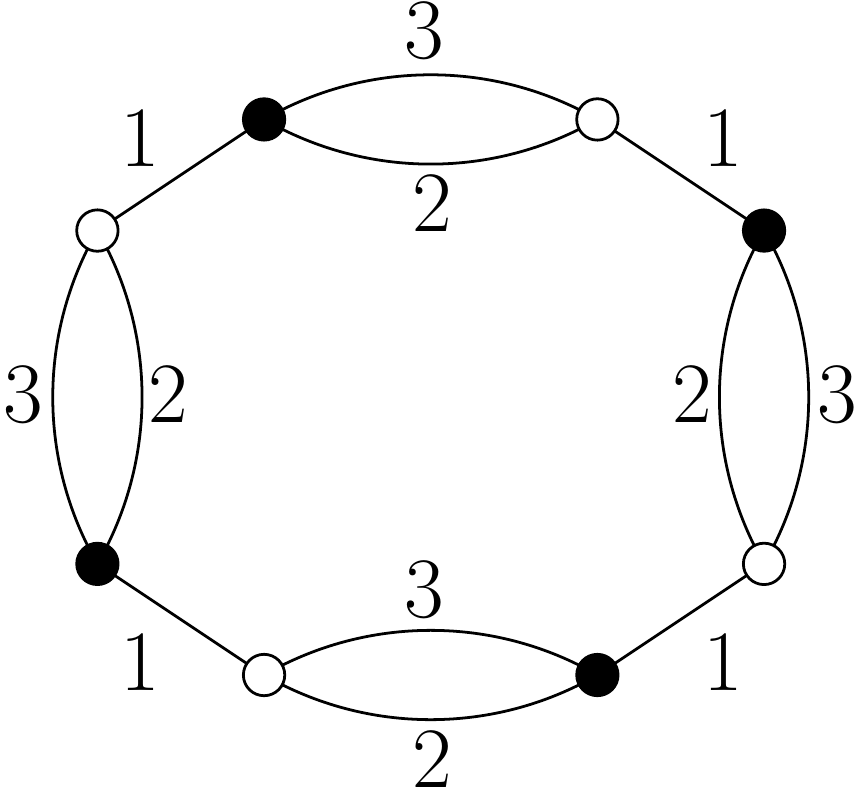} \end{array}
\end{equation}

\paragraph*{Example --} Consider the sequence of GM bubbles in \eqref{ExampleSPBubble}. We can read $B$ and $B^{(1)}$ from the last step,
\begin{equation}
B = \begin{array}{c} \includegraphics[scale=.4]{SPGraph.pdf} \end{array} = \partial \left(\begin{array}{c} \includegraphics[scale=.4]{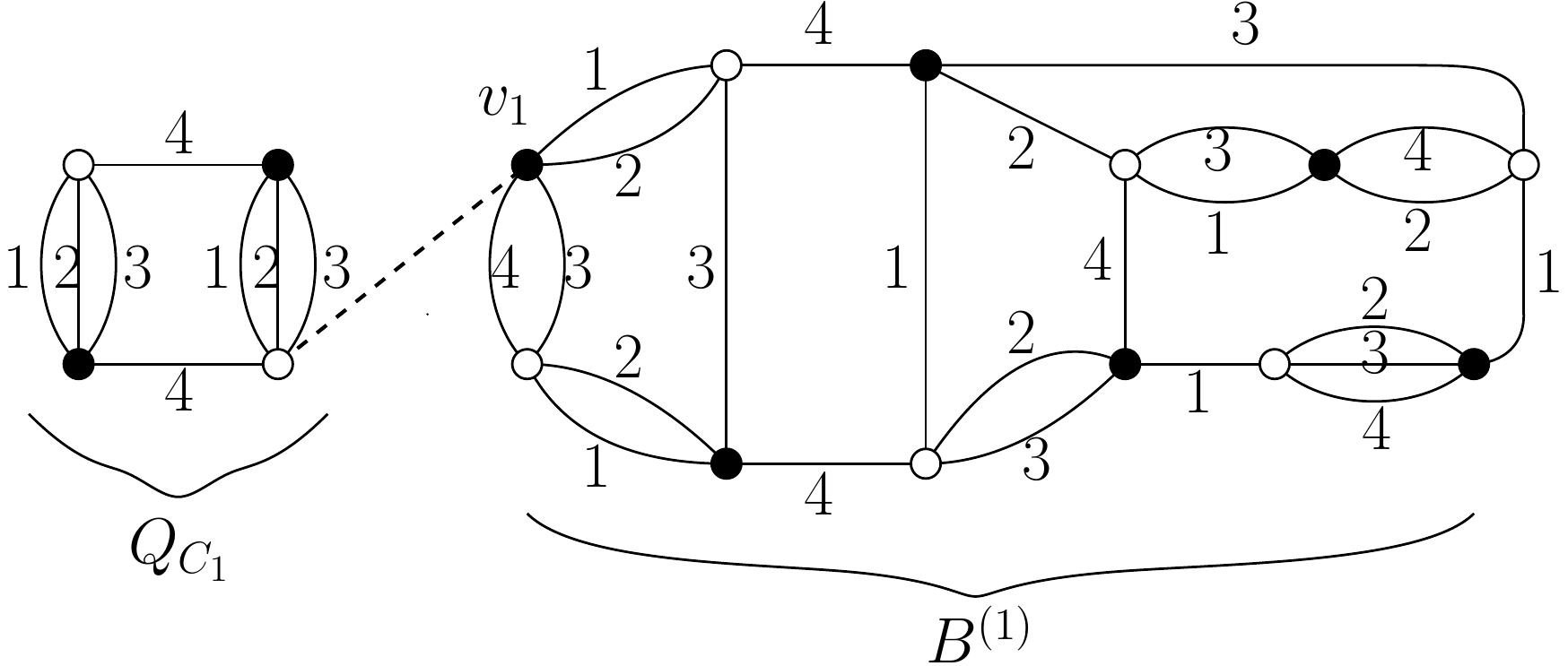} \end{array}\right)
\end{equation}
Continuing the process following the sequence of \eqref{ExampleSPBubble}, we get
\begin{multline}
B = \partial \begin{array}{c} \includegraphics[scale=.38]{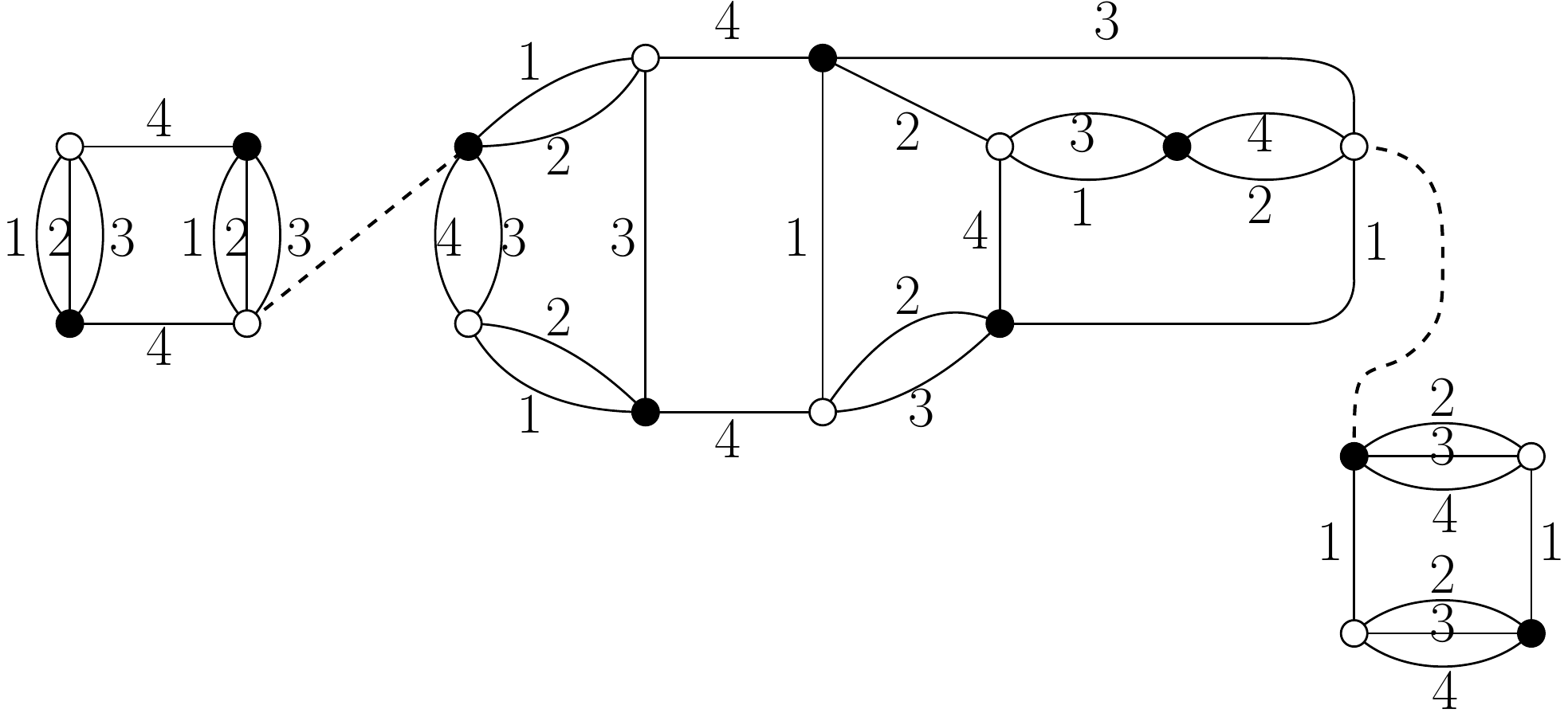} \end{array} 
= \partial \begin{array}{c} \includegraphics[scale=.38]{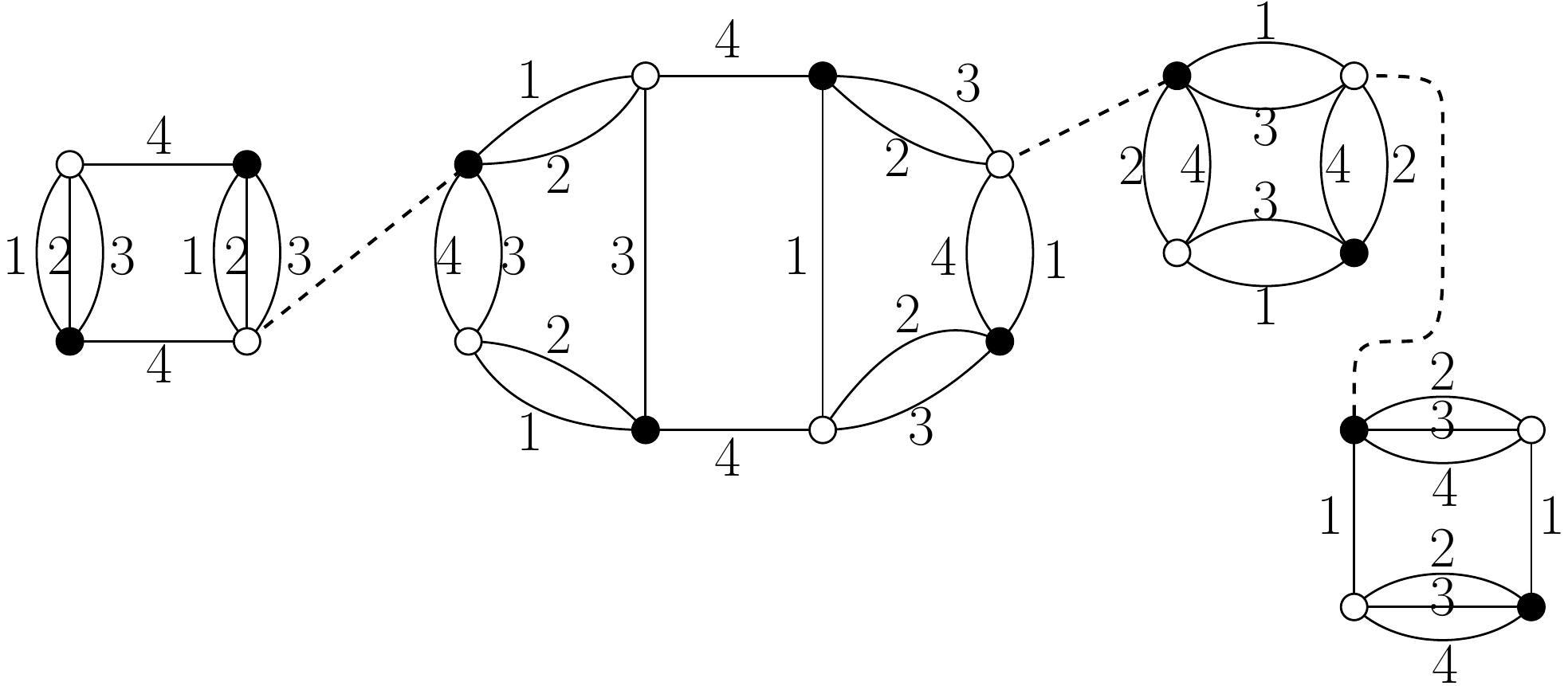} \end{array} \\
= \partial \begin{array}{c} \includegraphics[scale=.38]{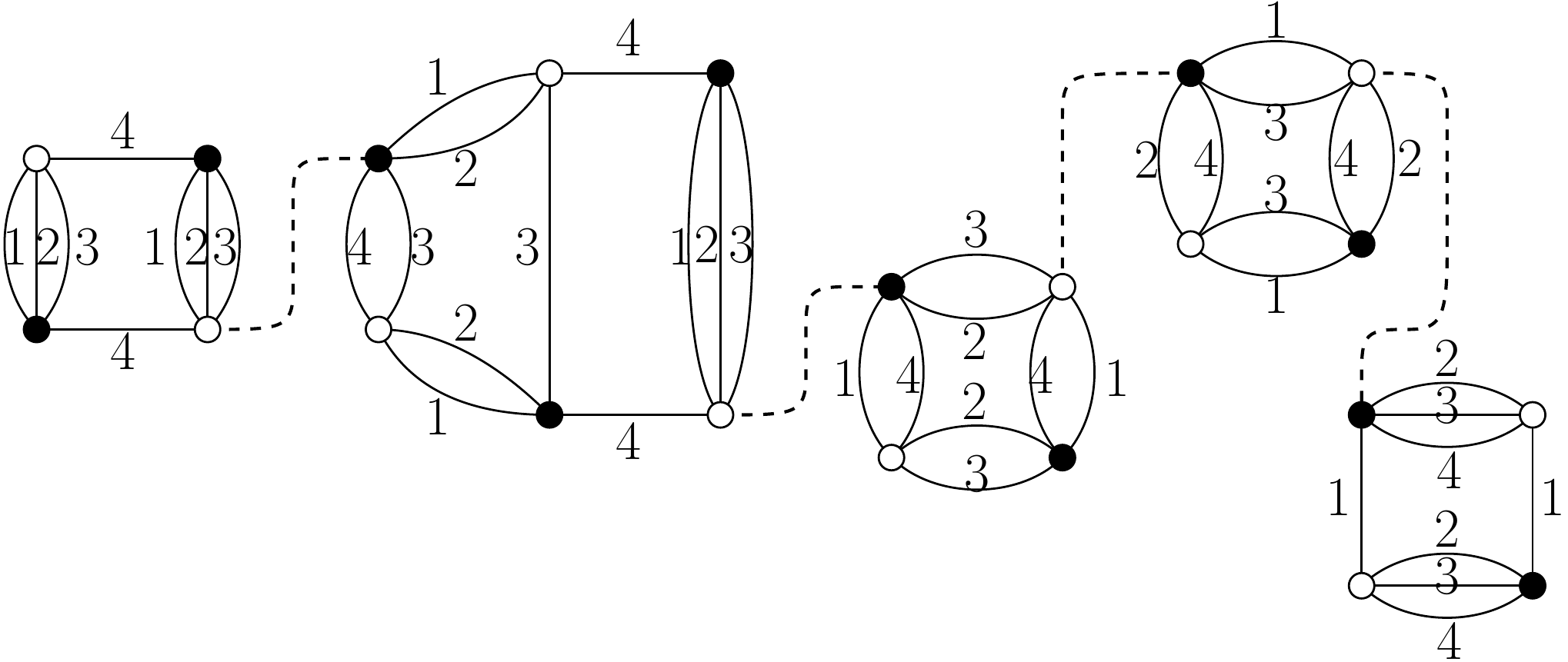} \end{array} 
= \partial \begin{array}{c} \includegraphics[scale=.38]{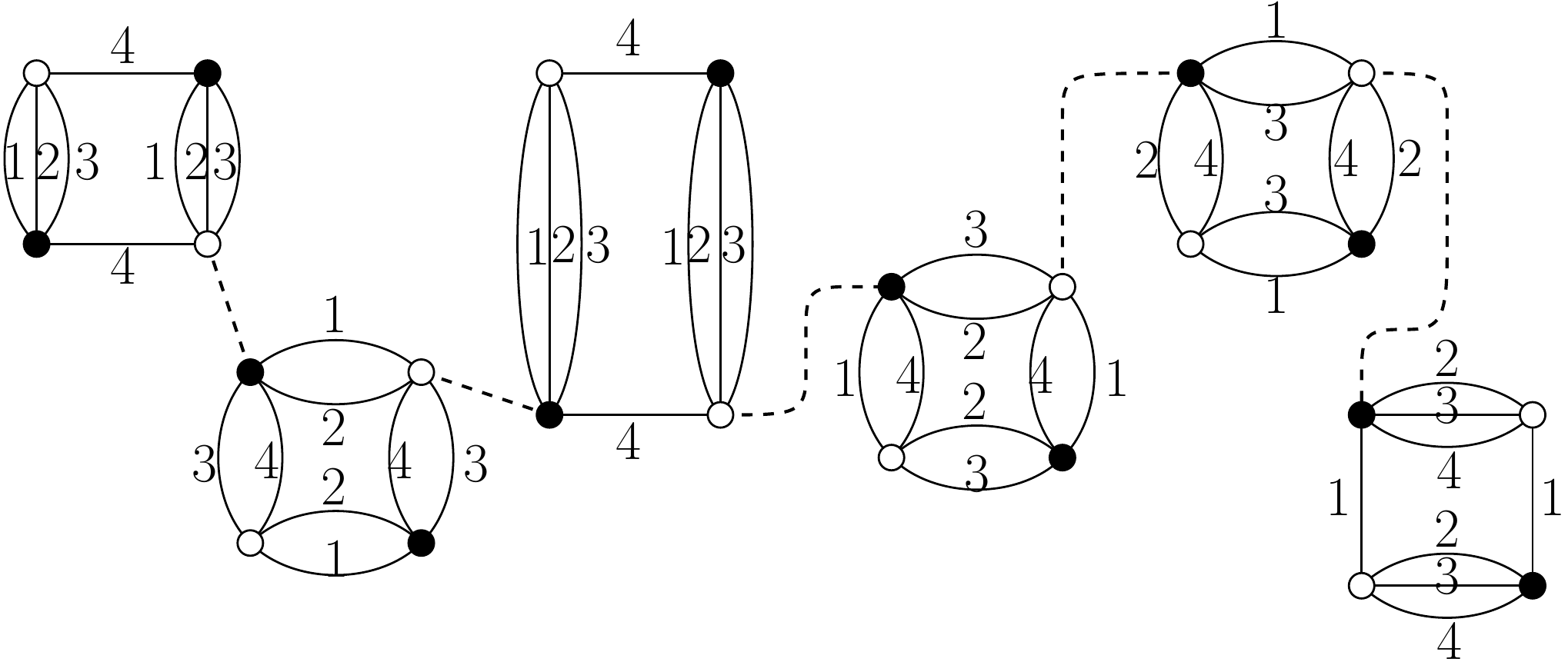} \end{array}
\end{multline}
where we can identify $\mathcal{T}$ such that $\partial \mathcal{T}=B$ in the last step. In particular, we can read the list of color sets $\mathcal{C}_B = \{\{4\}, \{1, 2\}, \{4\}, \{1, 4\}, \{1, 3\}, \{1\}\}$.

\paragraph*{The pairing $\pi_B$ --} Each bidipole move $B^{(i+1)} \to B^{(i)}$ introduces two new vertices and there is an almost canonical way to use this to define a unique pairing $\pi_B$. We recall that a pairing is a partition of the vertices into pairs of black and white vertices. Looking at \eqref{qMove}, there seems to be an ambiguity to decide which one is the new white vertex. Since we have defined $|C_i|\leq d/2$ however, there is a canonical difference between $C_i$ and $\widehat{C}_i$ when $|C_i|<d/2$. Therefore, we say that $\{v, \bar{v}\}$ forms a canonical pair if they are connected by more than $d/2$ edges, i.e. $v$ and $\bar{v}$ are connected by the colors of $\widehat{C}_i$.

In the case $|C_i|=|\widehat{C}_i|=d/2$ however, there really is an intrinsic ambiguity. We can nevertheless use a convention: a pair $\{v, \bar{v}\}$ is formed if $v$ and $\bar{v}$ are not connected by the color 1. In all cases we have therefore $v$ and $\bar{v}$ connected by the colors of $\widehat{C}_i$.

Then we set $\pi_B(v) = \bar{v}$ and $\pi_B(\bar{v}) = v$. We can proceed this way at each step $B^{(i+1)} \to B^{(i)}$ and this way find a natural pairing $\pi_B$ of the vertices $B$. It can also be seen as an involution without fixed points between black and white vertices.

\begin{proposition}
$\pi_B$ is unique.
\end{proposition}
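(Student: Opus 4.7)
The plan is to induct on the number of vertices $V$ of $B$, re-using the case analysis of Proposition~\ref{thm:SetsUniqueness} and checking that the pair-assignment rule commutes with the removal orders considered there.

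For the base cases $V=2$ and $V=4$ the claim is immediate: the $2$-vertex bubble carries its unique pair trivially, and for a quartic $B=Q_C$ the only possible bidipole removal produces the $\widehat{C}$-pair canonically (with the ``color $1$'' tie-breaker fixing things when $|C|=d/2$), while the remaining pair is forced by inheritance from the $2$-vertex bubble and automatically coincides with the other $\widehat{C}$-pair of $Q_C$.

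For the inductive step, I would fix two sequences \eqref{BubbleSequence} realizing $B$ with first removed bidipoles $\mathcal{V}_1$ and $\mathcal{V}_1'$, and split on $|\mathcal{V}_1\cap\mathcal{V}_1'|$ exactly as in the proof of Proposition~\ref{thm:SetsUniqueness}. When $\mathcal{V}_1=\mathcal{V}_1'$ the initial pair is identical and the induction hypothesis applied to the common $B^{(1)}$ concludes. When $|\mathcal{V}_1\cap\mathcal{V}_1'|=0$ the two bidipoles are vertex-disjoint, their $\widehat{C}$-pairs are disjoint pairs of $B$, and the two pairs combine with the unique $\pi_{B^{(2)}}$ (from induction on the common $B^{(2)}$ obtained by removing both in either order) to give the same $\pi_B$. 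When $|\mathcal{V}_1\cap\mathcal{V}_1'|=2$, Proposition~\ref{thm:SetsUniqueness} forces $C_1'=C_1$, and a direct local inspection of the ``double bidipole'' configuration shows that the $\widehat{C}$-rule selects the same pair regardless of which of the two bidipoles one removes first, whereupon induction on the common $B^{(1)}$ closes this case.

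The main obstacle is the case $|\mathcal{V}_1\cap\mathcal{V}_1'|=1$, where the two first-removals a priori create distinct $\widehat{C}$-pairs in $B$ and place the ``inherited'' third vertex in different positions. Here I would exploit the overlapping-bidipoles picture from Proposition~\ref{thm:SetsUniqueness}, in which the shared vertex of $\mathcal{V}_1$ and $\mathcal{V}_1'$ is precisely the inheritor third vertex from each bidipole's perspective; then removing $\mathcal{V}_1$ first leaves in $B^{(1)}$ a modified version of $\mathcal{V}_1'$ whose subsequent removal contributes a pair which, under the natural identification of $B^{(1)}$'s vertices with a subset of $B$'s, coincides with the pair that $\mathcal{V}_1'$ would have contributed if removed first. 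Combining with the unique $\pi_{B^{(2)}}$ supplied by induction on the common reduced bubble then yields the same matching on $B$ under both orderings, closing the induction.
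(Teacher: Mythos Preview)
Your inductive strategy is sound and parallels the paper's own ``simpler proof by induction,'' which is stated very tersely: the paper merely says that the newly created pair $\{v,\bar v\}$ is determined by the $\widehat{C}$-rule and that the rest of $\pi_B$ is inherited from the unique $\pi_{B^{(1)}}$, without explicitly running through the case analysis of Proposition~\ref{thm:SetsUniqueness} as you do. In that sense your write-up is a more detailed version of the same idea.

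Two remarks. First, the paper also offers a cleaner constructive alternative that avoids the case split entirely: one observes that any two pairs of vertices joined by more than $d/2$ colors (or exactly $d/2$ colors not including $1$) are automatically vertex-disjoint, so all such pairs can be read off directly from $B$ and removed simultaneously; iterating on the resulting smaller bubble gives $\pi_B$ without ever choosing a sequence. This bypasses the bookkeeping of your overlap cases.

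Second, your handling of $|\mathcal{V}_1\cap\mathcal{V}_1'|=1$ asserts that the shared vertex is the ``inheritor third vertex'' from \emph{each} bidipole's perspective. That is not automatic: a priori the shared vertex could be a pair-vertex of one bidipole and the third vertex of the other (e.g.\ when $C_1'\subset C_1$). What does hold, by the pigeonhole/color-$1$ argument, is that the shared vertex cannot lie in \emph{both} $\widehat{C}$-pairs simultaneously; this is exactly the ``no overlap'' observation from the paper's first sketch, and it is what you actually need to conclude that the two first-removed pairs are disjoint and hence compatible with the common $\pi_{B^{(2)}}$. Tightening this point would make your argument complete.
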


\begin{proof}
We give two sketches of proof. A constructive one relies on the fact that there is no overlap between any pairs of vertices connected by more than $d/2$ colors or exactly $d/2$ colors not including 1. No overlap means that the edges incident to $v$ and $\bar{v}$ do not connect the vertices $v'$ and $\bar{v}'$ of another pair. Therefore, it is well-defined to consider all pairs $\{v_i, \bar{v}_i\}$ in $B$, set $\pi_B(v_i) =\bar{v}_i$ and then remove the $C_i$-bidipoles. Because there is no overlap, the order of the removals does not matter and we get a bubble $B'$, to which the same reasoning can be applied.

There is a simpler proof by induction on the number of vertices of $B$. The pairing is obviously unique on a quartic bubble. From $B^{(1)}$ to $B$, the pair $\{v, \bar{v}\}$ is defined as explained above, thus $\pi_B(v)=\bar{v}$. Then we define the restriction of $\pi_B$ to the other vertices by $\pi_{B^{(1)}}$ which is unique.
\end{proof}

\section{Tensor models with GM interactions} \label{sec:SPTensors}

\paragraph*{The partition function --} The tensor models under consideration are those which use GM bubbles as interactions. Let $R$ be a finite set and $\{B_r\}_{r\in R}$ a finite set of GM bubbles. The partition function is
\begin{equation} \label{PartitionFunction1}
Z_N(\{t_r\}) = \int dT d\bar{T}\ \exp -\Bigl((T|T) + \sum_{r\in R} N^{s_r}\, t_r\, B_r(T, \bar{T}) \Bigl)
\end{equation}
The quantities $t_r$ are known as coupling constants (or counting parameters in combinatorics) and $s_r$ are the scaling coefficients. They are {\it a priori} unknown at this stage and will be determined later in Theorem \ref{thm:Scaling}.

We have chosen the normalization so that the bare propagator scales like $N^0$, in contrast with another, standard choice in the literature where it scales with $N^{-(d-1)}$. The relation between both choices is a rescaling of $T$ and $\bar{T}$ by $N^{(d-1)/2}$.

In matrix models the scaling coefficients are $s_V = 1-V/2$ for the interactions $\tr (MM^\dagger)^{V/2}$. {\it A priori}, $s_r$ may depend on the whole set of bubbles $\{B_r\}_{r\in R}$, and should be denoted $s_r(R)$. However, as we will see in Theorem \ref{thm:Scaling}, there is a unique choice which leads to a non-trivial large $N$ limit of the model and it is such that each coefficient $s_r$ only depends on the bubble $B_r$, as in matrix models.

\paragraph*{Feynman graphs of tensor models: the set $\mathbbm{G}(\{b_r,B_r\})$ --} We focus on the Feynman graphs which contribute to the free energy $F_N(\{t_r\}) = \ln Z_N(\{t_r\})$. Such a Feynman graph is obtained by taking a collection of $b_1$ copies of $B_1$, $b_2$ copies of $B_2$ and so on ($b_r$ copies of $B_r$) and performing Wick contractions so that the graph is connected. A Wick contraction must connect a black vertex of a bubble to a white vertex of a bubble. To distinguish the edges of the Wick contractions from the edges of the bubbles, we assign the former the {\bf fictitious color 0} (while the edges of the bubbles have the colors $\{1, \dotsc, d\}$) which will be drawn as dashed lines.

We denote $\mathbbm{G}(\{b_r,B_r\})$ the connected graphs with $b_r$ bubbles $B_r$, obtained this way. They are regular, bipartite graphs of degree $d+1$ such that each vertex is incident to each color from $\{0, 1, \dotsc, d\}$ exactly once, and such that the connected subgraphs obtained by removing the edges of color 0 are $b_r$ copies of $B_r$ for all $r\in R$.

\paragraph*{Feynman amplitudes --}
The scaling with $N$ of a Feynman graph $G\in \mathbbm{G}(\{b_r,B_r\})$ goes as follows. 
\begin{itemize}
\item It receives a factor $N^{s_r}$ for each copy of $B_r\subset G$. 
\item As is standard in tensor models, each {\bf bicolored cycle} with colors $\{0, c\}$, for all $c\in \{1, \dotsc, d\}$, contributes to one power of $N$.
\end{itemize}
The amplitude of $G$ thus goes like $N^{\delta(G)}$ with
\begin{equation} \label{delta}
\delta(G) = \sum_{c=1}^d L_{0c}(G) + \sum_{r\in R} s_r b_r
\end{equation}
where $L_{0c}(G)$ the number of bicolored cycles with colors $\{0,c\}$. For given numbers of bubbles $\{b_r\}$ there is a maximal value of the number of bicolored cycles
\begin{equation}
L_{\max}(\{b_r, B_r\}) = \max_{G\in\mathbbm{G}(\{b_r, B_r\})} \sum_{c=1}^d L_{0c}(G) \quad \text{and} \quad \delta_{\max}(\{b_r, B_r\}) = \max_{G\in\mathbbm{G}(\{b_r, B_r\})} \delta(G)
\end{equation}
obviously satisfying $\delta_{\max}(\{b_r\}) = L_{\max}(\{b_r, B_r\}) + \sum_{r\in R} s_r b_r$
and further denote the set of graphs which maximize the number of bicolored cycles at fixed numbers of bubbles
\begin{equation}
\mathbbm{G}_{\max}(\{b_r, B_r\}) = \left\{ G\in \mathbbm{G}(\{b_r, B_r\}) \quad \text{s.t.}\quad \delta(G) = \delta_{\max}(\{b_r, B_r\})\right\}.
\end{equation}
We can use equivalently $\delta(G)$ or $\sum_{c=1}^d L_{0c}(G)$ since there is only a shift between them at fixed $\{b_r\}$.

\paragraph*{Large $N$ limit: existence --} We say that {\bf a large $N$ limit exists} if there exists a rescaling $q$ of the free energy $F_N(\{t_r\}) = \ln Z_N(\{t_r\})$ such that $F_N(\{t_r\})/N^q$ has a finite limit as $N\to\infty$. Since $F_N$ has an expansion onto $\mathbbm{G}(\{b_r, B_r\})$ where each graph is weighted by $N^{\delta(G)}$, the existence of a (perturbative) large $N$ limit is equivalent to the existence of a bound on $\delta(G)$ independently on the numbers of bubbles,
\begin{equation}
q = \delta_{\max}(\{B_r\}) = \max_{\{b_r\}} \delta_{\max}(\{b_r, B_r\}) <\infty
\end{equation}
Then the free energy has a $1/N$ expansion which starts like
\begin{equation}
F_N(\{t_r\}) = N^{\delta_{\max}(\{B_r\})} \Bigl(\sum_{\{b_r\}}\ \sum_{G\in\mathbbm{G}_{\max}(\{b_r, B_r\})} s(G) \prod_{r\in R} (-t_r)^{b_r} + \mathcal{O}(1/N)\Bigr),
\end{equation}
where $s(G)$ is a combinatorial factor coming from the symmetries of the bubbles $\{B_r\}$ and of $G$.

\paragraph*{Large $N$ limit: non-triviality --} Furthermore we say that the large $N$ limit is {\bf non-trivial} if there is an infinite number of graphs in the above sum. In fact, we want a stronger condition so that each $B_r$ for $r\in R$ contributes non-trivially: 
\begin{equation}
\forall\,r\in R, \qquad \bigcup_{\{b_{r'}\}} \mathbbm{G}_{\max}(\{b_{r'}, B_{r'}\}) \quad \text{contains graphs with $b_r \to\infty$}.
\end{equation}

\paragraph*{Large $N$ limit of tensor models with GM interactions --} Although it has not appeared yet in the literature, it is basically a matter of assembling pieces already present in the literature. It is obtained in two steps. 
\begin{itemize}
\item The first step is a now classical result in tensor models: Feynman graphs of a given model form a subset of those of the quartic model. It was proved in \cite{StuffedWalshMaps} using the intermediate field, but it was used prior in specific models \cite{DoubleScaling}. We explain how it works below in Section \ref{sec:QuarticToSP}.
\item The second step is to find the large $N$ limit of tensor models with quartic interactions. This is a slight generalization of \cite{MelonoPlanar}. It is done in Section \ref{sec:QuarticLargeN}.
\end{itemize}

\subsection{From quartic to arbitrary GM bubbles} \label{sec:QuarticToSP}

We start the analysis with a single type of GM bubble $B$, with $V$ vertices. Let $H\in \mathbbm{G}_V$ with $\partial H = B$, where we recall that the boundary operator $\partial$ performs the contractions of the all dashed lines in $H$. The vertices of $H$ with no incident dashed line are the vertices of $B$ (after the contractions of all dashed lines). So we can consider the set $\mathbbm{G}(b, H)$ of connected graphs 
\begin{itemize}
\item with $b$ copies of $H$
\item which are connected by additional dashed lines so that each vertex is incident to exactly one dashed line.
\end{itemize}
Therefore $G\in\mathbbm{G}(b, H)$ is made of quartic bubbles (those of all copies of $H$ in $G$), which are connected along dashed lines, such that each vertex is incident to all colors in $\{1, \dotsc, d\}$ plus a dashed line. Interpreting the latter as edges of color 0, we find that $\mathbbm{G}(b, H)$ is a subset of the graphs from the quartic model. Say $H$ has $b_C(H)$ bubbles of type $Q_C$, then 
\begin{equation}
\mathbbm{G}(b, B) \leftarrow \mathbbm{G}(b, H) \subset \mathbbm{G}(\{b_C(H)\, b, Q_C\}),
\end{equation}
where the left arrow corresponds to replacing $H$ with $B=\partial H$ in all graphs (see Proposition \ref{thm:Surjection} below).

In the more general case of a set of bubbles $\{B_r\}$, we introduce a graph $H_r\in \mathbbm{G}_{V_r}$ such that $\partial H_r = B_r$, made of $b_C^{(r)}$ copies of the quartic bubbles $Q_C$ for all $C$. We further define $\mathbbm{G}(\{b_r, H_r\})$ as the set of graphs obtained by connecting $b_r$ copies of $H_r$ along dashed lines as above. Then
\begin{equation} \label{Inclusion}
\mathbbm{G}(\{b_r, H_r\}) \subset \mathbbm{G}\Bigl(\Bigl\{ \sum_r b_C^{(r)} b_r, Q_C\Bigr\}\Bigr).
\end{equation}
Notice that although the choice of $H_r$ is not canonical, the numbers $\{b_C^{(r)}\}_{C\in\mathcal{C}}$ are uniquely determined by $B$, as shown in Proposition \ref{thm:SetsUniqueness}.

\begin{proposition} \label{thm:Surjection}
There is a surjection
\begin{equation}
S:\ \mathbbm{G}(\{b_r, H_r\})\ \to\ \mathbbm{G}(\{b_r, B_r\})
\end{equation}
which preserves the number of bicolored cycles with colors $\{0,c\}$ for all $c\in\{1, \dotsc, d\}$, where the dashed lines of each $H_r$ are interpreted as edges of color 0.
\end{proposition}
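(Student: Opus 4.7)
The plan is to define $S$ as the operation that contracts the internal dashed lines of each copy of $H_r$ in $G$. Given $G \in \mathbbm{G}(\{b_r, H_r\})$, its dashed lines come in two flavors: those internal to some copy of $H_r$ (present already in $H_r$), and those external (added to glue the copies together, incident to vertices that had no dashed line in $H_r$). I would set $S(G)$ to be the graph obtained by contracting all internal dashed lines while leaving the external ones intact. Since $\partial H_r = B_r$ by Proposition \ref{thm:BoundaryBubbles}, each copy of $H_r$ becomes a copy of $B_r$, and the external dashed lines — interpreted as edges of color $0$ — connect these $B_r$'s into a connected graph. Hence $S(G) \in \mathbbm{G}(\{b_r,B_r\})$.

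Next I would verify that $S$ preserves each $L_{0c}$ by showing that contracting a single dashed line preserves $L_{0c}$ for every $c\in\{1,\dots,d\}$. The point is that every vertex of $G$ is incident to exactly one edge of each color in $\{0,1,\dots,d\}$, so the bicolored subgraph with colors $\{0,c\}$ is a disjoint union of cycles. When a dashed line $e=\{u,v\}$ is contracted, the unique color-$c$ edges at $u$ and at $v$ are fused into a single color-$c$ edge joining the color-$c$ neighbors $u'$ of $u$ and $v'$ of $v$. Because $e$ is the only color-$0$ edge at both $u$ and $v$, any $\{0,c\}$-cycle through $u$ (or through $v$) must use $e$ together with both adjacent color-$c$ edges; hence those two color-$c$ edges lie on a common cycle, which simply shortens and remains a single cycle. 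All other $\{0,c\}$-cycles are untouched. Iterating over all internal dashed lines therefore preserves $L_{0c}$ exactly.

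For surjectivity, I would use the inverse inflation. Fix $G' \in \mathbbm{G}(\{b_r,B_r\})$. The vertices of each $B_r$-copy are in canonical bijection with the dashed-line-free vertices of the prescribed $H_r$, since $B_r = \partial H_r$ is obtained precisely by removing the dashed-line-bearing vertices from $H_r$. I would therefore replace each copy of $B_r$ in $G'$ by its prescribed $H_r$, and reattach the external color-$0$ edges of $G'$ to the corresponding dashed-line-free vertices of these $H_r$'s. This yields a graph $G \in \mathbbm{G}(\{b_r,H_r\})$, and by construction contracting the internal dashed lines of the inflated $H_r$'s returns $G'$, so $S(G)=G'$.

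The step I expect to be the most delicate is the cycle-preservation argument — specifically verifying that two distinct $\{0,c\}$-cycles cannot be merged when two color-$c$ edges are fused, which is exactly what the uniqueness of the color-$0$ edge at each endpoint of the contracted dashed line rules out. The remainder is essentially bookkeeping once Proposition \ref{thm:BoundaryBubbles} is in place.
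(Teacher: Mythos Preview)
Your proposal is correct and follows essentially the same approach as the paper: define $S$ by applying the boundary operator $\partial$ to each copy of $H_r$, argue that bicolored $\{0,c\}$-cycles are preserved because the contracted dashed line sits between the two colour-$c$ edges on the same cycle, and obtain surjectivity by inflating each $B_r$ back to $H_r$ via the identification of the vertices of $B_r$ with the dashed-line-free vertices of $H_r$. The only cosmetic difference is that the paper contracts all internal dashed lines at once (viewing a bicolored $\{0,c\}$-path in $H_r$ between two dashed-line-free vertices as collapsing to a single colour-$c$ edge), whereas you do it one dashed line at a time; the content is the same.
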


In other words, the set of graphs $\mathbbm{G}(\{b_r, B_r\})$ of our model can be thought of as just a {\bf subset of the graphs from the quartic model}, with the same number of bicolored cycles.

\begin{proof} 
This is a corollary of the results of \cite{StuffedWalshMaps} which in our case can be proven directly in a few lines. If $G\in\mathbbm{G}(\{b_r, H_r\})$, $S(G)$ is obtained by applying the boundary operator $\partial$ to each copy of $H_r$. $G$ thus encodes a gluing of $b_r$ copies of $B_r = \partial H_r$ along dashed lines, as expected. Since $\partial$ transforms a bicolored path in $H_r$ with colors $\{0,c\}$ between two vertices with no incident dashed line into a single edge of color $c$ in $B_r$, then a bicolored cycle of colors $\{0,c\}$ in $G$ remains a bicolored cycle with colors $\{0,c\}$ in $S(G)$. There are no additional bicolored cycles in $S(G)$ which shows that the number of bicolored cycles is preserved.

Surjectivity is obtained by considering $G\in \mathbbm{G}(\{b_r, B_r\})$ and replacing each copy of $B_r$ with $H_r$. Remember that one can identify the vertices of $B$ with those of $H$ which do not have an incident dashed line -- this is necessary to know how to place $H_r$ instead of $B_r$.
\end{proof}

\subsection{Large \texorpdfstring{$N$}{N} limit for tensor models with quartic interactions} \label{sec:QuarticLargeN}

Recall that the quartic bubbles are identified by their color sets $C\subset \{1, \dotsc, d\}$ where $C$ is admissible if $|C|\leq d/2$, and if $|C|=d/2$ then $1\in C$. The finite set $R$ is thus $R=\mathcal{C}$ the set of admissible color sets. The most generic quartic model reads
\begin{equation} \label{QuarticModel}
Z_N(\{t_C\}) = \int dT d\bar{T}\ \exp -\Bigl((T|T) + \sum_{C\in \mathcal{C}} N^{s_C}\,t_C\, Q_C(T, \bar{T}) \Bigr)
\end{equation}
and $\mathbbm{G}(\{b_C, Q_C\})$ is the set of connected Feynman graphs with $b_C$ bubbles of type $Q_C$, and $\mathbbm{G}_{\max}(\{b_C, Q_C\})$ the subset which dominates at large $N$.

To study the large $N$ limit, we will use a convenient representation called the {\bf intermediate field} which provides a bijection between $\mathbbm{G}(\{b_C, Q_C\})$ and a set $\mathbbm{M}(\{b_C\})$ of decorated combinatorial maps, see Theorem \ref{thm:QuarticIntField} below.

\smallskip

\paragraph*{Intermediate field representation --} The intermediate field for quartic models (not just tensors) is simply the Hubbard-Stratonovich transformation applied to the partition function in order to integrate the original variables. We can of course apply it to \eqref{QuarticModel} which gives a matrix model whose large $N$ limit can then be studied \cite{ConstructiveQuartic}.

Since our goal is to characterize $\mathbbm{G}_{\max}(\{b_C, Q_C\})$, we will instead view the intermediate field as a bijection between the set of Feynman graphs $\mathbbm{G}(\{b_C, Q_C\})$ and the one of the matrix model after the Hubbard-Stratonovich transformation. In terms of Feynman graphs, the intermediate field is a {\bf bijection} which maps some cycles into vertices equipped a cyclic order. We refer to \cite{DoubleScaling-Dartois} for its first use in the context of tensor models and to \cite{StuffedWalshMaps} for details.

In the intermediate field representation, one first has to choose for each quartic bubble a {\bf pairing} of its vertices. For later purposes, it is convenient that the vertices of each pair are connected with as many colors as possible. We thus choose the canonical pairing $\pi_{Q_C}$ but any other can do in principle. To shorten the notation, we use the bar to denote the partner of a vertex, i.e. $\pi_{Q_C}(v) = \bar{v}$.

In the intermediate field representation, the quartic bubble $Q_C$ is represented as an edge carrying the color set $C$,
\begin{equation} \label{IFEdge}
Q_C = \begin{array}{c} \includegraphics[scale=.4]{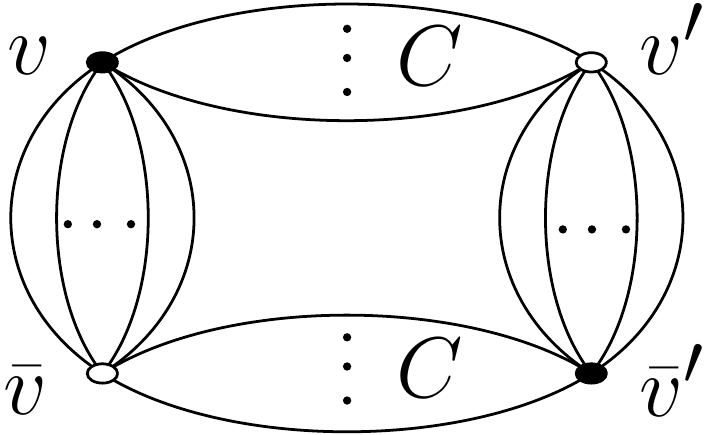} \end{array} \qquad \to \qquad \begin{array}{c} \includegraphics[scale=.4]{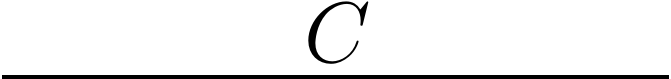} \end{array}
\end{equation}
which should be thought of as two half-edges where each half-edge represents a canonical pair of vertices of $Q_C$, i.e. $\{v, \bar{v}\}$ and $\{v', \bar{v}'\}$. If we had used the other pairing of $Q_C$, the edge would instead carry the color set $\widehat{C}$ as label.

Next we need to encode the connectivity between quartic bubbles. To do this, we look at the paths which alternate canonical pairs of vertices and dashed lines. More precisely, we start from a vertex $v_1$ then jump to its partner $\bar{v}_1$. Then follow the incident dashed line to a vertex $v_2$, then to its partner $\bar{v}_2$ and so on, until one comes back to $v_1$. This produces a cycle $(v_1, \bar{v}_1, v_2, \bar{v}_2, \dotsc)$. In the intermediate field representation, it is represented as a vertex whose incident half-edges correspond to the pairs $\{v_i, \bar{v}_i\}$. Since the latter are ordered along the cycle, it means that the half-edges incident to the vertex in the intermediate field representation are also ordered, similarly to the case of combinatorial maps,
\begin{equation} \label{IFClosedVertex}
\begin{array}{c} \includegraphics[scale=.4]{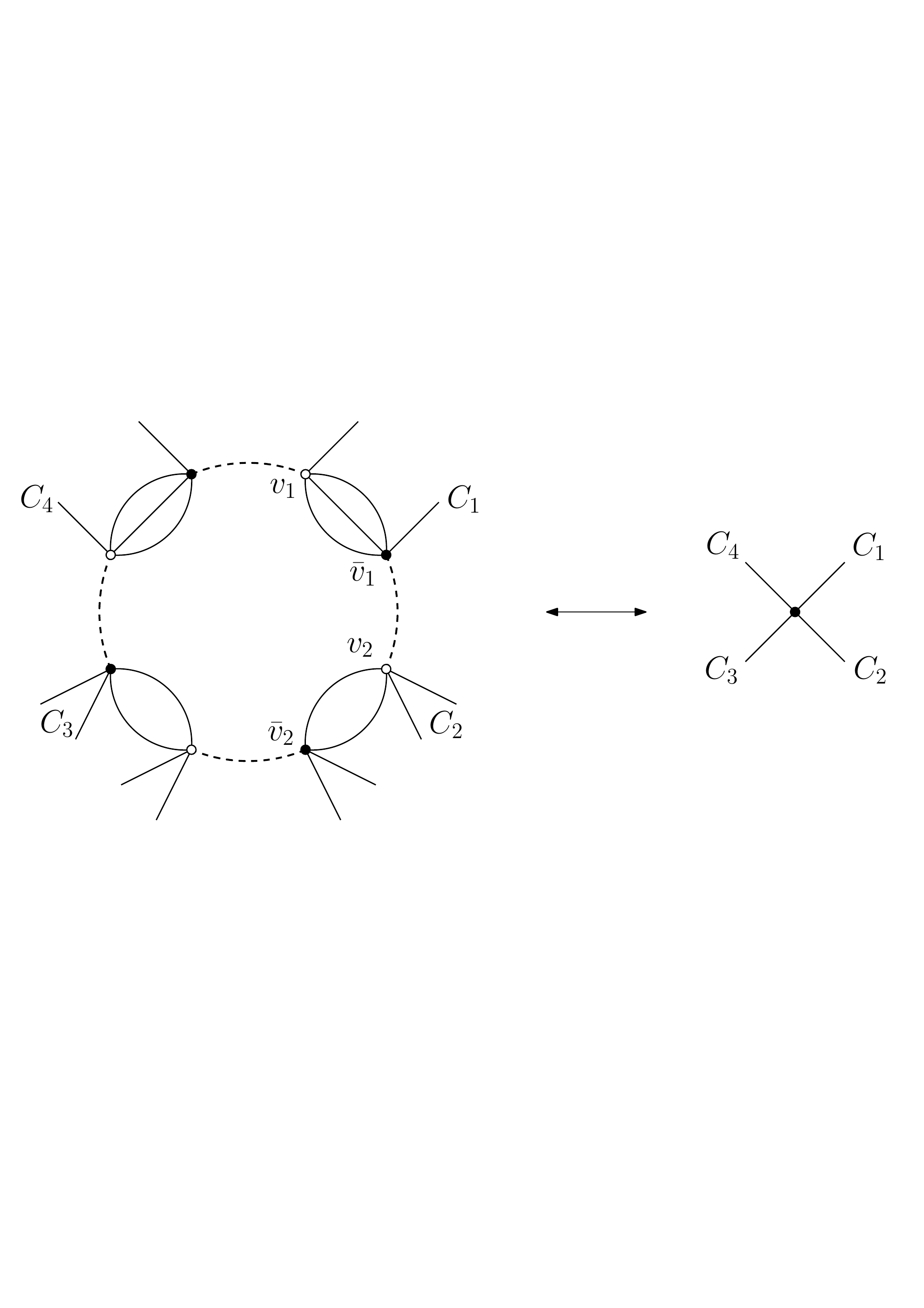} \end{array}
\end{equation}

\paragraph*{The set $\mathbbm{M}(\{b_C\})$ --} This is the set of connected combinatorial maps $M$ with edges decorated by color sets $\{C_e\}_{e\in M}$ and such that the color sets $C$ appears on $b_C$ edges. For $M\in \mathbbm{M}(\{b_C\})$ we define
\begin{itemize}
\item For $C\subset \{1, \dotsc, d\}$, the map $M_C$ is the submap of $M$ obtained by deleting all edges whose color sets $C'$ are not $C$.
\item For $c\in\{1, \dotsc, d\}$, the map $M_c$ is the submap obtained by removing all edges $e$ where $c\not\in C_e$.
\end{itemize}

Recall that the {\bf face} of a combinatorial map is a closed path obtained by following alternatingly edges and corners counter-clockwise. The faces of color $c\in\{1, \dotsc, d\}$ of $M$ are defined as the faces of $M_c$. Notice that if there is a vertex in $M$ whose incident edges do not have the color $c$, then $M_c$ has an isolated vertex and it contributes to one face (and one connected component).


\begin{theorem} \label{thm:QuarticIntField}
There is a bijection $J:\ \mathbbm{G}(\{b_C, Q_C\})\,\to\,\mathbbm{M}(\{b_C\})$ where the bicolored cycles of color $\{0, c\}$ are mapped to the faces of color $c$. In particular
\begin{equation}
L_{0c}(G) = F_{c}(J(G))
\end{equation}
where $F_c(M)$ denotes the number of faces of $M_c$.
\end{theorem}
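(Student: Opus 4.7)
The plan is to make the intermediate-field construction sketched in \eqref{IFEdge}--\eqref{IFClosedVertex} fully explicit, define its inverse, and then verify the face-cycle correspondence by a local analysis at each quartic bubble.

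First, I would define $J$ concretely. Given $G \in \mathbbm{G}(\{b_C, Q_C\})$, replace each copy of $Q_C \subset G$ by an edge labeled $C$, whose two half-edges stand for the two canonical pairs $\{v,\bar v\}$ and $\{v',\bar v'\}$ of $Q_C$ (those joined by all colors of $\widehat{C}$). The vertices of $J(G)$ and their cyclic orders are produced by the alternating walk of \eqref{IFClosedVertex}: starting from any vertex $v_1$ of $G$, jump to its canonical partner $\bar v_1$, follow the incident dashed line to some $v_2$, jump to $\bar v_2$, and so on, until returning to $v_1$; declare this cycle a vertex of $J(G)$ whose incident half-edges are those of the pairs visited, listed in the order of the walk. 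Connectedness of $J(G)$ follows from that of $G$, and the edge multiplicities are $\{b_C\}$ by construction, so $J(G)\in\mathbbm{M}(\{b_C\})$. An inverse $J^{-1}$ is defined by reversing the procedure: expand each edge labeled $C$ into a bubble $Q_C$ with its canonical pairing on the two half-edges, and at each vertex of $M$ install a dashed line from $\bar v_i$ to $v_{i+1}$ for each pair of consecutive half-edges in the cyclic order. That $J$ and $J^{-1}$ are mutual inverses is immediate, since an alternating walk in $G$ and a cyclic order of half-edges at a vertex of $M$ encode the same combinatorial data.

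Next, I would establish $L_{0c}(G) = F_c(J(G))$ by a local analysis. Follow a bicolored cycle with colors $\{0,c\}$ as it enters a quartic bubble $Q_{C_e}$ at a vertex $w$ via a dashed line: the color-$c$ edge at $w$ lands either on the canonical partner $\bar w$ (when $c\in\widehat{C}_e$) or on a vertex of the other canonical pair (when $c\in C_e$), after which the walk exits via the next dashed line. In $M = J(G)$, the first case keeps the walk on the same half-edge of the edge labeled $C_e$, while the second crosses to the opposite half-edge. Since $M_c$ retains precisely the edges with $c \in C_e$, a face walk in $M_c$ does the same thing: it crosses retained edges and bypasses deleted ones. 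Step by step, the sequence of half-edges visited by the bicolored cycle in $G$ then matches the counter-clockwise sequence at vertices of $M$ restricted to edges of $M_c$.

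The main obstacle is this last step, specifically checking that the cyclic order produced by the alternating walk \eqref{IFClosedVertex} really agrees with the convention orienting face walks in $M_c$, including the case where several consecutive half-edges are absent from $M_c$ and the face walk must jump over them; this has to match the fact that a bicolored cycle in $G$ may traverse several within-pair color-$c$ edges in a row without switching half-edge. Once this local compatibility is verified at each quartic bubble, concatenating the local correspondences yields a bijection between bicolored cycles of colors $\{0,c\}$ in $G$ and faces of $M_c$, and the identity $L_{0c}(G) = F_c(J(G))$ follows.
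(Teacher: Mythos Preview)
Your proposal is correct and follows essentially the same route as the paper: define $J$ via the intermediate-field construction, then track a bicolored $\{0,c\}$-cycle locally through each quartic bubble and observe that the two cases $c\in C_e$ versus $c\in\widehat{C}_e$ correspond exactly to following versus not following the edge labeled $C_e$ in $M_c$. The orientation issue you flag as the ``main obstacle'' is dispatched in the paper by orienting the bicolored cycle (color-$0$ edges black-to-white, color-$c$ edges white-to-black) and noting that a color-$0$ step then corresponds to a clockwise corner at the intermediate-field vertex, which fixes the face-walk convention unambiguously.
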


\begin{proof}
This is again a corollary of \cite{StuffedWalshMaps}, in our case simple enough to be proved in a few lines. The bijection is quite obvious, as detailed before. The main interest is the relation between bicolored cycles and faces. A bicolored cycle with colors $\{0,c\}$ alternates edges of color 0, which we can orient, say, from black to white vertices, and edges of color $c$ which we orient from white to black vertices so that the cycle itself is oriented.

In the intermediate field representation, an edge of color 0 from a black to a white vertex corresponds to a corner, clockwise-oriented from say $C_1$ to $C_2$, as can be seen in \eqref{IFClosedVertex}. Then for the edge of color $c$, there are two possibilities. Either $c\in C_2$, then one follows the edge with $C_2$ in the intermediate field representation. Or $c\not\in C_2$, then the edge of color $c$ corresponds in the intermediate field representation to the clockwise crossing of the edge with $C_2$.

Therefore, one follows corners, and edges which have the color $c\in C_e$. This is the definition of a face of color $c$.
\end{proof}

We can thus ``identify'' $G$ and $M=J(G)$. In particular we use the short hand
\begin{equation}
\delta(M) = \delta(G) \qquad \text{if $M=J(G)$}.
\end{equation}
From the above theorem, we can define
\begin{equation}
\mathbbm{M}_{\max}(\{b_C\}) = J\left( \mathbbm{G}_{\max}(\{b_C, Q_C\})\right)
\end{equation}
which coincides with the set of maps $M\in\mathbbm{M}(\{b_C\})$ which maximize the number of faces, i.e. $\sum_{c=1}^d F_{c}(M)$.

\begin{lemma} \label{lemma:Unhooking}
If $e$ is an edge in $M\in\mathbbm{M}(\{b_C\})$ which is not a bridge, then unhooking $e$ from one of its end, i.e.
\begin{equation} \label{Unhooking}
M = \begin{array}{c} \includegraphics[scale=.55]{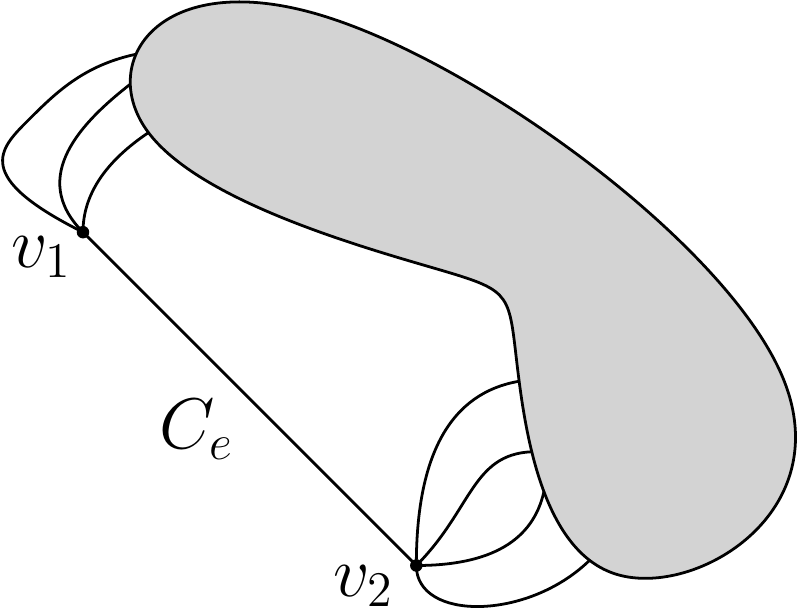} \end{array} \qquad \to \qquad M' = \begin{array}{c} \includegraphics[scale=.55]{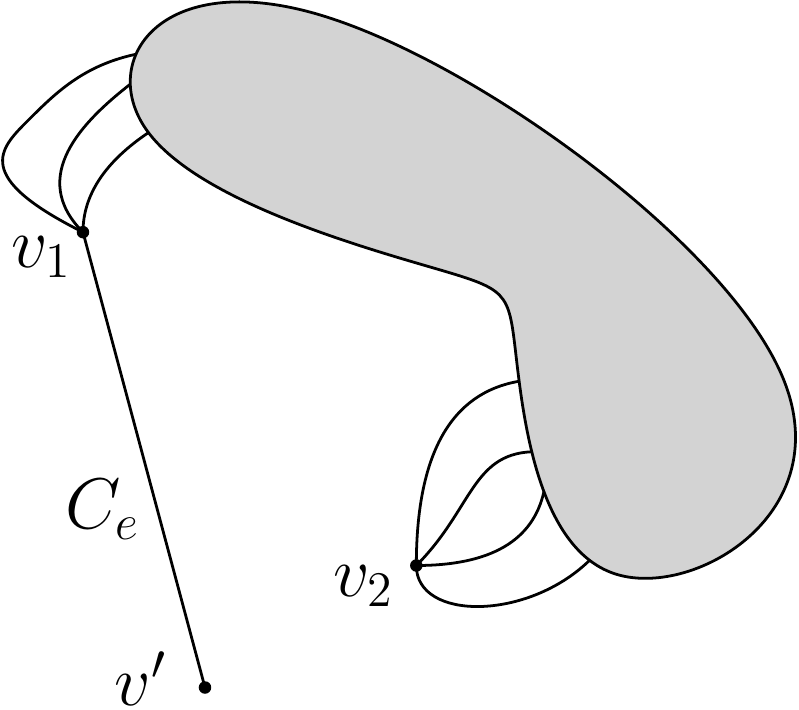} \end{array}
\end{equation}
gives $M'\in\mathbbm{M}(\{b_C\})$ such that
\begin{equation}
\delta(M) \leq \delta(M').
\end{equation}
Moreover
\begin{equation}
\delta(M) < \delta(M') \qquad \text{if $|C_e| <d/2$.}
\end{equation}
\end{lemma}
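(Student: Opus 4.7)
The plan is to show $\delta(M') \geq \delta(M)$ by analyzing how the unhooking affects $\sum_{c=1}^{d} F_c$. Since the operation preserves the multiset of edges and their color-set labels, the interaction term $\sum_C s_C b_C$ in \eqref{delta} is unchanged, so
\begin{equation*}
\delta(M') - \delta(M) \;=\; \sum_{c=1}^{d}\bigl(F_c(M') - F_c(M)\bigr).
\end{equation*}

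First I would fix notation for the move. Write $e$ with endpoints $v_1, v_2$ and half-edges $h$ at $v_1$ and $\bar{h}$ at $v_2$. Unhooking at $v_1$ removes $h$ from the cyclic order at $v_1$ and attaches it to a brand-new degree-one vertex $v'$, so that $e$ becomes a pendant edge between $v_2$ and $v'$. The non-bridge hypothesis on $e$ in $M$ is exactly what ensures $M'$ is still connected, hence that $M' \in \mathbbm{M}(\{b_C\})$.

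Next I would compute $\Delta F_c := F_c(M') - F_c(M)$ separately for $c \notin C_e$ and $c \in C_e$. For $c \notin C_e$ the edge $e$ is deleted in both $M_c$ and $M'_c$; once the cyclic orders at $v_1$ and $v_2$ are restricted, the two submaps agree on the common vertex set, while $M'_c$ carries in addition the isolated vertex $v'$. By the convention that isolated vertices count as faces, $\Delta F_c = +1$, so these $d - |C_e|$ colors contribute $d - |C_e|$ in total. For $c \in C_e$ the edge $e$ is present in both $M_c$ and $M'_c$, and the unhooking induces the same local surgery on $M_c$. A short face-tracing argument at the two corners of $v_1$ adjacent to $h$ yields the standard dichotomy for combinatorial-map surgeries: when the two corners belong to the same face of $M_c$, unhooking splits that face in two and $\Delta F_c = +1$; when they belong to distinct faces, unhooking merges them and $\Delta F_c = -1$. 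In either situation $\Delta F_c \geq -1$.

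Combining the two cases,
\begin{equation*}
\sum_{c=1}^{d}\Delta F_c \;\geq\; (d-|C_e|) - |C_e| \;=\; d - 2|C_e| \;\geq\; 0,
\end{equation*}
using the admissibility constraint $|C_e| \leq d/2$. When $|C_e| < d/2$ one has $d - 2|C_e| \geq 1$, which yields the strict inequality $\delta(M') > \delta(M)$. The main obstacle is the face-surgery step for $c \in C_e$: one has to justify the $\pm 1$ alternative cleanly at the level of the cyclic-order data, and to check that degenerate configurations (such as $e$ being a loop in some $M_c$, or various corners around $h$ coinciding) still obey the same rule; these checks should be routine from the definition of faces via counter-clockwise rotations around vertices.
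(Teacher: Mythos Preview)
Your proof is correct and follows essentially the same approach as the paper's: split the color analysis into $c\notin C_e$ (where the isolated new vertex $v'$ contributes one extra face) and $c\in C_e$ (where at most one face can be lost), then sum to obtain $\sum_c \Delta F_c \geq d-2|C_e|$. Your treatment of the $c\in C_e$ case via the standard $\pm1$ face-surgery dichotomy is slightly more explicit than the paper's terser observation that the pendant edge in $M'_c$ carries a single face while $e$ in $M_c$ carries one or two, but the content is identical.
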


\begin{proof}
Since $e$ is not a bridge, its unhooking leaves the map connected, hence $M'\in\mathbbm{M}(\{b_C\})$. We study the variations of the number of faces for each color $c\in\{1, \dotsc, d\}$. 
\begin{itemize}
\item For $c\in C_e$, there may be one or two faces going along $e$ in $M_c$. In $M'_c$ the edge $e$ has become a bridge attached to a leaf $v'$ and there is now a single face. Thus $F_c(M) \leq F_c(M') + 1$.
\item For $c\not\in C_e$, the map $M'_c$ has an additional face, which is due to the vertex $v'$ alone being a new connected component. Thus $F_c(M) = F_c(M') -1$.
\end{itemize}
Summing over $c\in\{1, \dotsc, d\}$ we get
\begin{equation}
\sum_{c=1}^d F_c(M) \leq \sum_{c\in C_e} \bigl(F_c(M') + 1\bigr) + \sum_{c\not\in C_e} \bigl(F_c(M') -1\bigr) = \sum_{c=1}^d F_c(M') - d + 2|C_e|.
\end{equation}
Recall that $\delta(M)$ is just obtained from shifts of $\sum_{c=1}^d F_c(M)$ by quantities proportional to $b_C$s which are here constant. With $|C_e| \leq d/2$, we directly get $\delta(M) \leq \delta(M')$ with the strong inequality if $|C_e|<d/2$.
\end{proof}

\begin{lemma} \label{thm:Submaps}
If $s_C \leq |C|-d$, and $\tilde{M}\subset M$ is a connected submap of $M\in \mathbbm{M}(\{b_C\})$, then
\begin{equation}
\delta(M) \leq \delta(\tilde{M}).
\end{equation}
\end{lemma}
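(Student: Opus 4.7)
The plan is to prove $\delta(M)\le\delta(\tilde M)$ by building $M$ up from $\tilde M$ one edge at a time, and checking at each step that $\delta$ does not increase. Concretely, since $\tilde M$ and $M$ are both connected and $\tilde M\subset M$, I would order the edges of $E(M)\setminus E(\tilde M)$ so that each intermediate submap $\tilde M=M_0\subset M_1\subset\dotsb\subset M_k=M$ remains connected. Then every addition $M_{i}\to M_{i+1}$ falls into one of exactly two cases: (A) both endpoints of the added edge $e$ already belong to $M_i$, or (B) one endpoint is a new vertex $v'$ introduced together with $e$ (the case of two new endpoints is excluded by connectedness).

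For each case I would compute the change $\Delta\delta=\sum_{c=1}^d\Delta F_c + s_{C_e}$ coming from inserting $e$ (with color set $C_e$). In Case (A), for $c\in C_e$, inserting the edge between two corners of $M_c$ either splits a face or merges two faces, giving $\Delta F_c=\pm1$; for $c\notin C_e$, the map $M_c$ is unchanged so $\Delta F_c=0$. Thus
\begin{equation*}
\Delta\delta \leq |C_e| + s_{C_e} \leq 2|C_e|-d \leq 0,
\end{equation*}
using the hypothesis $s_{C_e}\leq |C_e|-d$ and the convention $|C_e|\leq d/2$. In Case (B), I would think of the addition in two substeps: first place $v'$ as an isolated vertex (producing one extra face in every $M_c$), then glue it via $e$; for $c\in C_e$ this glueing is a bridge attachment that kills the extra face, giving net $\Delta F_c=0$, while for $c\notin C_e$ the isolated-vertex face persists and $\Delta F_c=+1$. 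Hence
\begin{equation*}
\Delta\delta = (d-|C_e|) + s_{C_e} \leq (d-|C_e|) + (|C_e|-d) = 0.
\end{equation*}

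Summing the non-positive increments $\Delta\delta$ over the sequence $M_0\to M_1\to\dotsb\to M_k$ yields $\delta(M)-\delta(\tilde M)\leq 0$, which is the claim. The main point to handle carefully is the Case (A) bound $\Delta F_c\leq 1$; it is the standard fact that inserting an edge in a combinatorial map changes each face count by $\pm 1$ (split versus merge), and the tightness of the bound is precisely what forces the constraint $|C_e|\leq d/2$ to enter. The rest is bookkeeping and the verification that the ordering of additions keeping each $M_i$ connected exists, which follows immediately from the connectedness of both $\tilde M$ and $M$.
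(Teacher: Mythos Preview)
Your proof is correct. The overall strategy---an edge-by-edge induction with two cases (edge between existing vertices versus pendant edge to a new leaf)---matches the paper's, but you run it in the opposite direction: you \emph{add} edges to $\tilde M$ to reach $M$, whereas the paper \emph{removes} edges from $M$ down to $\tilde M$. The more substantive difference is your treatment of Case~(A). The paper handles the non-bridge case by invoking Lemma~\ref{lemma:Unhooking} (unhooking) to convert the edge into a pendant and then removing it, thus reducing everything to the leaf-removal computation. You instead compute the face variation directly via the standard split/merge dichotomy $\Delta F_c=\pm1$ for $c\in C_e$, which makes the bound $\Delta\delta\le |C_e|+s_{C_e}\le 2|C_e|-d\le 0$ immediate and self-contained. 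This buys you independence from Lemma~\ref{lemma:Unhooking}; the paper's route, on the other hand, recycles that lemma and keeps all the face-counting localized in one place. Both arguments are equally short, and the existence of a connected filtration $\tilde M=M_0\subset\dotsb\subset M_k=M$ is indeed immediate from a BFS/DFS of $M$ starting at $\tilde M$, as you note.
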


This lemma shows that with a crucial bound on $s_C$, the function $\delta$ is monotonous for the inclusion.

\begin{proof}
We fix a map $\tilde{M}\in\mathbbm{M}(\{b_C\})$ and proceed by induction on $M\supset \tilde{M}$, i.e. the number of edges $E(M\setminus \tilde{M})$ of $M$ which are not in $\tilde{M}$. If $E(M\setminus \tilde{M})=0$, then $M = \tilde{M}$, there is nothing to prove.

For $E\geq 0$, assume the proposition holds for all $M\supset \tilde{M}$ such that $E(M\setminus \tilde{M}) = E$, and let $M_0\supset \tilde{M}$ such that $E(M_0\setminus \tilde{M})=E+1$. We can always find in $M_0\setminus \tilde{M}$ an edge $e$ which is
\begin{itemize}
\item either not a bridge in $M_0$,
\item or connected to a leaf.
\end{itemize}
We can reduce the first case to the second case as follows. Since $e$ is not a bridge, it can be unhooked as in Lemma \ref{lemma:Unhooking} which gives $M'$ such that $\delta(M_0) \leq \delta(M')$. Then we have an edge connected to a leaf as in the second case. 

Having $e$ connected to a leaf, we can remove the edge and the leaf to get a new map $M'$, still connected and with one less edge of type $C_e$. $M'$ loses a face of color $c$ for all $c\in \widehat{C}_e$ compared to $M_0$. Therefore
\begin{equation} \label{LeafRemoval}
\begin{aligned}
\delta(M') &= \sum_{c\in C_e} F_c(M_0) + \sum_{c\not\in C_e} \bigl(F_c(M_0) - 1) + \sum_{C} s_C b_C  - s_{C_e} \\
& = \delta(M_0) + \bigl(|C_e| - d - s_{C_e}\bigr) \geq \delta(M_0).
\end{aligned}
\end{equation}
according to our hypothesis on $s_C$. Moreover, the induction hypothesis is $\delta(\tilde{M})\geq \delta(M')$, which yields $\delta(\tilde{M}) \geq \delta(M_0)$. This completes the induction.
\end{proof}

\begin{theorem} \label{thm:Quartic}
There is a non-trivial large $N$ limit if and only if 
\begin{equation}
s_C = |C| - d.
\end{equation}
It is such that for all $G\in \bigcup_{\{b_r\}} \mathbbm{G}_{\max}(\{b_C,B_C\})$
\begin{equation}
\delta(G) = \delta_{\max}(\{B_C\}) = d.
\end{equation}
Moreover the set $\mathbbm{M}_{\max}(\{b_C\})$ is defined as maps $M\in\mathbbm{M}(\{b_C\})$ such that
\begin{itemize}
\item $M$ is planar
\item edges with $|C_e| <d/2$ are bridges
\item $M_C$ is planar for all admissible color sets $C$
\item $M_C$ and $M_{C'}$ for $C\neq C'$ can only meet at cut-vertices.
\end{itemize}
\end{theorem}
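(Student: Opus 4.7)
The plan is to split the argument into three parts: (i) fixing the scaling $s_C$ and computing $\delta_{\max}$, (ii) reducing the characterization of $\mathbbm{M}_{\max}$ to the classification of its blocks via a gluing identity, and (iii) classifying the $2$-vertex-connected blocks with $\delta = d$.

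For (i), I would note that the single-vertex map has $\delta = d$ since every $M_c$ reduces to the isolated vertex, contributing one face per color, with no bubble terms. Lemma \ref{thm:Submaps} applied to this singleton then gives $\delta(M) \leq d$ for every $M$, whenever $s_C \leq |C| - d$. The leaf-addition shift $s_C + d - |C|$ extracted from \eqref{LeafRemoval} must be non-positive for boundedness (else iterated leaf-insertion would make $\delta$ unbounded), and must vanish on every $C$ for $C$-edges to appear at all orders in the dominant part. These two constraints fix $s_C = |C| - d$ and $\delta_{\max} = d$, attained by any tree.

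For (ii), a direct Euler-formula bookkeeping shows that gluing $M_1$ and $M_2$ along a cut-vertex, or connecting them by a bridge of any admissible color, produces
\[
\delta(M_1 \cup M_2) = \delta(M_1) + \delta(M_2) - d.
\]
Iterating along the block-cut tree of $M$ yields $\delta(M) = \sum_i \delta(B_i) - (k-1)\, d$ over its $k$ blocks $B_i$. Since every block is a connected submap, Lemma \ref{thm:Submaps} bounds $\delta(B_i) \leq d$, so $\delta(M) = d$ iff $\delta(B_i) = d$ for every block. Single-edge blocks always achieve $\delta = d$, which reduces the problem to classifying $2$-vertex-connected blocks.

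For (iii), in a $2$-vertex-connected block $B$ every edge is a non-bridge, so Lemma \ref{lemma:Unhooking} forces $|C_e| = d/2$ on all edges. Substituting this into the Euler formula for each $F_c$ gives
\[
\delta(B) = -dV + 2\sum_c K_c(B_c) - 2\sum_c g_c(B_c),
\]
so $\delta(B) = d$ decouples into $\sum_c g_c = 0$ (planarity of each $B_c$, hence of $B$ itself via the inherited combinatorial embedding, giving conditions 1 and 3) and $\sum_c K_c = d(V+1)/2$. I would prove the latter by a spanning-tree double count: for any spanning tree $T$ of $B$,
\[
(V-1)\, d/2 \;=\; \sum_c |T \cap B_c| \;\leq\; \sum_c \mathrm{rank}(B_c) \;=\; dV - \sum_c K_c,
\]
giving $\sum_c K_c \leq d(V+1)/2$, with equality iff every fundamental cycle of a non-tree $C_e$-edge lies entirely in $B_c$ for every $c \in C_e$. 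The main obstacle is the equality case: showing that in any $2$-vertex-connected $B$ with edges of two distinct color sets, every spanning tree admits a non-tree edge whose fundamental cycle is mixed in colors. This is exactly where Menger's theorem enters---a $2$-vertex-connected graph contains a cycle through any prescribed pair of edges---so that a well-chosen spanning tree produces the required mixed fundamental cycle, forcing monochromaticity of each block and hence condition 4. Condition 2 is immediate from Lemma \ref{lemma:Unhooking}, and the converse direction (four conditions imply $\delta(M) = d$) is a block-by-block Euler computation reassembled via the gluing formula.
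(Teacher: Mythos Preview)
Your approach via block decomposition is genuinely different from the paper's and is largely sound, but there is one real gap concerning condition~1 ($M$ planar).

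The paper proves the four conditions separately: it uses Lemma~\ref{lemma:Unhooking} directly for the bridge condition, computes $\delta$ on a connected component of $M_C$ to get planarity of $M_C$, rules out mixed-color cycles by an explicit face count on such a cycle, and then obtains global planarity by an induction on cut-vertices together with an edge-permutation argument at each cut-vertex. Your route is more structural: a gluing identity along the block--cut tree reduces everything to classifying $2$-connected blocks, and your spanning-tree bound $\sum_c K_c\le d(V+1)/2$ together with its equality analysis is a clean replacement for the paper's separate treatments of conditions~3 and~4.

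The gap is in the gluing step. For a cut-vertex decomposition of a \emph{map}, Euler bookkeeping gives only
\[
\delta(M)\;\le\;\sum_i\delta(B_i)-(k-1)d,
\]
not an equality: for each colour $c$ one has $F_c(M)=F_c(M_1)+F_c(M_2)-1+2\bigl(g((M_1)_c)+g((M_2)_c)-g(M_c)\bigr)$, and the genus of the glued $M_c$ can strictly exceed the sum when the edges of the two pieces are interleaved around the cut-vertex. This does not break your forward argument---from $\delta(M)=d$ and $\delta(B_i)\le d$ you still squeeze equality everywhere---but the extra information you extract is precisely that all these colour-wise genera are additive, i.e.\ the gluings are non-interleaved. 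That is what forces $M$ itself to be planar. Your sentence ``planarity of each $B_c$, hence of $B$ itself \dots\ giving conditions~1 and~3'' only establishes that each \emph{block} is planar; planarity of $M$ does not follow from planarity of its blocks alone (two planar blocks can be interleaved at a cut-vertex to produce positive genus). You need to invoke the equality case of the gluing inequality explicitly here, and symmetrically in the converse direction you need $M$ planar $\Rightarrow$ all $M_c$ planar $\Rightarrow$ equality in the gluing formula. Once this is made precise, your proof goes through; the Menger step (extend $\gamma\setminus e$ to a spanning tree so that the fundamental cycle of $e$ is $\gamma$ itself) is correct.
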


\begin{proof}
\begin{description}[wide=0pt]
\item[Edges with $|C_e| <d/2$ are bridges] Consider $e$ in $M$ which is not a bridge. $M'$ is obtained by unhooking $e$ as in Lemma \ref{lemma:Unhooking}. 
\begin{itemize}
\item If $|C_e|<d/2$, then $\delta(M')>\delta(M)$. Therefore if $M\in \mathbbm{M}_{\max}(\{b_C\})$, then all edges with $|C_e|<d/2$ are bridges.
\item If $|C_e|=d/2$, then the weaker inequality $\delta(M')\geq \delta(M)$ is obtained. This implies that if we let $T\subset M$ be a spanning tree in $M$, we can unhook all edges $e\in M\setminus T$ as above and find this way
\begin{equation}
\delta(M) \leq \delta(T).
\end{equation}
\end{itemize}
This shows that plane trees belong in $\mathbbm{M}_{\max}(\{b_c\})$ for all values of $\{b_C\}$. In particular, if there is a large $N$ limit, then it is non-trivial as it would contain all plane trees with all possible values of $C$ its edges.

\item[Large $N$ limit and the value of $s_C$] To make sure that there is indeed a large $N$ limit, we must still find the value of $s_C$. It is found by counting the number of faces of plane trees as a function of the $b_C$s. Notice that the trivial map, reduced to a vertex, has $d$ faces. Assume $T$ is a tree with $b_C$ edges having color set $C$. Let $e$ be an edge connected to a leaf and $T'$ be $T$ with $e$ and its leaf removed. Clearly $T'$ has one face less than $T$ for each color in $\widehat{C}_e = \{1, \dotsc, d\}\setminus C_e$. Thus the same calculation as in \eqref{LeafRemoval} shows that
\begin{equation}
\sum_{c=1}^d F_c(T) = \sum_{c=1}^d F_c(T') + d - |C_e| \qquad \text{and} \qquad \delta(T) = \delta(T') + \bigl(d-|C_e| + s_{C_e}\bigr).
\end{equation}
A trivial induction thus shows that
\begin{equation}
\sum_{c=1}^d F_c(T) = d + \sum_C (d-|C|) b_C  \qquad \text{and} \qquad \delta(T) = d + \sum_C \bigl(d-|C|+s_C)\bigr) b_C.
\end{equation}
Therefore the existence of a large $N$ limit requires $s_C \leq |C|-d$ or else trees can have arbitrarily large values of $\delta(T)$ as the numbers of edges $\{b_C\}$ grows. 

In particular, this shows that for $s_C = |C|-d$ there is a non-trivial large $N$ limit where all trees contribute and for which $\delta(T) = \delta_{\max}(\{B_C\}) = d$.

Let us show that it is the {\bf only possible value} of $s_C$. Denote $\sigma_C = |C|-d - s_C \geq 0$ the difference between $s_C$ and the above value. Then, from $\delta(M)\leq \delta(T)$ we find
\begin{equation}
\delta(M) \leq \sum_{c=1}^d F_c(T) + \sum_C s_C b_C = \sum_{c=1}^d F_c(T) + \sum_C (|C|-d) b_C - \sum_C \sigma_C\,b_C = d - \sum_C \sigma_C\,b_C.
\end{equation}
This gives the bound $\sum_C \sigma_C b_C \leq d - \delta(M)$. If a $\sigma_C>0$ is non-vanishing, then graphs at fixed $\delta$ can only have a finite number $b_C$ of bubbles $B_C$, and thus the large $N$ limit with respect to $Q_C$ is trivial.

\item[Consequence of Lemma \ref{thm:Submaps} and $\delta(M)\leq d$] Assume that $M$ maximizes the number of faces, i.e. $\delta(M)=d$. Then, from Lemma \ref{thm:Submaps}, we know that all submaps $\tilde{M}$ of $M$ must satisfy $\delta(\tilde{M}) = d$ too since it is the upper bound.

\item[$M_C$ is planar for all colors sets] Consider a color set $C$. If $|C|<d/2$, then all edges of $M_C$ are bridges, as seen above, and $M_C$ is a forest, which is always planar. We thus focus on the case $|C|=d/2$, and consider $\tilde{M}$ a connected component of $M_C$. As an ordinary map, $\tilde{M}$ has $F(\tilde{M})$ faces, $E(\tilde{M})$ edges and $V(\tilde{M})$ vertices. Its faces coincide with the faces of colors $c\in C$ of $\tilde{M}$, while its vertices coincide with the faces of colors $c\not\in C$, and there are $d/2$ such colors in both cases. We get
\begin{equation}
\delta(\tilde{M}) = \sum_{c=1}^d F_c(\tilde{M}) - (d-|C|) E(\tilde{M}) = \frac{d}{2} F(\tilde{M}) + \frac{d}{2} V(\tilde{M}) - (d-|C|) E(\tilde{M}) = \frac{d}{2} \bigl(2-2g(\tilde{M})\bigr)
\end{equation}
where $g(\tilde{M})$ is the genus of $\tilde{M}$ seen as an ordinary map. Therefore $\delta(\tilde{M}) = d$ if and only if $\tilde{M}$ is planar. We conclude that all connected components of $M_C$ must be planar for $M\in\mathbbm{M}_{\max}(\{b_C\})$.

\item[$M_C$ and $M_{C'}$ can only meet at cut-vertices] Suppose that $M$ has a cycle made of edges with at least two different color sets. Denote $\tilde{M}$ the submap consisting of this cycle, and $v$ a vertex where edges $e, e'$ with two different color sets $C, C'$ meet. Let us study the number of faces of $\tilde{M}$ by unhooking $e'$,
\begin{equation}
\tilde{M} = \begin{array}{c} \includegraphics[scale=.4]{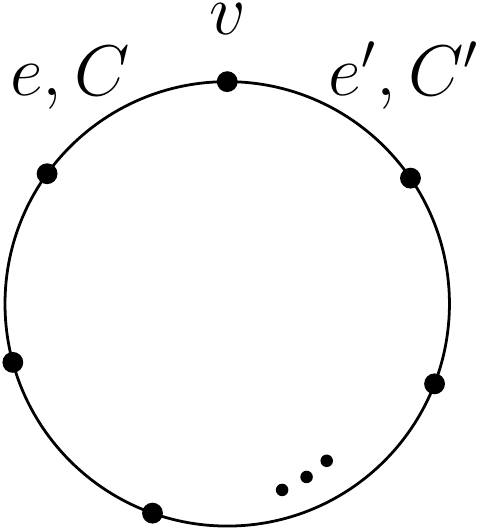} \end{array} \qquad \to \qquad \tilde{M}' = \begin{array}{c} \includegraphics[scale=.4]{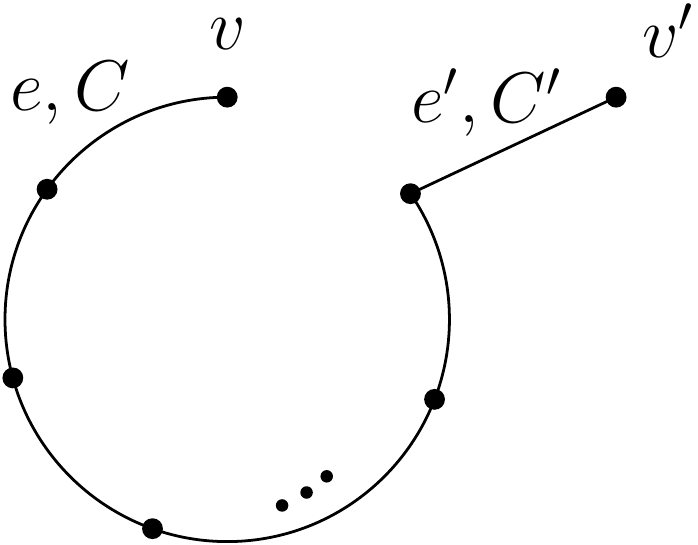} \end{array}
\end{equation}
Denote the color sets along the cycle $C_i$, for $i\in I$ some finite set. First, we know that $|C_i|=d/2$ otherwise all those edges would be cut-edges. For all colors which are present all along the cycle, i.e. $c\in \bigcap_i C_i$, $\tilde{M}$ has two faces and $\tilde{M}'$ only one, so at least $|\bigcap_i C_i|\leq d/2-1$ are lost, where the inequality comes from the fact that there are different color sets along the cycle. However, for all the other colors, and there are more than $d/2$ of them, $\tilde{M}'$ creates one additional face. It is easy to check by investigating the various cases: $c$ belongs in both $C$ and $C'$ but not in the color set of another edge of the cycle, $c$ belongs in $C$ but not in $C'$ or the other way around, or $c$ belongs neither to $C$ nor $C'$. Therefore we find 
\begin{equation}
\sum_{c=1}^d F_{c}(\tilde{M}') > \sum_{c=1}^d F_{c}(\tilde{M})
\end{equation}
and since they have the same number of edges of each type, it comes $\delta(\tilde{M})<d$. We conclude with Lemma \ref{thm:Submaps} that $\delta(M)<d$ if it contains a cycle with multiple color sets. This is equivalent to $M\in\mathbbm{M}_{\max}(\{b_C\})$ has its submaps $M_C, M_{C'}$ meeting at cut-vertices only.

\item[$M$ is a planar map] It is enough to consider the case where all edges have color sets $|C_e|=d/2$. Let us use an induction on the number of cut-vertices. If there is none, i.e. $M$ is non-separable, then all edges must have the same color set $C$ and $M=M_C$ is planar as proved above.

Let us consider $M$ with at least one cut-vertex $v$ where the incident edges have color sets $C_1, C_2, \dotsc$ We split $M$ into maps $M^{(1)}, M^{(2)}, \dotsc$ as follows. For $M^{(1)}$, remove all edges incident to $v$ except those of type $C_1$, and erase all the other connected components created by those edge removals. For $M^{(2)}$, we do the same with $C_2$ and so on. The maps $M^{(i)}$ are connected and $M$ is formed by gluing them back at $v$, with the appropriate cyclic order of the incident edges. 

Since $M^{(i)}$ touches the other maps at $v$ only, if we remove $v$ from $M^{(i)}$ we get a map with less cut-vertices than $M$. By induction, we know that all the connected components obtained by removing $v$ are planar. To check that $M^{(i)}$ itself is planar, we must look at the way those connected components are glued around $v$ in $M^{(i)}$. Remember that all the edges incident to $v$ in $M^{(i)}$ have the same color type $C_i$ and that $M_{C_i}$ is planar. We conclude that $M^{(i)}$ is planar.

Therefore the non-planarity of $M$ can only come from the way the maps $(M^{(i)})$ are glued together at $v$. Clearly, if $M$ is not planar, it is sufficient to permute the order of some edges incident on $v$ to make it planar, without modifying any of the maps $M^{(i)}$. When permuting the order of two edges, one following the other in the cyclic order around $v$, with different colors sets $C_1, C_2$, the faces of color $c$ for $c\not\in C_1$ and $c\not\in C_2$ are unaffected. Faces are affected only if their colors are in $C_1\cap C_2$ which is not empty since both $C_1$ and $C_2$ contain the color 1. Permuting the order of two such edges at $v$ does not change the number of connected components of the maps $M_c$ for all $c\in\{1, \dotsc, d\}$. Therefore, when $M$ is made planar using those permutations, the number of faces increases. We conclude that $M$ has to be planar to maximize the number of faces.
\end{description}
\end{proof}

%
%

\subsection{Large \texorpdfstring{$N$}{N} limit for tensor models with GM interactions} \label{sec:GMBLargeN}


In this section, we prove the following explicit formulation of Theorem \ref{thm:Scaling}.

\begin{theorem*}[\ref{thm:Scaling}]
There is a non-trivial large $N$ limit if and only if the scaling coefficients are
\begin{equation}
s_r = \sum_{C} |C| b_C^{(r)} - \frac{d(V_r-2)}{2}.
\end{equation}
in particular $s_r>(d-1)(V_r-2)/2$ except for ordinary melonic bubbles for which $s_r=(d-1)(V_r-2)/2$.

Let $\mathbbm{G}_{\max}(\{b_r, H_r\})\subset \mathbbm{G}(\{b_r, H_r\})$ be the subset of graphs for which $\delta(G) =d$. Then
\begin{equation}
\mathbbm{G}_{\max}(\{b_r, B_r\}) = S\bigl(\mathbbm{G}_{\max}(\{b_r, H_r\})\bigr),
\end{equation}
and $\mathbbm{G}_{\max}(\{b_r, H_r\}) \subset \mathbbm{G}_{\max}\Bigl(\Bigl\{ \sum_r b_C^{(r)} b_r, Q_C\Bigr\}\Bigr)$.
\end{theorem*}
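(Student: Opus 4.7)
The plan is to reduce Theorem \ref{thm:Scaling} to the quartic case of Theorem \ref{thm:Quartic} via the surjection $S$ of Proposition \ref{thm:Surjection}. The starting observation is that each graph $H_r \in \mathbbm{G}_{V_r}$ with $\partial H_r = B_r$ contains exactly $V_r/2-1$ quartic bubbles (Proposition \ref{thm:G_V}), of which $b_C^{(r)}$ are of type $Q_C$, and these numbers are intrinsic to $B_r$ (Proposition \ref{thm:SetsUniqueness}). Since $\mathbbm{G}(\{b_r, H_r\}) \subset \mathbbm{G}(\{\sum_r b_C^{(r)} b_r, Q_C\})$ and $S$ preserves bicolored cycles, each graph $G \in \mathbbm{G}(\{b_r, H_r\})$ admits two equivalent descriptions of its Feynman amplitude.

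First, I read off $s_r$ by demanding that the two amplitudes agree under the critical quartic scaling $s_C = |C|-d$ of Theorem \ref{thm:Quartic}:
\begin{equation*}
\sum_{c=1}^d L_{0c}(G_B) + \sum_r s_r b_r \;=\; \sum_{c=1}^d L_{0c}(G) + \sum_C (|C|-d) \sum_r b_C^{(r)} b_r,
\end{equation*}
which, combined with $L_{0c}(G_B) = L_{0c}(G)$ for $G_B = S(G)$, forces
\begin{equation*}
s_r = \sum_C (|C|-d)\,b_C^{(r)} = \sum_C |C|\,b_C^{(r)} - \frac{d(V_r-2)}{2}.
\end{equation*}
The bound separating ordinary melonic bubbles from the rest is then immediate from $\sum_C |C|\,b_C^{(r)} \geq \sum_C b_C^{(r)} = (V_r-2)/2$, with equality iff every $|C|=1$.

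Next I transfer the characterization of dominant graphs. Writing $\delta_Q$ for the scaling exponent in the quartic model, the matching above is precisely $\delta(G_B) = \delta_Q(G)$ for $G_B = S(G)$, so Theorem \ref{thm:Quartic} gives $\delta(G_B) \leq d$ and equality holds iff $G \in \mathbbm{G}_{\max}(\{b_r, H_r\})$ in the sense defined in the statement. Surjectivity of $S$ then yields $\mathbbm{G}_{\max}(\{b_r, B_r\}) = S(\mathbbm{G}_{\max}(\{b_r, H_r\}))$, and the inclusion $\mathbbm{G}_{\max}(\{b_r, H_r\}) \subset \mathbbm{G}_{\max}(\{\sum_r b_C^{(r)} b_r, Q_C\})$ is then tautological from the definitions.

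Finally, I establish non-triviality and uniqueness of $s_r$, the step I expect to be the main obstacle. I construct an infinite family of graphs realizing $\delta = d$ as follows: since each $H_r$ is itself a tree of quartic bubbles (Theorem \ref{thm:QuarticTree}), one can chain $b_r$ copies of $H_r$ by adding dashed lines between boundary vertices so that the full collection of $\sum_r b_r (V_r/2-1)$ quartic bubbles still forms a tree. Under the intermediate-field bijection (Theorem \ref{thm:QuarticIntField}) this corresponds to a plane tree, which automatically satisfies the four conditions defining $\mathbbm{M}_{\max}$ (planarity, small-$|C|$ edges being bridges, planarity of each $M_C$, meeting of distinct $M_C, M_{C'}$ only at cut-vertices), so these graphs belong to $\mathbbm{G}_{\max}$ for arbitrarily large $b_r$. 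Uniqueness of $s_r$ then follows in the standard way: any strictly larger value makes $\delta$ unbounded along this tree family, while any strictly smaller value renders the $B_r$-contribution strictly subdominant, violating the non-triviality requirement for $B_r$. The delicate point worth careful verification is that the tree-of-$H_r$ construction indeed corresponds to a plane tree in the intermediate field for any admissible color-set data, which is precisely where the dictionary from Theorem \ref{thm:QuarticIntField} and the plane-tree criterion of Theorem \ref{thm:Quartic} must be invoked together.
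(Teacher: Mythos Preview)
Your proof follows essentially the same route as the paper's: derive $s_r$ by matching the $\delta$ of the GM model with the $\delta_Q$ of the quartic model through $S$, transfer the bound $\delta\leq d$ and the description of $\mathbbm{G}_{\max}$ from Theorem~\ref{thm:Quartic}, and then establish non-triviality and uniqueness via an explicit family of graphs achieving $\delta=d$ for all $b_r$.

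The one place where you diverge slightly is the non-triviality step, and it is precisely the point you yourself flag as delicate. The paper does not argue through the intermediate-field tree criterion; instead it invokes Lemma~\ref{thm:Trees}, which builds the graphs $G_B$ (pairing each boundary vertex $v$ with $\pi_B(v)$ by an edge of color $0$) and then chains $b$ copies by cutting and regluing one such color-$0$ edge, giving the closed formula $\sum_c L_{0c} = (d(V-2)/2 - \sum_C |C| b_C)\,b + d$. Plugging in $s_r^*$ yields $\delta=d$ directly, with no appeal to $\mathbbm{M}_{\max}$. Your phrasing ``so that the full collection of quartic bubbles still forms a tree'' is not quite right as stated: once every vertex carries a dashed line, the bubble--dashed-line incidence graph necessarily has cycles (there are $\sum_r b_r V_r/2$ color-$0$ edges to place, far more than the $\sum_r b_r(V_r/2-1)-1$ needed for a spanning tree of the bubbles). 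What is true---and what you presumably intend---is that the image under $J$ is a plane tree in $\mathbbm{M}(\{b_C\})$; this is exactly what the paper's construction achieves, and Lemma~\ref{thm:Trees} verifies it by the face count matching the tree formula from the proof of Theorem~\ref{thm:Quartic}. So your argument is correct in spirit but would benefit from either citing Lemma~\ref{thm:Trees} or spelling out that it is $J(G)$, not $G$, that is the tree.
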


The subset $\mathbbm{G}_{\max}\Bigl(\Bigl\{ \sum_r b_C^{(r)} b_r, Q_C\Bigr\}\Bigr)$ is described in Theorem \ref{thm:Quartic}. The graphs which dominate the large $N$ limit of tensor models with GM interactions are thus a subset of those from the quartic case.

The main ingredients of the proof are obviously Proposition \ref{thm:Surjection} and Theorem \ref{thm:Quartic}. In addition, to prove that the large $N$ limit is non-trivial and unique, we simply reproduce for completeness the proof of Lionni from \cite{PhDLionni}. For that purpose, it will be useful to know the number of bicolored cycles for a family of graphs with arbitrary number of bubbles.

\begin{lemma} \label{thm:Trees}
There exists a set of graphs in $\mathbbm{G}(b,B)$ for all $b\geq 1$ whose numbers of bicolored cycles with colors $\{0,c\}$ for all $c\in\{1, \dotsc, d\}$ is
\begin{equation}
\sum_{c=1}^d L_{0c}(G) = \biggl(\frac{d(V-2)}{2} - \sum_C |C| b_C\biggr) b + d.
\end{equation}
where $b_C$ is the number of $C$-bidipole insertions performed to construct $B$, or equivalently the number of quartic bubbles $Q_C$ in $H\in\mathbbm{G}_V$ with $\partial H=B$.
\end{lemma}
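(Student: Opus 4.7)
The strategy is to combine Proposition \ref{thm:Surjection} with the intermediate-field bijection of Theorem \ref{thm:QuarticIntField}. Fix any auxiliary $H\in\mathbbm{G}_V$ with $\partial H=B$, and let $b_C$ be the number of quartic bubbles $Q_C$ in $H$, so that $\sum_C b_C=V/2-1$. It suffices to produce, for each $b\geq 1$, a Feynman graph $G_b\in\mathbbm{G}(b,H)$ whose quartic intermediate field $J(G_b)$ is a plane tree with exactly $b\,b_C$ edges of color set $C$ for each $C$. Indeed, Proposition \ref{thm:Surjection} then gives $S(G_b)\in\mathbbm{G}(b,B)$ with the same bicolored cycle counts, and iterating the leaf-removal identity \eqref{LeafRemoval} starting from the trivial tree yields
\begin{equation*}
\sum_{c=1}^d L_{0c}(G_b)=\sum_{c=1}^d F_c(J(G_b))=d+b\sum_C(d-|C|)b_C=d+b\Bigl(\tfrac{d(V-2)}{2}-\sum_C|C|\,b_C\Bigr),
\end{equation*}
which is exactly the claimed formula.

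\textbf{Construction of $G_b$.} I build $G_b$ by a double induction. For $b=1$, I proceed by induction on $V$. The case $V=2$ is trivial: $H=B$ is the 2-vertex bubble, closed by a single dashed line, which yields $d$ bicolored cycles and an empty (vacuously planar) intermediate field. For the step $V\to V+2$, use the recursive description of $H$ from the \emph{Construction} paragraph of Section \ref{sec:GMB}: $H$ is obtained from some $H^{(1)}\in\mathbbm{G}_{V-2}$ by attaching a new quartic bubble $Q_{C_1}$ at a vertex $v_1\in H^{(1)}$ through one internal dashed line. Starting from an inductively built $G_1^{(1)}\in\mathbbm{G}(1,H^{(1)})$ with tree IF $T^{(1)}$, let $u_1$ be the partner of $v_1$ under the external dashed line of $G_1^{(1)}$; detach that dashed line, reconnect $u_1$ to one of the three free vertices of $Q_{C_1}$, and close the remaining two free vertices of $Q_{C_1}$ by an additional dashed line. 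Tracing the alternating cycles (canonical pair, dashed line, \dots) shows that one canonical pair of $Q_{C_1}$ is absorbed into the IF-vertex of $T^{(1)}$ carrying $v_1$, while the second pair becomes a new leaf, joined to it by the IF-edge of color set $C_1$ representing $Q_{C_1}$; hence $J(G_1)=T^{(1)}\cup\{\text{new leaf}\}$ is again a plane tree.

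For the outer induction $b\to b+1$, take $G_b\in\mathbbm{G}(b,H)$ with tree IF $T_b$ and a fresh $G_1^{\mathrm{new}}\in\mathbbm{G}(1,H)$ with tree IF $T_1^{\mathrm{new}}$, choose a dashed line of $G_1^{\mathrm{new}}$ whose associated IF-corner sits at a leaf of $T_1^{\mathrm{new}}$ and any external dashed line of $G_b$, then cut both and reconnect crosswise. A direct inspection of the cycle traversal shows that the two cycles incident to the swapped corners merge into one, which realises precisely the grafting of the subtree $T_1^{\mathrm{new}}$ onto $T_b$ at a single IF-vertex; the resulting graph lies in $\mathbbm{G}(b+1,H)$ and has plane tree IF.

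\textbf{Main obstacle.} The delicate point is in the swap step: one must verify that the local reconnection of two dashed lines in the tensor graph translates exactly into attaching the subtree $T_1^{\mathrm{new}}$ to $T_b$ at a single IF-vertex, rather than creating a cycle in the IF or disconnecting the map. Cutting $G_1^{\mathrm{new}}$ at the corner of a leaf of $T_1^{\mathrm{new}}$ is precisely the combinatorial condition that guarantees the merge produces a valid plane tree, and the cycle bookkeeping required to certify this is the only non-routine piece of the argument.
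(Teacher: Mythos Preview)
Your argument is correct in substance and the counting is right, but it takes a more circuitous route than the paper and has a few loose ends. The paper works directly on $B$: it defines $G_B\in\mathbbm{G}(1,B)$ by connecting each canonical pair $\{v,\pi_B(v)\}$ with a color-$0$ edge, and observes that each $C$-bidipole insertion adds exactly $d-|C|$ bicolored cycles, giving $\sum_c L_{0c}(G_B)=d+\sum_C(d-|C|)b_C$ immediately. For $b>1$ it simply cuts one color-$0$ edge in each of two copies and splices, losing $d$ cycles per splice. No passage through $H$, no intermediate field, no surjection $S$ is needed.

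Your approach instead lifts everything to $\mathbbm{G}(b,H)$, constructs graphs whose quartic intermediate field is a tree, invokes the tree face count already established in the proof of Theorem~\ref{thm:Quartic}, and then pushes down via $S$. This is legitimate and shows you understand how the pieces fit together, but it is heavier machinery for what is ultimately an explicit construction. Two small points to tighten: your base case $V=2$ does not sit in the framework, since $\mathbbm{G}_2$ is empty (there is no $H$ for the two-vertex bubble); start at $V=4$ with $H=Q_C$ and the two external dashed lines closing the canonical pairs. Second, in the $V\to V+2$ step, ``reconnect $u_1$ to one of the three free vertices of $Q_{C_1}$'' is not an arbitrary choice: you must pick the canonical partner of the vertex already attached to $v_1$, otherwise both canonical pairs of $Q_{C_1}$ land in the same IF-vertex and you create a loop rather than a leaf. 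Finally, the leaf condition you impose in the $b\to b+1$ splice is harmless but unnecessary: merging any two IF-vertices from disjoint trees yields a tree.
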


\begin{proof}
We first construct such a graph with a single bubble, $b=1$. Consider $B$ with its canonical pairing $\pi_B$ and $G_B\in\mathbbm{G}(1,B)$ obtained by connecting the vertices of every pair $\{v, \pi_B(v)\}$ with an edge of color 0. By induction on the number of vertices of $B$, it can be checked that 
\begin{equation}
\sum_{c=1}^d L_{0c}(G_B) = d + \sum_C (d-|C|) b_C = \frac{dV}{2} - \sum_C |C| b_C
\end{equation}
since every new $C$-bidipole insertion with the two vertices connected by $\widehat{C}$ also connected by an edge of color 0 adds exactly one bicolored cycle for each color in $\widehat{C}$ and does not change the number of bicolored cycles for $c\in C$.

Then consider two copies of $G_B$ and cut an edge of color 0 in both copies and glue the half-edges of color 0 so as to obtain a new connected graph $G^{(1)}_B$. The number of bicolored cycles is $2\sum_{c=1}^d L_{0c}(G_B) - d$ since the gluing merges $d$ bicolored cycles. By repeating this operation with $G^{(1)}_B$ and $G_B$, we get $G^{(2)}_B$ and so on. When $G$ built this way has $b$ copies of $B$, the number of bicolored cycles is thus
\begin{equation}
b \sum_{c=1}^d L_{0c}(G_B) - (b-1)d = d + \frac{b\,d}{2}(V-2) - b\sum_C |C| b_C.
\end{equation}
\end{proof}

\begin{proof}[Proof of Theorem \ref{thm:Scaling}]
There are two notions of $\delta$ on the set $\mathbbm{G}(\{b_r, H_r\})$. One is due to the inclusion \eqref{Inclusion} as a subset of the quartic model
\begin{equation}
\delta_Q(G) = \sum_{c=1}^d L_{0c}(G) + \sum_C \Bigl(\sum_r b_C^{(r)} b_r\Bigr) s_C
\end{equation}
with a scaling coefficient $s_C$ for each quartic bubble inherited from the quartic model. The other is due to the surjection $S$ and associates the scaling coefficient $s_r$ to each copy of $H_r$,
\begin{equation}
\delta(G) = \sum_{c=1}^d L_{0c}(G) + \sum_r b_r\,s_r.
\end{equation}
This is the $\delta$ coming from the original model, i.e. on $\mathbbm{G}(\{b_r,B_r\})$.

From Theorem \ref{thm:Quartic}, there is a unique way to make $\delta_Q$ bounded for all graphs in $\mathbbm{G}\Bigl(\Bigl\{ \sum_r b_C^{(r)} b_r, Q_C\Bigr\}\Bigr)$ and with non empty sets of graphs satisfying $\delta(G)=d$ when $\sum_r b_C^{(r)} b_r \to \infty$ for any $C$. This is $s^*_C = |C|-d$. In addition, equating $\delta_Q = \delta$ on $\mathbbm{G}(\{b_r, H_r\})$ gives
\begin{equation}
s_r = \sum_C b_C^{(r)}\ s_C
\end{equation}
i.e. the scaling coefficient of $B_r$ is just the sum of the scaling coefficients of its quartic decomposition $H_r$. Using the unique values $s^*_C = |C|-d$, one finds 
\begin{equation}
s_r^* = \sum_{C} |C| b_C^{(r)} - \frac{d(V_r-2)}{2}
\end{equation}
as given in the Theorem.

With those values, we clearly have $\delta(G)\leq d$ for all graphs, meaning that a large $N$ limit exists. We define $\mathbbm{G}_{\max}(\{b_r,B_r\})$ as the image via $S$ of 
\begin{equation}
\mathbbm{G}_{\max}(\{b_r, H_r\}) = \mathbbm{G}(\{b_r, H_r\}) \cap \mathbbm{G}_{\max}\Bigl(\Bigl\{ \sum_r b_C^{(r)} b_r, Q_C\Bigr\}\Bigr),
\end{equation}
i.e. graphs built from the objects $H_r$ and satisfying $\delta(G)=d$. 

The rest of the proof simply follows Lionni's proof of the uniqueness of a non-trivial large $N$ limit, \cite{PhDLionni}. First, to show that the large $N$ limit is non-trivial, it is enough to set $b_{r'}=0$ for all $r'\in R$ except one $b_r\neq 0$. Then the graphs of Lemma \ref{thm:Trees} satisfy $\delta(G) = d$ and they exist for all values of $b_r\geq 1$.

To prove uniqueness, we first notice that if $s_r>s_r^*$ for some $r\in R$, then there is no large $N$ limit anymore because $\delta(G)$ is unbounded for $G\in\mathbbm{G}(b_r,B_r)$. Then consider a graph in $\mathbbm{G}_{\max}(\{b_r,B_r\})$ therefore satisfying
\begin{equation}
\sum_{c=1}^d L_{0c}(G) + \sum_r b_r s_r^* = d,
\end{equation}
and introduce $\sigma_r = s_r^* - s_r$. It comes that
\begin{equation}
\delta(G) = \sum_{c=1}^d L_{0c}(G) + \sum_r b_r\, s_r = d - \sum_r \sigma_r\,b_r.
\end{equation}
which shows that at fixed $\delta(G)$ there is an upper bound on the number of bubbles $b_r$ as soon as $\sigma_r>0$.
\end{proof}

%
%

\section{Totally unbalanced GM bubbles} \label{sec:TotallyUnbalanced}

\paragraph*{Definition --} We say that a GM bubble $B$ is balanced if it is constructed from $C$-bidipole insertions satisfying $|C|=d/2$, i.e. the multiset $\mathcal{C}_B=\{C_i\}$ does not contain elements such that $|C_i|=d/2$, and we say it is unbalanced otherwise. Among the latter, we say that $B$ is {\bf totally unbalanced} if all bidipole insertions satisfy $|C_i|<d/2$.


\subsection{Expectations of totally unbalanced GM bubble polynomials in arbitrary tensor models}

In this section we show that the expectation of any totally unbalanced GM bubble polynomial:
\begin{itemize}
\item for the Gaussian distribution at large $N$, is given by a single Wick contraction, 
\item in arbitrary tensor models, behaves like in the Gaussian model where the covariance is the full 2-point function at large $N$. 
\end{itemize}

\paragraph*{Pairings on a bubble --} $\mathbbm{G}(B)$ is the set of Feynman graphs with a single bubble, whose vertices are thus paired by edges of color 0. We denote $\pi \in \mathbbm{G}(B)$ with $\{v, \pi(v)\}$ being a pair and $\pi^2 = \mathbbm{1}$ being the identity on the set of vertices of $B$.

$\mathbbm{G}_{\max}(B)$ consists in the pairings which maximize the number of bicolored cycles $\sum_{c=1}^d L_{0c}(G)$. In field theory language, it is the set of dominant Wick contractions at large $N$ for the expectation of $B(T, \bar{T})$ in the Gaussian distribution.

\begin{proposition}
If $B$ is a totally unbalanced GM bubble, then $\mathbbm{G}_{\max}(B)$ has a single element which is the canonical pairing $\pi_B$ defined in Section \ref{sec:GMB}.
\end{proposition}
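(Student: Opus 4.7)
The plan is to induct on $V$ using a local swap at a bidipole. For the base case $V=4$, $B=Q_C$ with $|C|<d/2$: of the two possible pairings of $Q_C$, direct counting yields $2|\widehat C|+|C|$ bicolored cycles for the canonical pairing and $2|C|+|\widehat C|$ for the other; the difference $|\widehat C|-|C|>0$ establishes both optimality and uniqueness. For the inductive step, by Proposition \ref{thm:SetsUniqueness} I can pick a $C$-bidipole $\{v,\bar v,w\}$ in $B$ with $|C|<d/2$ whose removal yields a totally unbalanced GM bubble $B^{(1)}$ on $V-2$ vertices, and I let $v_0$ denote the resulting collapsed vertex of $B^{(1)}$.

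Given any pairing $\pi$ of the vertices of $B$, I split on the partner of $\bar v$. In \textbf{Case A} ($\pi(\bar v)=v$), $w$ is paired with some external $x$, $\pi$ descends to a pairing $\pi'$ of $B^{(1)}$ via $w\leftrightarrow v_0$, and tracing bicolored paths shows
\begin{equation}
\sum_{c=1}^d L_{0c}(G_\pi)=\sum_{c=1}^d L_{0c}(G_{\pi'})+|\widehat C|,
\end{equation}
with the $|\widehat C|$ coming from one extra $2$-cycle per $c\in\widehat C$ at the internal edges $v\text{-}\bar v$, while for $c\in C$ the bidipole segment $v\!\to\!\bar v\!\to\!w\!\to\!x$ corresponds bijectively to $v_0\!\to\!x$ in $B^{(1)}$. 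By the inductive hypothesis $\pi'=\pi_{B^{(1)}}$ is forced, hence $\pi=\pi_B$. In \textbf{Cases B} ($\pi(\bar v)=w$) and \textbf{C} ($\pi(\bar v)\notin\{v,w\}$), I define a $2$-pair swap $\pi\to\tilde\pi$ that installs the canonical internal pair $(v,\bar v)$ and re-pairs the two freed endpoints into a new external pair. The induced modification of the color-$0$ involution $\sigma_0$ is a product of two transpositions, one on each color class, whose restriction to white vertices is a single transposition of the two swapped whites. A cycle-count analysis of $\sigma_0\sigma_c$ on whites then shows that for every $c\in\widehat C$ this transposition splits a cycle (because the internal color-$c$ edge at $v$ forces the two swapped whites to lie in the same $\sigma_0\sigma_c$-cycle) giving $+1$, while for $c\in C$ the change is at worst $-1$, so
\begin{equation}
\sum_{c=1}^d L_{0c}(G_{\tilde\pi}) - \sum_{c=1}^d L_{0c}(G_\pi) \geq |\widehat C|-|C| > 0,
\end{equation}
and $\pi$ is strictly suboptimal, while $\tilde\pi$ falls into Case A.

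The main obstacle is to cleanly control the sign of the cycle-count change under the swap in Cases B and C, where the rearrangement threads global bicolored cycles whose structure is not locally visible. The clean device above is to pass to the permutation description $\sigma_0,\sigma_c$ on whites and observe that the swap amounts to multiplication by one transposition, whose cycle-effect is necessarily $\pm 1$; the sign is pinned to $+1$ whenever the color-$c$ edge from $v$ is internal, that is whenever $c\in\widehat C$. The required strict gain $|\widehat C|-|C|>0$ is exactly the total-unbalance hypothesis, consistent with the fact that the uniqueness statement genuinely fails at balanced bidipoles where the two pair choices on $Q_C$ with $|C|=d/2$ yield the same count.
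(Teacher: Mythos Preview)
Your proof is correct and follows essentially the same approach as the paper: induction on $V$, with the key step being the local flip at a bidipole pair $(v,\bar v)$ that strictly increases $\sum_c L_{0c}$ whenever $\pi(\bar v)\neq v$. The paper's proof is a two-sentence sketch that simply asserts the flip gains one bicolored cycle for each $c\in\widehat C$ and loses at most one for each $c\in C$; your version supplies the missing justification via the permutation description $\tilde\rho=(v\ a)\circ\rho$ on whites, pinning the $+1$ for $c\in\widehat C$ by the observation $\rho(v)=\sigma_0(\sigma_c(v))=\sigma_0(\bar v)=a$, and you also make explicit the descent to $B^{(1)}$ in Case A, which the paper leaves entirely implicit in the phrase ``by induction.''
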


Notice that the corresponding number of bicolored cycles is given by $b=1$ in Lemma \ref{thm:Trees}, $\sum_{c=1}^d L_{0c}(\pi_B) = d + \frac{d(V-2)}{2} - \sum_C |C|b_C$.

\begin{proof}
It is easy to prove by induction on the number of vertices. Indeed, $B$ has a $C$-bidipole with $|C|<d/2$, meaning that there are two vertices $v, \bar{v}$ connected by $|\widehat{C}| = d-|C|>d/2$ edges. If $v$ and $\bar{v}$ are not connected by an edge of color 0, we flip the two edges as follows,
\begin{equation} \label{flip}
\begin{array}{c} \includegraphics[scale=.4]{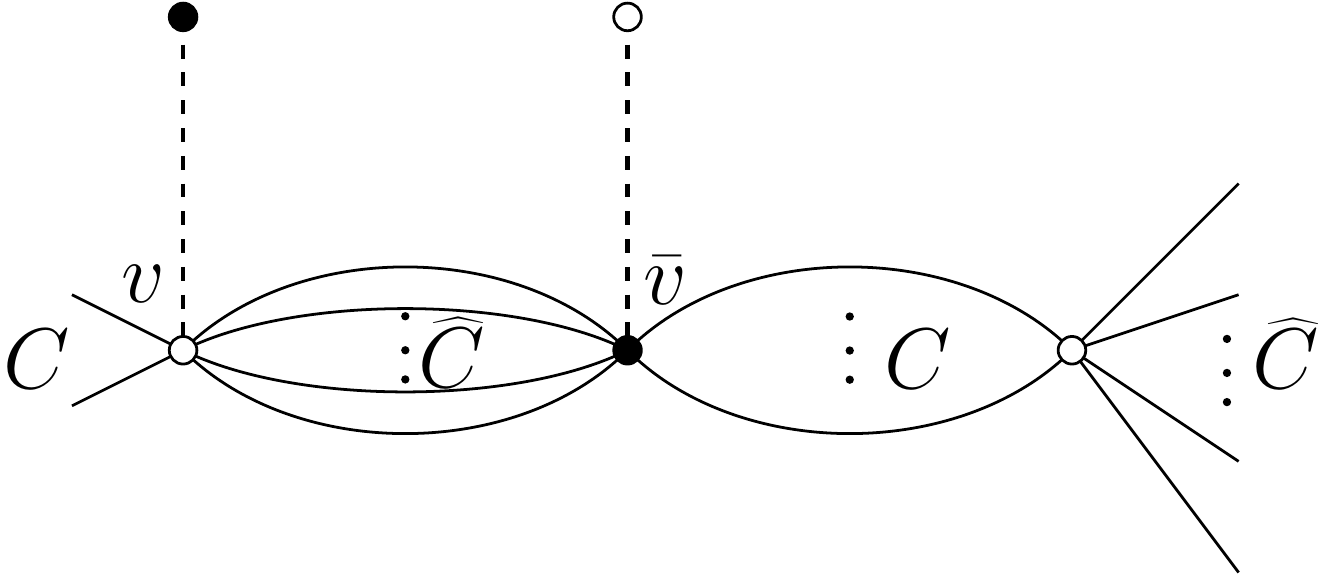} \end{array} \qquad \to \qquad \begin{array}{c} \includegraphics[scale=.4]{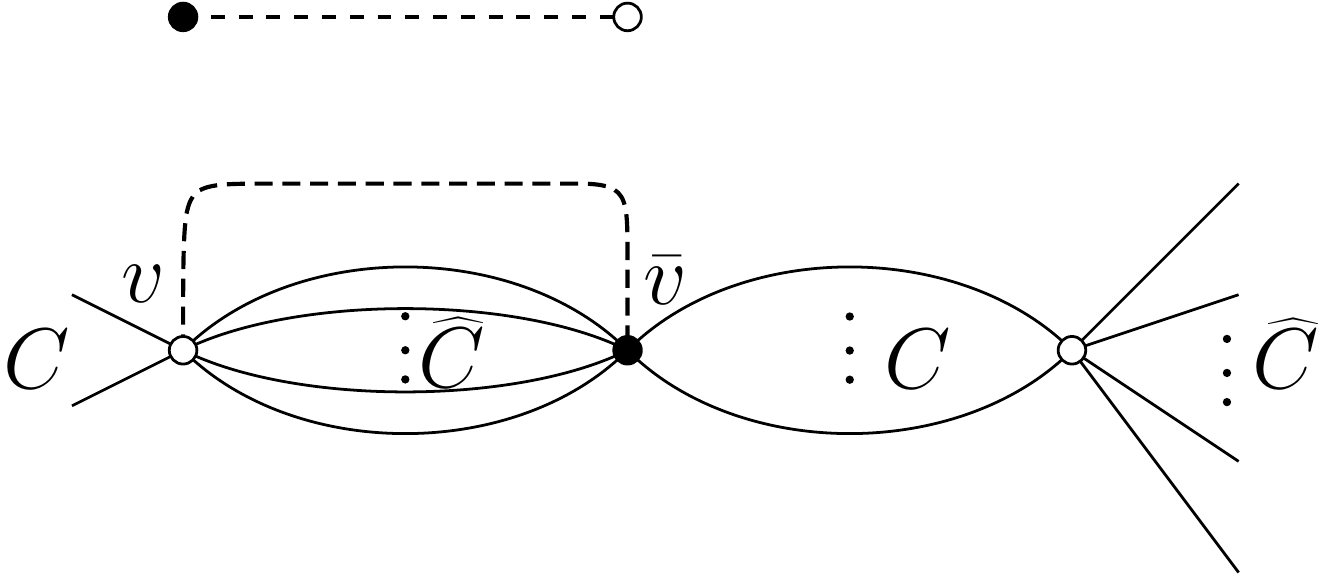} \end{array}
\end{equation}
This creates one new bicolored cycle of colors $\{0,c\}$ for each $c\in\widehat{C}$ so even if a bicolored cycle is lost for each $c\in C$ (and at most one can be lost), the total number $\sum_{c=1}^d L_{0c}$ increases.
\end{proof}

\paragraph*{The maximal 2-cut property --} For $G\in \mathbbm{G}(\{b_r, B_r\})$, we say that a bubble $B\subset G$ satisfies the {\bf 2-cut property} if there exists $\pi \in\mathbbm{G}(B)$ such that the vertices of a pair $\{v, \pi(v)\}$ are connected by an edge of color 0, or by a pair of edges of color 0 forming a 2-edge-cut,
\begin{equation}
\begin{array}{c} \includegraphics[scale=.4]{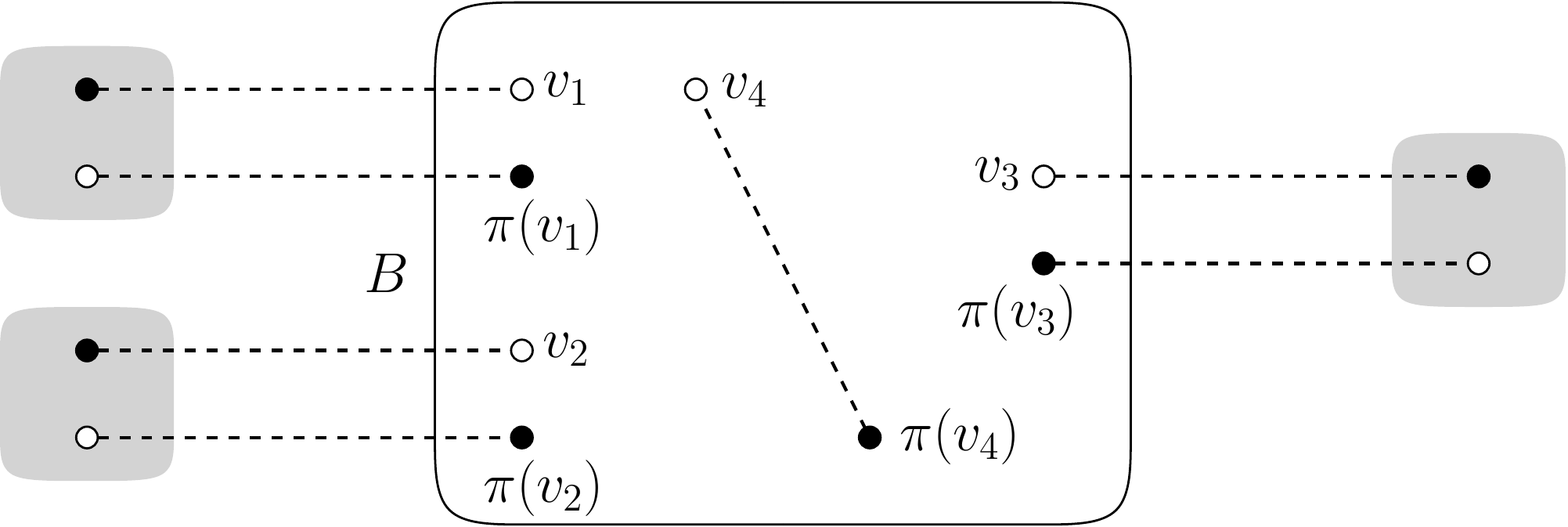} \end{array}
\end{equation}
In field theory language, it means that $B$ only has 2-point functions (including the bare propagator) incident to it.

We say that $B\subset G$ satisfies the {\bf maximal} 2-cut property if $\pi\in\mathbbm{G}_{\max}(B)$. In field theory language, it means that the 2-point functions are attached to $B$ as in the large $N$ limit of the Gaussian distribution.

\paragraph*{The set $\mathbbm{G}(\{b_r, B_r\}; B)$ --} Consider a set $\{B_r\}_{r\in R}$ of arbitrary bubbles and $B$ an additional bubble (which may be chosen among the set $\{B_r\}$). We denote $\mathbbm{G}(\{b_r, B_r\}; B)$ the set of connected graphs made of $b_r$ copies of $B_r$ for all $r\in R$ and a marked copy of $B$. They are the graphs which enter the Feynman expansion of the expectation of the bubble polynomial $B(T, \bar{T})$.

We denote $\mathbbm{G}_{\max}(\{b_r, B_r\}; B)$ the subset which maximizes the number of bicolored cycles $\sum_{c=1}^d L_{0c}(G)$.

\begin{theorem} \label{thm:2Cut}
If $B$ is a {\bf GM bubble}, then for all $G\in\mathbbm{G}_{\max}(\{b_r, B_r\}; B)$, the marked bubble $B$ satisfies the maximal 2-cut property.
\end{theorem}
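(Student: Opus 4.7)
The plan is to prove the claim by induction on the number of vertices $V$ of $B$, using an edge-swap move on color-0 edges (a ``flip'') to show that if the desired 2-cut structure fails, the graph can be modified to strictly increase the count $\sum_c L_{0c}(G)$, contradicting the maximality of $G$.

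The basic tool is the following flip estimate. If $\{v,\bar v\}$ is a pair of vertices of $B\subset G$ connected in $B$ by all colors in some $\widehat{C}$ (with $|C|\leq d/2$), and if the color-0 edges incident to $v$ and $\bar v$ go to distinct vertices $x,y\neq v,\bar v$, then swapping the two color-0 edges into $v$--$\bar v$ and $x$--$y$ produces a graph $G'$ with
\[
\sum_{c=1}^d L_{0c}(G') \;\geq\; \sum_{c=1}^d L_{0c}(G) + |\widehat{C}| - |C|.
\]
The proof is color-by-color: for each $c\in\widehat{C}$ the color-$c$ edge between $v$ and $\bar v$ forces the two vertices to lie on the same $\{0,c\}$-cycle before the swap and on two different ones after, gaining $+1$; for each $c\in C$ the swap can at worst merge two distinct $\{0,c\}$-cycles into one, losing at most $1$. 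When $|C|<d/2$ the gain is strict.

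For the base case $V=2$, $B$ is the unique $d$-edge bubble and the flip estimate with $C=\emptyset$ gives a strict gain of $d$ unless the color-0 edges at $v$ and $\bar v$ are the same edge, i.e.\ $\pi=(v\bar v)$ is realised directly. For the inductive step I would pick a $C$-bidipole $\{v,\bar v,w\}$ in $B$ whose removal yields the GM bubble $B^{(1)}$, and analyse the color-0 edges incident to $v$ and $\bar v$ in $G$. If they form a 2-edge-cut between $\{v,\bar v\}$ and the rest of $G$, then $\{v,\bar v\}$ satisfies the 2-cut condition; amputating the bidipole (and re-routing the color-$C$ edges of $w$ to the former color-$C$ neighbours of $v$) produces a graph $G^{(1)}\in\mathbbm{G}(\{b_r,B_r\};B^{(1)})$ whose bicolored-cycle count differs from that of $G$ by exactly $-|\widehat{C}|$. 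A reinsertion argument (re-inserting the bidipole with a direct color-0 edge between $v$ and $\bar v$) then shows $G^{(1)}\in\mathbbm{G}_{\max}(\{b_r,B_r\};B^{(1)})$, so by induction $B^{(1)}$ carries a pairing $\pi^{(1)}\in\mathbbm{G}_{\max}(B^{(1)})$ satisfying the 2-cut property in $G^{(1)}$. Extending by $\{v,\bar v\}$ yields a pairing $\pi$ on $B$ with the 2-cut property in $G$, and the count from Lemma \ref{thm:Trees} combined with the inductive formula confirms $\pi\in\mathbbm{G}_{\max}(B)$.

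The main obstacle is to force the 2-cut structure at $\{v,\bar v\}$ in the first place. When $|C|<d/2$ the flip lemma immediately rules out any configuration in which the color-0 edges at $v$ and $\bar v$ are not already part of a 2-edge-cut, so the above Case A always occurs. The delicate point is the balanced case $|C|=d/2$, where the flip can leave $\sum_c L_{0c}(G)$ unchanged; here I would argue that any equality case of the flip lemma forces the $\{0,c\}$-orbits of $v$ and $\bar v$ in $G$ to coincide for every $c\in C$, and that this combinatorial condition is precisely equivalent to the existence of a 2-edge-cut separating $\{v,\bar v\}$ from the rest of $G$. This yields the second alternative in the definition of the maximal 2-cut property, and the resulting pairing lies in $\mathbbm{G}_{\max}(B)$ by the same cycle count as before, completing the induction.
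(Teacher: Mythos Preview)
Your overall architecture---induction on the number of vertices of $B$, the flip estimate on the color-0 edges at a bidipole pair $\{v,\bar v\}$, and the contraction of the bidipole to pass to $B^{(1)}$---is exactly the scheme the paper uses (it sketches the same two cases, reduction by contraction versus strict gain by flipping, and refers to \cite{3D} for details). For the totally unbalanced case $|C|<d/2$ your argument is correct and essentially identical to the paper's.

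The genuine gap is in your treatment of the balanced case $|C|=d/2$. Your claim that equality in the flip estimate forces the $\{0,c\}$-orbits of $v$ and $\bar v$ to \emph{coincide} for $c\in C$ is backwards: equality requires a loss of exactly one cycle for every $c\in C$, which happens precisely when $v$ and $\bar v$ lie on \emph{different} $\{0,c\}$-cycles before the swap. Worse, the equivalence you assert with the 2-edge-cut fails in the direction you need. If the two color-0 edges at $v,\bar v$ \emph{do} form a 2-edge-cut, then every $\{0,c\}$-cycle through $v$ must cross the cut twice and hence pass through $\bar v$ as well; so the orbits coincide for all $c$, and the flip would in fact gain $+1$ on every color (but would also disconnect the graph, which is why one does not flip in that case). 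Consequently, equality in the flip is \emph{consistent} with---indeed implies---the absence of a 2-edge-cut, the opposite of what your argument asserts. After a cost-neutral flip you obtain another maximal graph $G'$ with a direct color-0 edge $v$--$\bar v$, but the maximal 2-cut property is a statement about $G$, not $G'$, and there is no mechanism in your proposal to transport it back.

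Note that the paper's own sketch is equally thin here: it asserts a strict increase under the flip in the $k$-bond case and defers to \cite{3D}, where $d=3$ and balanced bidipoles do not occur. So the balanced case is a genuine subtlety beyond both your proposal and the paper's sketch; for the application to Theorem~\ref{thm:Gaussian} (totally unbalanced interactions) your argument is sound as written.
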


In this theorem, the bubbles $B_r$ are arbitrary. Notice that we only refer to the number of bicolored cycles and not to any scaling coefficients. This formulation is in fact not tied to the existence of a large $N$ limit since it fixes the number of copies of each bubble.

\begin{proof}
The reasoning is the same as in the equivalent theorem of \cite{3D} for planar bubbles (instead of GM) in 3 dimensions. We therefore only sketch the proof and refer to \cite{3D} for details. It relies on an induction on the number of vertices of $B$. The theorem is obvious if $B$ has two vertices only.

Now consider $B$ with $V+2$ vertices and $G$ a Feynman graph containing $B$. If $B$ does not satisfy the maximal 2-cut property, we distinguish two cases.
\begin{itemize}
\item $B$ satisfies the 2-cut property with a pairing $\pi\not\in\mathbbm{G}_{\max}(B)$.
\item There are some edges of color 0 incident to $B$ which form a $k$-bond, for $k>2$ (a bond is a set of edges forming an edge-cut with no subset of edges forming an edge-cut).
\end{itemize}

If $v, \bar{v}$ are vertices of a $C$-bidipole connected by more than $|\widehat{C}| \geq d/2$ edges in $B$, either there is a single edge of color 0 between them and we can then contract the $C$-dipole by removing $v,\bar{v}$, the edges with colors in $\widehat{C}$ and the edge of color 0 between them and reconnecting the half-edges with colors in $C$. This produces a bubble $B'$ with $V$ vertices, which is still GM and a graph $G'\in\mathbbm{G}(\{b_r,B_r\}; B')$ in which $B'$ does not satisfy the maximal 2-cut property. We can then apply the induction hypothesis. The same reasoning applies if the edges of color 0 incident to $v$ and $\bar{v}$ form a 2-edge-cut.

If the edges of color 0 incident to $v$ and $\bar{v}$ are part of a $k$-bond for $k>2$, then we can flip them as in \eqref{flip} and observe that the number $\sum_{c=1}^d L_{0c}(G)$ increases, meaning that $G\not\in\mathbbm{G}_{\max}(\{b_r, B_r\};B)$.
\end{proof}

\subsection{Gaussian large \texorpdfstring{$N$}{N} limit for tensor models with totally unbalanced GM interactions} \label{sec:Gaussian}

\paragraph*{Large $N$ Gaussian model --} We say that the model with interactions $\{B_r(T, \bar{T})\}$ is {\bf Gaussian at large $N$} if for all bubbles $B$, for all graphs $G\in \mathbbm{G}_{\max}(\{b_r, B_r\};B)$, the marked bubble $B \subset G$ satisfies the maximal 2-cut property.

\begin{theorem*}[\ref{thm:Gaussian}]
If the bubbles $\{B_r\}_{r\in R}$ are GM, then the model is Gaussian at large $N$ and the large $N$ covariance $C(\{t_r\})$ satisfies
\begin{equation} \label{LargeN2point}
C(\{t_r\}) = 1 + \sum_{r'\in R} \frac{1}{2} t_{r'}\,V_{r'}\,C(\{t_r\})^{V_r/2}.
\end{equation}
In particular, the expectation of the polynomial associated to any bubble $B$ is
\begin{equation} \label{GaussianExpectation}
\langle B(T, \bar{T})\rangle = N^{\Omega(B)} \Bigl( \lvert \mathbbm{G}_{\max}(B)\rvert\, C(\{t_r\})^{V(B)/2} + \mathcal{O}(1/N)\Bigr)
\end{equation}
for some unknown $\Omega(B)$, while $V(B)$ is the number of vertices of $B$ and $\lvert \mathbbm{G}_{\max}(B)\rvert$ is the number of dominant Wick pairings at large $N$ in the Gaussian distribution.
\end{theorem*}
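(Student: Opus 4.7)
The plan is to combine Theorem \ref{thm:2Cut} with the intermediate field picture from Section \ref{sec:QuarticLargeN}, and then to read off the Dyson equation for the 2-point function.

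First, I would transport the problem to the intermediate field. By Proposition \ref{thm:Surjection} and Theorem \ref{thm:Scaling}, the dominant Feynman graphs lift through $S$ to $\mathbbm{G}_{\max}(\{b_r, H_r\})$, which sits inside the dominant set of the quartic model. Under the totally unbalanced hypothesis every quartic bubble $Q_C$ appearing in any $H_r$ satisfies $|C|<d/2$, so Theorem \ref{thm:Quartic} forces every edge of the corresponding map to be a bridge; hence the dominant maps are plane trees decorated by admissible color sets of cardinality strictly below $d/2$. This tree structure is the intermediate field manifestation of a Gaussian large $N$ limit.

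Next, to establish the maximal 2-cut property for every marked bubble $B$ I would argue in two cases. If $B$ is itself GM the result is exactly Theorem \ref{thm:2Cut}. For an arbitrary marked $B$ I would reuse the local flip argument of \eqref{flip}: any Wick pairing on the vertices of $B$ that does not belong to $\mathbbm{G}_{\max}(B)$ can be strictly improved by a flip, and under the totally unbalanced hypothesis the bicolored cycle count of the whole graph changes by exactly the amount by which it changes on $B$ alone, because the tree environment of the interactions in the intermediate field merely transports cycles rather than merging them with other contributions. Hence every Wick pairing on $B$ in a dominant graph lies in $\mathbbm{G}_{\max}(B)$, which is precisely the maximal 2-cut property. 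The expectation formula \eqref{GaussianExpectation} then follows: each pairing $\pi \in \mathbbm{G}_{\max}(B)$ factorizes into a product of full 2-point functions along its pairs, each pair yielding a factor $C$, and there are $|\mathbbm{G}_{\max}(B)|$ such pairings.

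To derive the polynomial equation \eqref{LargeN2point} for $C$ I would apply the 2-cut structure to the 2-vertex marked bubble, whose expectation is $C$ itself. The Feynman expansion decomposes as follows: either the two external legs are Wick-contracted directly, contributing the $+1$, or they enter a single interaction bubble $B_{r'}$ along one of its canonical pairs. The 2-cut property together with the uniqueness of the canonical pairing for totally unbalanced GM bubbles (the first proposition of Section \ref{sec:TotallyUnbalanced}) forces the remaining $V_{r'}/2 - 1$ canonical pairs of $B_{r'}$ to be filled by independent full 2-point functions, each contributing a factor $C$. Summing over $r'\in R$ and over the $V_{r'}/2$ choices of which canonical pair of $B_{r'}$ is external then yields exactly $C = 1 + \sum_{r'} \tfrac{1}{2}\,t_{r'}\,V_{r'}\,C^{V_{r'}/2}$.

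The step I expect to be the main obstacle is the non-GM case in the 2-cut verification, since Theorem \ref{thm:2Cut} as stated handles only GM marked bubbles. The totally unbalanced hypothesis is what ultimately makes the flip argument rigorous in this extended setting: without it, the tree environment of the interactions could conceivably absorb the local gain of a flip at $B$ through the extra flexibility at edges with $|C_e|=d/2$, so that the optimisation of the pairing of $B$ would no longer decouple from the optimisation elsewhere in the graph.
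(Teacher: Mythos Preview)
Your treatment of the case where the marked bubble $B$ is GM, of the tree structure in the intermediate field, and of the Dyson equation \eqref{LargeN2point} is fine and essentially matches the paper. The gap is exactly where you suspect: the non-GM marked bubble.

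Your proposed fix, to run the flip \eqref{flip} directly on $B$, does not go through. The flip move requires two vertices of $B$ joined by strictly more than $d/2$ colors, i.e.\ a $C$-bidipole with $|C|<d/2$; this is precisely the GM (totally unbalanced) structure, and an arbitrary $B$ need not possess any such pair. So there is no local move on $B$ available to improve a non-maximal pairing, and the claim that ``the bicolored cycle count of the whole graph changes by exactly the amount by which it changes on $B$ alone'' has no content without such a move. The tree structure of the intermediate field controls the \emph{interaction} bubbles, not the marked one.

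The paper's route is the reverse of yours: instead of working on $B$, one applies Theorem~\ref{thm:2Cut} to each \emph{interaction} bubble $B_r\subset G$ (this is legitimate because each $B_r$ is GM and Theorem~\ref{thm:2Cut} allows the remaining bubbles, including the marked $B$, to be arbitrary). Once every copy of every $B_r$ in a dominant graph carries only 2-point functions on its canonical pairs, the entire environment of $B$ is a tree of full propagators, and this structural constraint forces the edges of color $0$ incident to $B$ to organise themselves into 2-edge-cuts along a pairing in $\mathbbm{G}_{\max}(B)$. In short: you should transfer the GM hypothesis from $B$ (where it is absent) to the $B_r$ (where it holds), and let the 2-cut structure propagate from the interactions to the observable, rather than trying to optimise on $B$ directly.
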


\begin{proof}
By definition, it means showing that for all $G\in\mathbbm{G}_{\max}(\{b_r, B_r\}; B)$ with GM bubbles $\{B_r\}$ and an arbitrary $B$, the marked bubble $B$ satisfies the maximal 2-cut property.

The reasoning is again the same as in the equivalent theorem of \cite{3D} with planar bubbles instead of GM bubbles. It relies on the fact that for $G\in\mathbbm{G}_{\max}(\{b_r, B_r\}; B)$, all copies of every $B_r \subset G$ satisfy the maximal 2-cut property, and this forces $B$, even if not GM, to satisfy the maximal 2-cut property too. We refer to \cite{3D} for details.

The equation \eqref{LargeN2point} can be obtained either directly from the combinatorics, or from any Schwinger-Dyson equation, which can be turned into an equation on $C(\{t_r\})$ using the expectations \eqref{GaussianExpectation}.
\end{proof}

\section{Matrix models for GM tensorial interactions} \label{sec:MatrixModel}

\subsection{Plane tree representation of GM bubbles}

The set $\mathbbm{G}_V$ clearly consists in trees, since dashed lines connecting quartic bubbles are 1-edge-cuts. To make it more explicit, we use the intermediate field representation as in Section \ref{sec:QuarticLargeN}. It will provide a bijection between $\mathbbm{G}_V$ and a set $\mathbbm{T}_V$ of plane trees with $V/2$ vertices, Proposition \ref{thm:IntField} below.

\paragraph*{Intermediate field representation of GM bubbles --} Since $\mathbbm{G}_V$ consists in quartic bubbles connected by dashed lines, we can try and apply the intermediate field representation. The pairing of the vertices is as before, $\{v, \bar{v}\}$ forming a pair when they are connected by the colors of $\widehat{C}$. Each bubble $Q_{C}$ is represented as an edge decorated by the color set $C$, see Equation \eqref{IFEdge}.

The vertices of the intermediate field representation were presented in Section \ref{sec:QuarticLargeN} as the cycles alternating canonical pairs of vertices of quartic bubbles and dashed lines, see \eqref{IFClosedVertex}. However, since dashed lines are edge-cuts in $\mathbbm{G}_V$, there are no such cycles. Instead, there are paths of the same type: Start from a vertex $v_1$ with no incident dashed line. Jump to its partner $\bar{v}_1$. If it is incident to a dashed line, follow the latter to a vertex $v_2$, then to $\bar{v}_2$ and so on, until one arrives at a vertex with no incident dashed line. Such a path thus goes $(v_1, \bar{v}_1, v_2, \bar{v}_2, \dotsc)$. In the intermediate field representation, it is represented as a vertex whose incident half-edges correspond to the pairs $\{v_i, \bar{v}_i\}$. Since the latter are ordered along the path, the half-edges incident to the vertex in the intermediate field representation are also ordered, similarly to the case of combinatorial maps.

This is however not quite a cyclic order, since the path has a start and an end. This information is stored by marking the appropriate corner, here in gray,
\begin{equation} \label{IFVertex}
\begin{array}{c} \includegraphics[scale=.5]{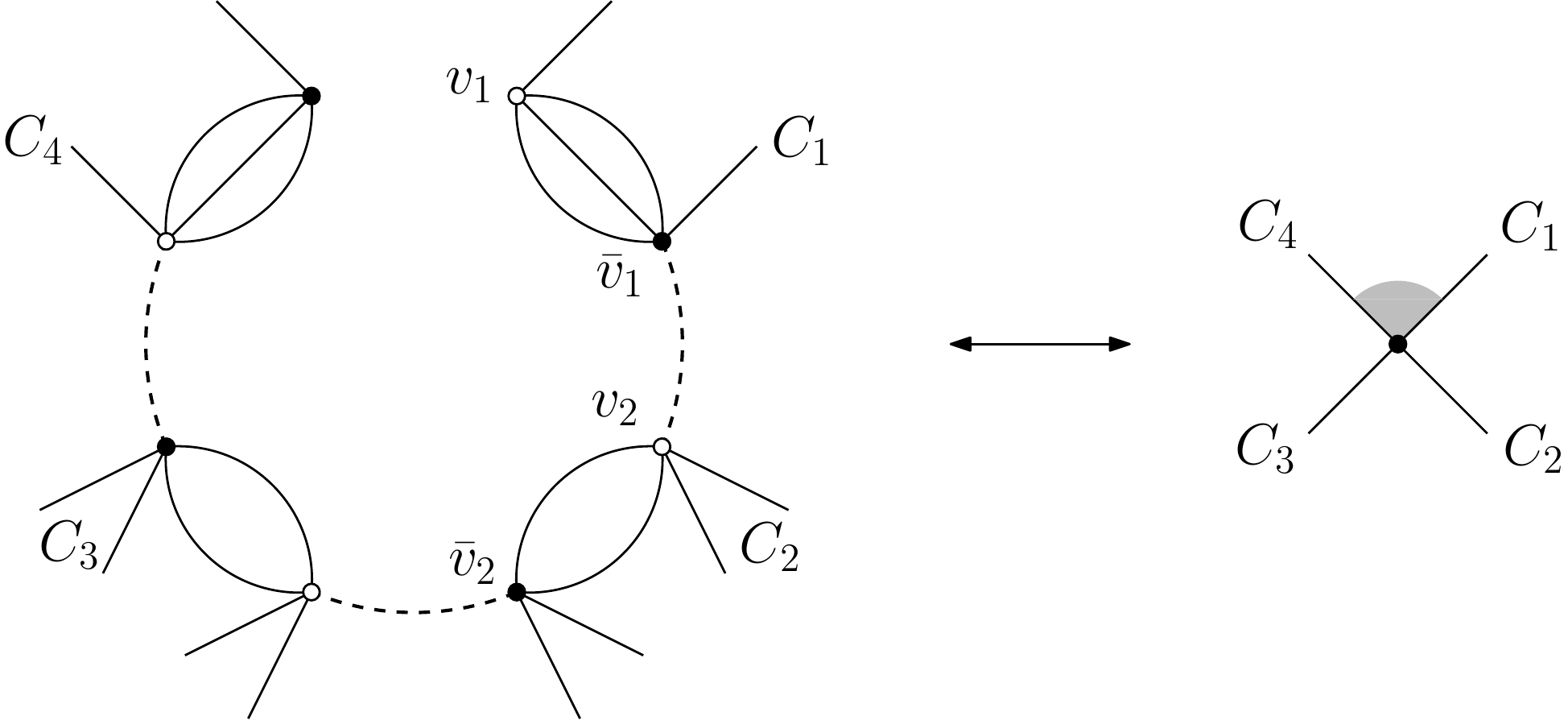} \end{array}
\end{equation}

Through the intermediate field representation, quartic bubbles connected along dashed lines are turned into combinatorial maps with a color set $C_e$ for every edge $e$ and possibly one marked corner per vertex. Denote $\mathbbm{M}_V$ the set of such maps with $V/2$ marked corners.

\paragraph*{Bijection with plane trees --} What we want to do now is characterize the set $\mathbbm{G}_V$ using the intermediate field representation. 

\begin{proposition} \label{thm:IntField}
Let $\mathbbm{T}_V \subset \mathbbm{M}_V$ be the set of plane trees with $V/2$ vertices and exactly one marked corner per vertex. Then there is a bijection $J_V\,: \mathbbm{G}_V \to \mathbbm{T}_V$, and we denote
\begin{equation}
J\ :\ \bigcup_V \mathbbm{G}_V \, \to\, \bigcup_V \mathbbm{T}_V
\end{equation}
such that $J$ restricted to $\mathbbm{G}_V$ is $J_V$. It is such that bubbles are mapped to edges, canonical pairs of vertices to half-edges and dashed lines to unmarked corners\footnote{We can check that there is the expected number of such corners. Indeed, the total number of corners is the number of half-edges, i.e. $2(V/2-1) = V-2$. Then one has to remove the marked corner at each vertex, yielding $V-2-V/2 = V/2 -2$.}.
\end{proposition}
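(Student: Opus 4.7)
The approach I would take is to prove the bijection by induction on $V$, using the recursive decomposition of $\mathbbm{G}_V$ established in Proposition \ref{thm:G_V}(c). The base case is $V=4$: a graph in $\mathbbm{G}_4$ is a single quartic bubble $Q_C$ with no dashed lines, whose intermediate field representation is a single edge decorated by $C$ and attached to two univalent vertices, each carrying its (unique) marked corner — so the image lies in $\mathbbm{T}_4$ and the correspondence is plainly a bijection at this level.

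For the inductive step, assume $J_{V-2}$ is a bijection $\mathbbm{G}_{V-2}\to\mathbbm{T}_{V-2}$. Given $G\in\mathbbm{G}_V$, Proposition \ref{thm:G_V}(c) lets us write $G$ as a $G'\in\mathbbm{G}_{V-2}$ to which a new quartic bubble $Q_C$ has been attached by a single dashed line at some vertex $w$ of $G'$ (with $w$ previously having no incident dashed line). Under $J_{V-2}$, $w$ sits in some IF vertex $V'$ of the plane tree $J(G')$, contributing to its marked corner (since $w$ was a path endpoint). Attaching the dashed line converts that path endpoint into an interior step, extends the path through the two canonical pairs of $Q_C$, and creates a new path endpoint at the other side of $Q_C$. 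In tree language, the marked corner at $V'$ is replaced by the insertion of a new edge decorated $C$ leading to a new leaf, whose single corner is marked. This is exactly the operation of adding a leaf to a plane tree in $\mathbbm{T}_{V-2}$, producing an element of $\mathbbm{T}_V$; conversely, every leaf-attachment on a plane tree in $\mathbbm{T}_{V-2}$ comes from a unique choice of attachment vertex in $G'$ and color set $C$, so the inductive bijection extends.

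To make the map and its inverse explicit and to confirm the correspondence in the statement, I would describe them directly: bubbles $\leftrightarrow$ edges (colored by $C$), canonical pairs $\leftrightarrow$ half-edges, dashed lines $\leftrightarrow$ unmarked corners (between consecutive half-edges in the cyclic order) and path endpoints $\leftrightarrow$ the one marked corner per vertex (the ``gap'' where the path fails to close into a cycle). A quick counting check confirms consistency: $G\in\mathbbm{G}_V$ has $V/2-1$ quartic bubbles and $V/2-2$ dashed lines, while a plane tree in $\mathbbm{T}_V$ has $V/2$ vertices and $V/2-1$ edges, with total corners $V-2$ split as $V/2$ marked and $V/2-2$ unmarked. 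The inverse is constructed by placing a quartic bubble $Q_{C_e}$ at each edge $e$ and, at each vertex, reading the cyclic order of half-edges and inserting dashed lines exactly at unmarked corners.

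The main point to verify — and what I would treat as the chief obstacle — is that the resulting combinatorial map really is a tree, equivalently that cycles in the IF-``alternating'' procedure never arise in $\mathbbm{G}_V$. This is exactly the edge-cut condition in the definition of $\mathbbm{G}_V$: if some dashed line failed to be an edge-cut, the alternating path through canonical pairs and dashed lines could close into a cycle, producing an IF vertex with no marked corner and a non-tree map. Conversely, once one knows $J(G)$ is connected (inherited from connectedness of $G$) with $V/2$ vertices and $V/2-1$ edges, it is automatically a tree, and the planarity is built in via the cyclic order at IF vertices. This closes the induction and yields the claimed bijection $J_V$.
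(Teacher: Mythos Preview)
Your proof is correct and follows essentially the same approach as the paper: induction on $V$ via the recursive characterization of $\mathbbm{G}_V$ from Proposition~\ref{thm:G_V}(c), with the base case $V=4$ and the inductive step showing that attaching a quartic bubble by a dashed line corresponds to attaching a leaf in the plane tree. Your version is more detailed than the paper's (which is quite terse), and your additional remarks on the explicit inverse and the role of the edge-cut condition are helpful commentary, though the inductive construction already guarantees the tree property without needing that separate argument.
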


\begin{proof}
This is proved by translating the recursion \eqref{BubbleInduction} with the intermediate field. First observe that the plane tree with a single edge and two vertices and marked corners is
\begin{equation}
\begin{array}{c} \includegraphics[scale=.4]{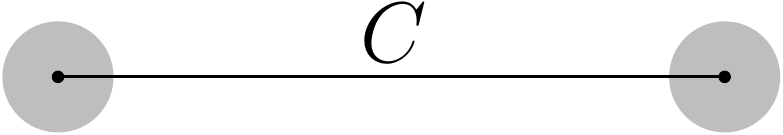} \end{array} = J\left(\begin{array}{c} \includegraphics[scale=.4]{4VertexBubble.pdf} \end{array}\right)
\end{equation}
Then, consider an object $\mathcal{T} \in J(\mathbbm{G}_V)$. It satisfies the following decomposition
\begin{equation}
\mathcal{T} = J(G) = J \left(\begin{array}{c} \includegraphics[scale=.4]{4VertexBubbleInduction.pdf} \end{array}\right) = \begin{array}{c} \includegraphics[scale=.4]{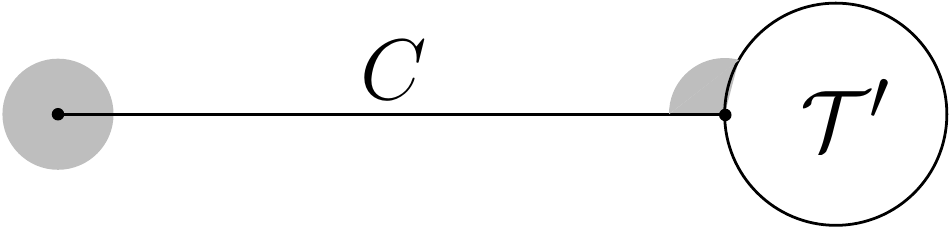} \end{array}.
\end{equation}
with $\mathcal{T}'\in J(\mathbbm{G}_{V-2})$. This defines the family of sets $\mathbbm{T}_V$ recursively.
\end{proof}

\subsection{The partition function as a matrix model}

We can state Theorem \ref{thm:MatrixModel} precisely.

\begin{theorem*}[\ref{thm:MatrixModel}]
Let $\mathcal{T}\in\mathbbm{T}_V$ such that $\partial J^{-1} \mathcal{T} = B$ is a GM bubble, with each edge $e\in\mathcal{T}$ decorated by a color set $C_e$. Associate to each half-edge $h$ a complex matrix $X_h$, such that if an edge $e=h_1 h_2$ is made of $h_1$ and $h_2$ then $X_{h_1} = X_e$ and $X_{h_2} = X^\dagger_e$. The matrix $X_e$ acts on the indices with colors in $C_e$,
\begin{equation}
X_e \in V_{C_e}^*\otimes V_{C_e} = \bigotimes_{c\in C_e} V^*_c\otimes V_c
\end{equation}
and it is extended to $\tilde{X}_e\in E_d^*\otimes E_d$ by a trivial action on $\widehat{C}_e$,
\begin{equation}
\tilde{X}_e = X_e \bigotimes_{c\in \widehat{C}_e} \mathbbm{1}_{V_c}.
\end{equation}
Set $\epsilon_h=1$ except for one half-edge where it is $-1$. Then
\begin{equation} \label{NewMatrixModel}
Z_N(t_B) = \int \prod_{e\in\mathcal{T}} dX_e dX_e^\dagger\ \exp -\sum_{e\in\mathcal{T}} \tr_{V_{C_e}} (X_e X^\dagger_e) - \tr_{E_d}\ln \Biggl(\mathbbm{1}_{E_d} - \sum_{v\in\mathcal{T}} \prod_{h_v}^{\substack{\text{counter-}\\\text{-clockwise}}} (N^{s_B} t_B)^{\frac{1}{V-2}} \epsilon_h\tilde{X}_{h_v} \Biggr) 
\end{equation}
where the sum in the logarithm is over the vertices $v\in\mathcal{T}$, and the product over the half-edges $h_v$ incident to $v$.
\end{theorem*}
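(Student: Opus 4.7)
The plan is to derive \eqref{NewMatrixModel} by iterated Hubbard--Stratonovich transformations along the tree $\mathcal{T}$ provided by Proposition \ref{thm:IntField}, followed by the complex Gaussian integration over $(T,\bar T)$. The starting point is a combinatorial identity expressing $B(T,\bar T)$ as a single trace over $E_d$ guided by $\mathcal{T}$. Every half-edge $h\in\mathcal{T}$ belongs to an edge $e$ of color set $C_e$ and determines a canonical pair of vertices $\{v,\pi_B(v)\}$ of $B$ connected by the colors of $\widehat{C}_e$; from this pair one reads the matrix $M_h = (T_{\pi_B(v)}|T_v)_{\widehat{C}_e}\in V_{C_e}^*\otimes V_{C_e}$, extended trivially on $V_{\widehat{C}_e}$ to $\tilde M_h\in E_d^*\otimes E_d$. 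I would first establish
\[
B(T,\bar T) \;=\; \tr_{E_d}\Bigl(\sum_{v\in\mathcal{T}}\;\prod_{h_v}^{\text{counter-clockwise}} \tilde M_{h_v}\Bigr)
\]
by induction on $V$, using the recursive characterization \eqref{BubbleInduction} of Proposition \ref{thm:G_V}: the base case $V=4$ is precisely \eqref{QuarticMatrix}, and the inductive step of attaching a leaf edge $Q_C$ to a subtree $\mathcal{T}'$ matches, via \eqref{BoundaryBidipole}, a $C$-bidipole insertion on the image of the attachment vertex in $B'=\partial J^{-1}\mathcal{T}'$. Uniqueness of the canonical pairing $\pi_B$ then guarantees that the identity is well-defined and independent of the chosen sequence of insertions.

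Next I would apply the Hubbard--Stratonovich transformation edge by edge, introducing one complex matrix $X_e\in V_{C_e}^*\otimes V_{C_e}$ per edge $e$ and distributing the coupling as $(N^{s_B}t_B)^{1/(V-2)}$ per half-edge. For each edge $e=\{h_1,h_2\}$ the identity
\[
\int dX_e\,dX_e^\dagger\; e^{-\tr_{V_{C_e}} X_e X_e^\dagger + \alpha\,\tr X_e A_1 + \alpha\,\tr X_e^\dagger A_2} \;\propto\; e^{\alpha^2\,\tr(A_1 A_2)}
\]
replaces the contraction between $\tilde M_{h_1}$ and $\tilde M_{h_2}$ inside the cyclic trace by the insertion of $\tilde X_e$ on one side and $\tilde X_e^\dagger$ on the other, with the overall power of the coupling split symmetrically. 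Carried out simultaneously on all edges, this converts the identity of the previous step into
\[
-N^{s_B}t_B\,B(T,\bar T) \;=\; (T|\mathcal{A}|T)\;+\;(\text{terms absorbed into the}\ X\text{-measure}),
\]
where $\mathcal{A}=\sum_{v\in\mathcal{T}}\prod_{h_v}^{\text{counter-clockwise}}(N^{s_B}t_B)^{1/(V-2)}\epsilon_h\,\tilde X_{h_v}$; a single half-edge carries the overall sign required to obtain $\mathbbm{1}_{E_d}-\mathcal{A}$ rather than $\mathbbm{1}_{E_d}+\mathcal{A}$, which explains the constraint $\prod_h\epsilon_h=-1$.

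Finally I integrate out $(T,\bar T)$ as a standard complex Gaussian,
\[
\int dT\,d\bar T\; e^{-(T|(\mathbbm{1}_{E_d}-\mathcal{A})|T)} \;=\; \det\bigl(\mathbbm{1}_{E_d}-\mathcal{A}\bigr)^{-1} \;=\; \exp\Bigl(-\tr_{E_d}\ln(\mathbbm{1}_{E_d}-\mathcal{A})\Bigr),
\]
which, together with the residual matrix Gaussian weight $\exp(-\sum_e\tr_{V_{C_e}} X_e X_e^\dagger)$ from the Hubbard--Stratonovich measure, reproduces the right-hand side of \eqref{NewMatrixModel}.

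The main obstacle is the combinatorial identity of the first step: one must check that the canonical pairing $\pi_B$, the counter-clockwise ordering at each vertex of $\mathcal{T}$, and the color-set assignments $C_e$ provided by Proposition \ref{thm:IntField} combine so that the cyclic $E_d$-trace literally performs the index contractions along the edges of colors $\{1,\ldots,d\}$ in $B$. Once that identity is in place, the Hubbard--Stratonovich and Gaussian steps are routine, and the fractional power $(N^{s_B}t_B)^{1/(V-2)}$ is forced simply by the fact that $\mathcal{T}$ has $V-2$ half-edges, consistent with the scaling of Theorem \ref{thm:Scaling}.
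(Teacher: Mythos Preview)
Your first step is where the argument breaks down. The identity
\[
B(T,\bar T)\;=\;\tr_{E_d}\Bigl(\sum_{v\in\mathcal{T}}\,\prod_{h_v}^{\text{ccw}}\tilde M_{h_v}\Bigr)
\]
is false already for $V=4$. There $\mathcal{T}$ has two leaf vertices, each carrying a single half-edge, so your right-hand side is $\tr_{E_d}(\tilde M_{h_1}+\tilde M_{h_2})=2N^{|\widehat C|}(T|T)$, whereas \eqref{QuarticMatrix} gives $Q_C=\tr_{V_C}(M_{h_1}M_{h_2})$, a trace of a \emph{product}. More generally, $B(T,\bar T)$ is a single monomial of degree $V$ in $(T,\bar T)$, while your trace of a sum of products of the $\tilde M_h$'s is an inhomogeneous polynomial whose terms have degrees $2\deg(v)$ for the various tree-vertices $v$. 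The subsequent Hubbard--Stratonovich step also cannot work as you describe: for an edge $e=\{h_1,h_2\}$ the two halves $\tilde M_{h_1},\tilde M_{h_2}$ sit at \emph{different} tree-vertices and hence in different summands, so there is no ``contraction between $\tilde M_{h_1}$ and $\tilde M_{h_2}$ inside the cyclic trace'' to undo. Finally, a degree count shows that no amount of edge-by-edge HS on $B(T,\bar T)$ alone can yield an action quadratic in the single pair $(T,\bar T)$: you have $V/2-1$ edges, each HS lowers the degree by $2$, and $V-2(V/2-1)=2$ only when the $M_h$ you linearize against are themselves quadratic in $(T,\bar T)$---which they are not once several half-edges share tensors at the same tree-vertex.

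What the paper does instead is to first \emph{re-introduce} the $V/2-2$ dashed lines of $G\in\mathbbm{G}_V$ as auxiliary tensor pairs $(T_\kappa,\bar T_\kappa)$, one per unmarked corner of $\mathcal{T}$ (Lemma~\ref{thm:Zint1}), so that the interaction becomes a genuine sum of quartic bubbles $Q_{C_e}$ in the enlarged family $(T,T_1,\dotsc,T_{V/2-2})$. The HS transformation of Lemma~\ref{thm:QuarticBubbleHS} is then applied to each of those quartics, producing an action quadratic in the \emph{whole} family of tensors. Integrating all of them out gives a block $(V/2{-}1)\times(V/2{-}1)$ determinant over $E_d^*\otimes E_d$, whose row/column structure is governed by the corner adjacencies in $\mathcal{T}$; the nontrivial computation (Lemma~\ref{thm:Determinant}) collapses this block determinant to the single $E_d$-determinant $\det\bigl(\mathbbm{1}_{E_d}-\sum_v\prod_{h_v}\epsilon_h\tilde X_{h_v}\bigr)$. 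Your final Gaussian step and your bookkeeping of the coupling $(N^{s_B}t_B)^{1/(V-2)}$ are fine, but they must sit on top of this auxiliary-tensor construction and the block-determinant reduction, not on the incorrect single-trace identity.
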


The elements $X_{h_v}$ around a given vertex $v$ are non-commutative and the order of the product is the counter-clockwise order, say starting from the marked corner.

To prove this theorem, we will use the Hubbard-Stratonovich transformation to rewrite the partition function on $G\in\mathbbm{G}_V$ such that $B=\partial G$, then use the bijection of Proposition \ref{thm:IntField} and rewrite it again on $\mathcal{T}=J(G)$ this time, then use Hubbard-Stratonovich on each edge of $\mathcal{T}$ and finally integrate all tensors.

Let us package all Hubbard-Stratonovich transformations into two lemmas.

\paragraph*{Tensor assignments --} 
Let $B$ be a bubble with $V$ vertices and a graph $G\in\mathbbm{G}_V$ such that $\partial G = B$. We use the notation $v$ for any vertex of $G$ with no incident dashed line, and $\ell$ for a dashed line and $u$ for an arbitrary vertex of $G$. We decorate $G$ with the following tensor assignments.
\begin{itemize}
\item Assign a tensor $T_v$ to each white vertex with no incident dashed line, and $\bar{T}_v$ for black vertices with no incident dashed line. 
\item To each dashed line $\ell = 1, \dotsc, V/2-2$, associate the pair of tensors $(T_\ell, \bar{T}_\ell)$, $T_\ell$ to the white end and $\bar{T}_\ell$ to the black end.
\item Denote generally $T_u$ the tensor assigned on the vertex $u\in G$ (which can be a $T_v$ or a $T_\ell$) and $\bar{T}_{u}$ on a black vertex, according to the rules above.
\item Label the quartic bubbles $j=1, \dotsc, V/2-1$ and $\{u_j, \bar{u}_j\}$ and $\{u'_j, \bar{u}'_j\}$ the two canonical pairs of vertices of $Q_{C_j}$, such that $u_j$ and $u'_j$ are white and $\bar{u}_j = \pi_{Q_{C_j}}(u_j)$ and $\bar{u}'_j = \pi_{Q_{C_j}}(u'_j)$ are black.
\end{itemize}

\begin{lemma} \label{thm:Zint1}
Let $G\in\mathbbm{G}_V$ such that $\partial G=B$. Then, with the notations defined above,
\begin{equation}
\int \prod_{\ell=1}^{V/2-2} dT_\ell d\bar{T}_\ell\ \exp - \sum_{\ell=1}^{V/2-2} (T_\ell |T_\ell) + t^{\frac{2}{V-2}} \sum_{j=1}^{V/2-1} \epsilon_j Q_{C_j}(T_{u_j}, \bar{T}_{\bar{u}_j}; T_{u'_j}, \bar{T}_{\bar{u}'_j}) = \exp \left(\prod_{j=1}^{V/2-1} \epsilon_j\right)\, t\, B(\{T_v\})
\end{equation}
where every $\epsilon_j$ is a complex number.
\end{lemma}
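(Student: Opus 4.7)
The plan is to perform the Gaussian integrals over the pairs $(T_\ell,\bar T_\ell)$ iteratively, peeling off the quartic bubbles of $G$ one at a time in the leaf-first order provided by the tree characterization of Proposition~\ref{thm:G_V}(c). The crucial observation is that each bubble polynomial $Q_{C_j}$ is multilinear in its four tensor arguments, so every $T_\ell$ and every $\bar T_\ell$ enters the exponent \emph{linearly}. The combined dependence of the exponent on a single pair $(T_\ell,\bar T_\ell)$ is therefore of the form $-(T_\ell|T_\ell)+\alpha\epsilon_{j}(L_j|\bar T_\ell)+\alpha\epsilon_{j'}(T_\ell|\bar L_{j'})$ for appropriate tensors $L_j,\bar L_{j'}$ built from the other tensors attached to the two incident quartic bubbles $Q_{C_j},Q_{C_{j'}}$; this is a standard complex Gaussian (factorizing componentwise over the $N^d$ indices of $T_\ell$) that evaluates to $\exp\bigl(\alpha^2\epsilon_j\epsilon_{j'}(L_j|\bar L_{j'})\bigr)$ up to an irrelevant constant.

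I would then set up an induction on $V$. The base case $V=4$ has no dashed lines and requires no integration: $t^{2/(V-2)}=t$, $\prod_j\epsilon_j=\epsilon_1$ and $B=Q_{C_1}$ make both sides trivially equal to $\exp(\epsilon_1\,t\,B(\{T_v\}))$. For $V>4$, I would use Proposition~\ref{thm:G_V}(c) to decompose $G=G'\cup_{\ell_1}Q_{C_1}$ with $Q_{C_1}$ a leaf quartic bubble and $G'\in\mathbbm{G}_{V-2}$, then apply the Gaussian computation above to the pair $(T_{\ell_1},\bar T_{\ell_1})$. The output is a new exponent in which the two quartic terms $\alpha\epsilon_1 Q_{C_1}$ and $\alpha\epsilon_{j_2}Q_{C_{j_2}}$ have fused into a single term $\alpha^2\epsilon_1\epsilon_{j_2}\,\widetilde Q$, where $\widetilde Q$ is the polynomial attached to the $6$-vertex bubble $\partial(Q_{C_1}\cup\ell_1\cup Q_{C_{j_2}})$ obtained by contracting $\ell_1$; this identification is exactly the boundary operation of Equation~\eqref{BoundaryBidipole}.

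To close the recursion cleanly I would prove the slightly stronger statement in which the $V/2-1$ ``bubbles'' appearing in the exponent may be arbitrary, not necessarily quartic, bubbles glued into a tree along the dashed lines, with a general coefficient $\alpha_j\epsilon_j$ in front of each. Multilinearity of bubble polynomials in their vertex arguments still guarantees that every $T_\ell,\bar T_\ell$ enters linearly in its unique host polynomial, so the leaf-peeling Gaussian integration runs through unchanged; after one peel the remaining integrand describes a tree of $V/2-2$ bubbles (one of them the merged $6$-vertex bubble now carrying coefficient $\alpha^2\epsilon_1\epsilon_{j_2}$), to which the inductive hypothesis applies. Composing the contractions returns the final exponent $\bigl(\prod_j\alpha_j\bigr)\bigl(\prod_j\epsilon_j\bigr)B(\{T_v\})$, and substituting $\alpha_j=t^{2/(V-2)}$ so that $\prod_j\alpha_j=t$ gives the claim. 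The main bookkeeping task will be to check that the iterated Wick contractions assemble into exactly the contraction pattern of $B=\partial G$ rather than some permutation of it, but this is essentially forced by Equation~\eqref{BoundaryBidipole} and the bijective nature of the Wick pairing $T_\ell\leftrightarrow\bar T_\ell$, which identifies indices in precisely the same way as a dashed-line contraction.
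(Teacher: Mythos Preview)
Your proposal is correct and follows essentially the same approach as the paper: induction on $V$ via the leaf decomposition of Proposition~\ref{thm:G_V}(c), combined with the Hubbard--Stratonovich identity to carry out the Gaussian integrals over the $(T_\ell,\bar T_\ell)$. The only difference is the order of the two operations in the inductive step. The paper first applies the induction hypothesis to all of $G'$ (collapsing the $V/2-3$ internal integrals into $\exp\bigl((\prod_j\epsilon_j)\,t^{(V-4)/(V-2)}\,B'(\{T_v\})\bigr)$), and then performs a single remaining Gaussian integral over the dashed line joining $B'$ to the leaf quartic bubble; because $B'$ is already a single bubble at that point, no strengthening of the statement is needed. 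You instead integrate the leaf line first, which fuses two quartic bubbles into a sextic one and forces you to broaden the induction to trees of arbitrary bubbles with arbitrary coefficients $\alpha_j$. Both orders work; the paper's order is marginally cleaner in that it keeps the induction hypothesis exactly in the form of the lemma.
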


\begin{proof}
We proceed by induction on the number of vertices of $B$. When $B$ is quartic, we simply have $G=B$, there is no dashed line in $G$ so no integral on the left hand side. On both sides only the term $j=1$ exists and the lemma is then trivial. Then assume that the theorem holds for all bubbles with $V-2$ vertices and fewer, and consider $B$ with $V$ vertices and $\partial G=B$.

Consider the decomposition \eqref{BubbleInduction} with the following notations
\begin{equation}
\begin{array}{c} \includegraphics[scale=.4]{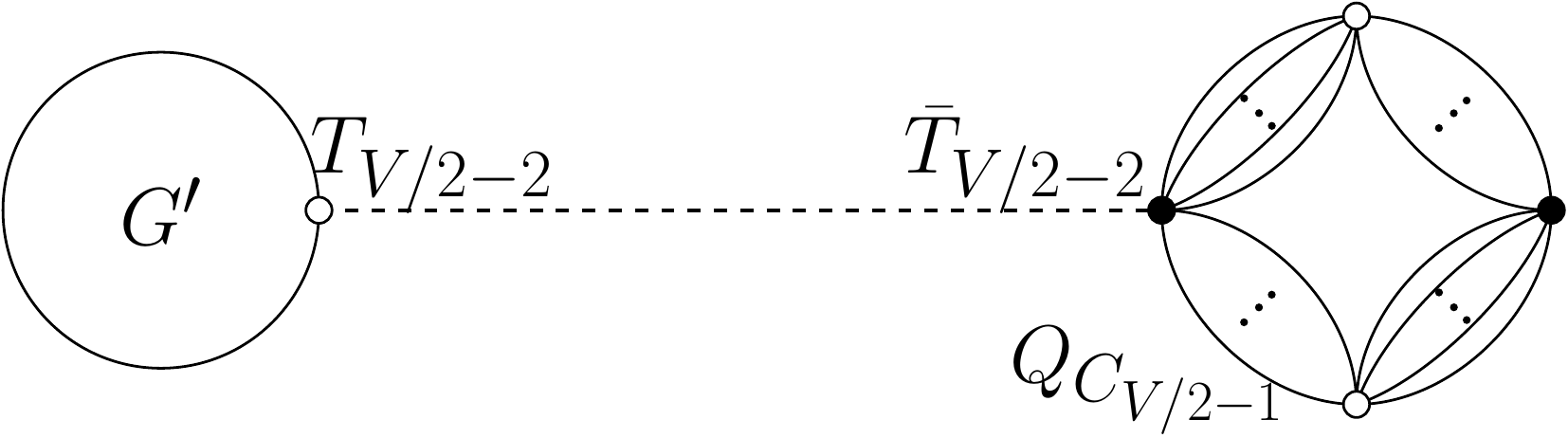} \end{array}
\end{equation}
i.e. the quartic bubble has label $j=V/2-1$ and the dashed line connecting $G'$ to it has label $\ell=V/2-2$. From the induction hypothesis, we first get
\begin{equation}
\int \prod_{\ell=1}^{V/2-3} dT_\ell d\bar{T}_\ell\ \exp - \sum_{\ell=1}^{V/2-3} (T_\ell |T_\ell) + t^{\frac{2}{V-2}} \sum_{j=1}^{V/2-2} \epsilon_j Q_{C_j}(T_{u_j}, \bar{T}_{\bar{u}_j}; T_{u'_j}, \bar{T}_{\bar{u}'_j}) = \exp \left(\prod_{j=1}^{V/2-3} \epsilon_j\right)\,t^{\frac{V-4}{V-2}}\, B'(\{T_v\})
\end{equation}
where $B' = \partial G'$ is a GM bubble with $V-2$ vertices.

It remains to basically find an analytical way to glue $B'$ to $Q_{C_{V/2-2}}$. This is the Hubbard-Stratonovich transformation: for all complex numbers $Z_1, Z_2$
\begin{equation} \label{HS}
\int dz d\bar{z}\ \exp \bigl(-z \bar{z} - z Z_1 + \bar{z} Z_2\bigr) = \exp - Z_1 Z_2.
\end{equation}
This directly implies that
\begin{multline} \label{HSTensor}
\int dT_{V/2-2} d\bar{T}_{V/2-2}\ \exp -(T_{V/2-2}|T_{V/2-2}) + t^{\frac{2}{V-2}}\,\epsilon_{V/2-1}\, Q_{C_{V/2-1}}(\{T_u\}) + t^{\frac{V-4}{V-2}}\,\left(\prod_{j=1}^{V/2-3} \epsilon_j\right)\, B'(\{T_u\}) \\
= \exp \left(\prod_{j=1}^{V/2-2} \epsilon_j\right)\,t\,B(T, \bar{T}).
\end{multline}
as desired.
\end{proof}

\begin{lemma} \label{thm:QuarticBubbleHS}
Consider a quartic bubble polynomial $Q_C(A_1, \bar{A}_2; A_3, \bar{A}_4)$, then
\begin{equation}
\exp \epsilon\, t\,Q_C(A_1, \bar{A}_2; A_3, \bar{A}_4) = \int dX\,dX^\dagger\ \exp \tr_{V_C}\Bigl(- XX^\dagger + \sqrt{t}\, (A_2|A_1)_{\widehat{C}}\,X + \epsilon\,\sqrt{t}\, (A_4|A_3)_{\widehat{C}}\,X^\dagger \Bigr)
\end{equation}
for $\epsilon=\pm 1$ and the integral is over matrices $X\in V^*_C\otimes V_C$.
\end{lemma}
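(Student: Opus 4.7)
The plan is to establish this as the direct matrix generalization of the scalar Hubbard--Stratonovich identity \eqref{HS} already used in the proof of Lemma \ref{thm:Zint1}. First I would recall the matrix form of the quartic bubble polynomial from \eqref{QuarticMatrix}, namely
\begin{equation*}
Q_C(A_1, \bar{A}_2; A_3, \bar{A}_4) = \tr_{V_C}\bigl((A_2|A_1)_{\widehat{C}}\,(A_4|A_3)_{\widehat{C}}\bigr),
\end{equation*}
so that both $(A_2|A_1)_{\widehat{C}}$ and $(A_4|A_3)_{\widehat{C}}$ are $N^{|C|}\times N^{|C|}$ matrices acting on $V_C$, the very same space on which $X$ acts. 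This puts the right-hand side in the standard ``matrix Gaussian with a source'' form.

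Next I would establish the matrix Hubbard--Stratonovich identity: for any two fixed matrices $M_1, M_2 \in V_C^*\otimes V_C$,
\begin{equation*}
\int dX\, dX^\dagger\ \exp\tr_{V_C}\bigl(-XX^\dagger + M_2\,X + M_1\,X^\dagger\bigr) = \exp\tr_{V_C}(M_1 M_2).
\end{equation*}
This follows in one line from \eqref{HS} by decomposing $X$ into its $N^{2|C|}$ independent complex entries $X_{ij}$ and applying \eqref{HS} entrywise, with $Z_1$ and $Z_2$ the corresponding matrix elements of $-M_2$ and $M_1$ (up to cyclicity of the trace). Equivalently, one shifts $X\mapsto X+M_1$ and $X^\dagger\mapsto X^\dagger+M_2$ (these being independent integration variables in the complex-matrix measure), which absorbs the source terms and leaves the normalized Gaussian $\int dX\, dX^\dagger\, e^{-\tr_{V_C} XX^\dagger}=1$ together with the constant $\exp\tr_{V_C}(M_1 M_2)$.

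Finally I would specialize to $M_1 = \epsilon\sqrt{t}\,(A_4|A_3)_{\widehat{C}}$ and $M_2 = \sqrt{t}\,(A_2|A_1)_{\widehat{C}}$. The right-hand side of the lemma then equals
\begin{equation*}
\exp\tr_{V_C}\bigl(\epsilon\,t\,(A_4|A_3)_{\widehat{C}}\,(A_2|A_1)_{\widehat{C}}\bigr),
\end{equation*}
and cyclicity of the trace together with \eqref{QuarticMatrix} identifies the argument with $\epsilon\,t\,Q_C(A_1,\bar{A}_2;A_3,\bar{A}_4)$, giving the claim.

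There is no real obstacle here: the only things to verify are purely bookkeeping, namely that the matrices $(A_i|A_j)_{\widehat{C}}$ live on the same space $V_C$ as $X$ so that the products and traces make sense, and that the overall sign $\epsilon$ is carried along unchanged by the shift. No convergence question arises because the $A_i$ are fixed external tensors and the integral is just a finite-dimensional Gaussian in $X$.
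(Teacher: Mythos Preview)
Your proposal is correct and follows essentially the same approach as the paper: write $Q_C$ as the trace $\tr_{V_C}\bigl((A_2|A_1)_{\widehat{C}}(A_4|A_3)_{\widehat{C}}\bigr)$ via \eqref{QuarticMatrix} and then apply the Hubbard--Stratonovich transformation \eqref{HS} with a matrix intermediate field in $V_C^*\otimes V_C$. The paper's proof is a one-line reference to these two ingredients, whereas you have spelled out the entrywise reduction to \eqref{HS} and the final specialization explicitly, but the argument is the same.
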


The Hubbard-Stratonovich transformation cuts the quartic interaction in two. There are two ways to do so, as in Equation \eqref{CutQuartic}, which differ by the exchange $C\leftrightarrow \widehat{C}$. The best way is the one that will integrate as many degrees of freedom as possible, i.e. leaves us with traces of matrices of dimensions as small as possible. We thus cut along the degrees of freedom from the color sets $C$ (which we recall satisfy $|C|\leq d/2$).

\begin{proof}
Simply write the quartic polynomial as in \eqref{QuarticMatrix}, $Q_C(A_1, \bar{A}_2; A_3, \bar{A}_4) = \tr_{V_C} \Bigl((A_2|A_1)_{\widehat{C}} (A_4|A_3)_{\widehat{C}}\Bigr)$. Since it is a trace over $V_C$, we can then apply the Hubbard-Stratonovich transformation \eqref{HS} such that the intermediate field is a matrix $X\in V^*_C\otimes V_C$.
\end{proof}

\begin{proof}[Proof of Theorem \ref{thm:MatrixModel}]
We start by applying Lemma \ref{thm:Zint1} with $T_v = T$ for all white vertices with no dashed line and $\bar{T}_v=\bar{T}$ for all black vertices with no dashed line, $t = N^{s_B} t_B$ and $\epsilon_j=1$ except one which is $-1$, and integrate over $T, \bar{T}$. This leads to the following expression for the partition function
\begin{equation} \label{Zint1}
Z_N(t_B) = \int dT d\bar{T} \prod_{\ell=1}^{V/2-2} dT_\ell d\bar{T}_\ell\ \exp - (T|T) - \sum_{\ell=1}^{V/2-2} (T_\ell |T_\ell) + \bigl(N^{s_B} t_B\bigr)^{\frac{2}{V-2}} \sum_{j=1}^{V/2-1} \epsilon_j Q_{C_j}(T_{u_j}, \bar{T}_{\bar{u}_j}; T_{u'_j}, \bar{T}_{\bar{u}'_j}).
\end{equation}

We now switch to the plane tree representation of GM bubbles to rewrite \eqref{Zint1}. For a graph $G\in \mathbbm{G}_V$, the pairs of tensors $\{T_\ell, \bar{T}_\ell\}_{\ell=1, \dotsc, V/2-2}$ are associated to the dashed lines of $G$: $T_\ell$ to one half of $\ell$ and $\bar{T}_{\ell}$ to the other half. In $\mathcal{T} = J(G)$, a dashed line $\ell$ corresponds to an unmarked corner $\kappa=J(\ell)$ of the tree and a half dashed line is in $\mathcal{T}$ a pair of a corner and an edge. If $\kappa= J(\ell)$ is a corner incident to the edges $e,e'$ we denote $T_\kappa$ the tensor associated to, say, the pair $\{\kappa,e\}$ and $\bar{T}_\kappa$ the tensor associated to $\{\kappa,e'\}$. This is as follows
\begin{equation}
\begin{array}{c} \includegraphics[scale=.45]{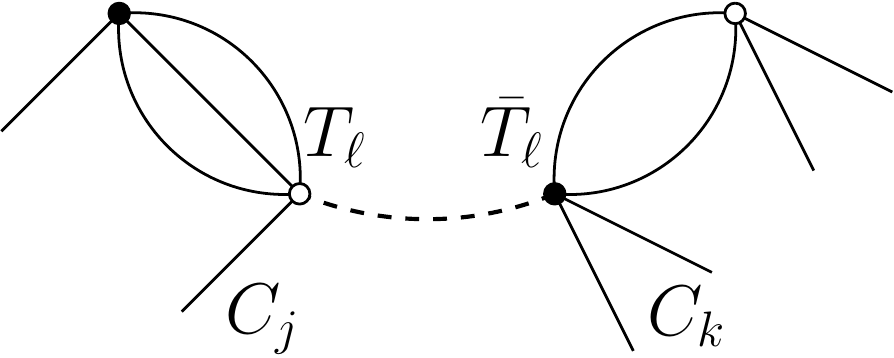} \end{array} \qquad \to \qquad \begin{array}{c} \includegraphics[scale=.45]{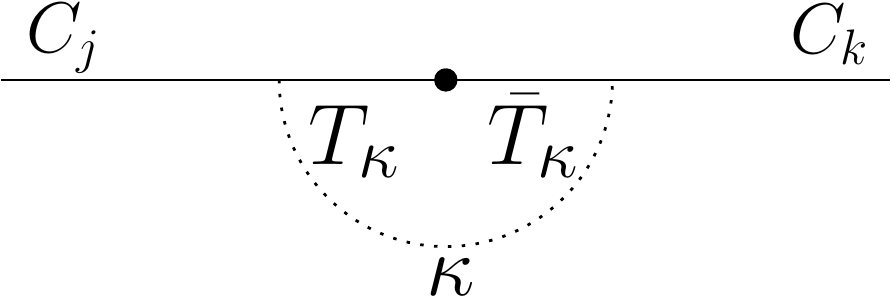} \end{array}
\end{equation}
where the dotted arch represents the corner $\kappa$. For convenience, one can use the counter-clockwise order around the vertex to orient the corner and decide that one encounters $T_\kappa$ first, then $\bar{T}_\kappa$.


In addition to the labels $\kappa=1, \dotsc, V/2-2$ of unmarked corners of $\mathcal{T}$, we label as a whole the marked corners by $\kappa=0$ and set $T_0=T$, $\bar{T}_0=\bar{T}$.


In $\mathcal{T}$, the quartic bubbles correspond to the edges. The quartic interactions are thus associated to the edges of $\mathcal{T}$. An edge $e$ is incident to four corners $\kappa_{e1}, \kappa_{e2}, \kappa_{e3}, \kappa_{e4}$ and the interaction is
\begin{equation}
\begin{array}{c} \includegraphics[scale=.5]{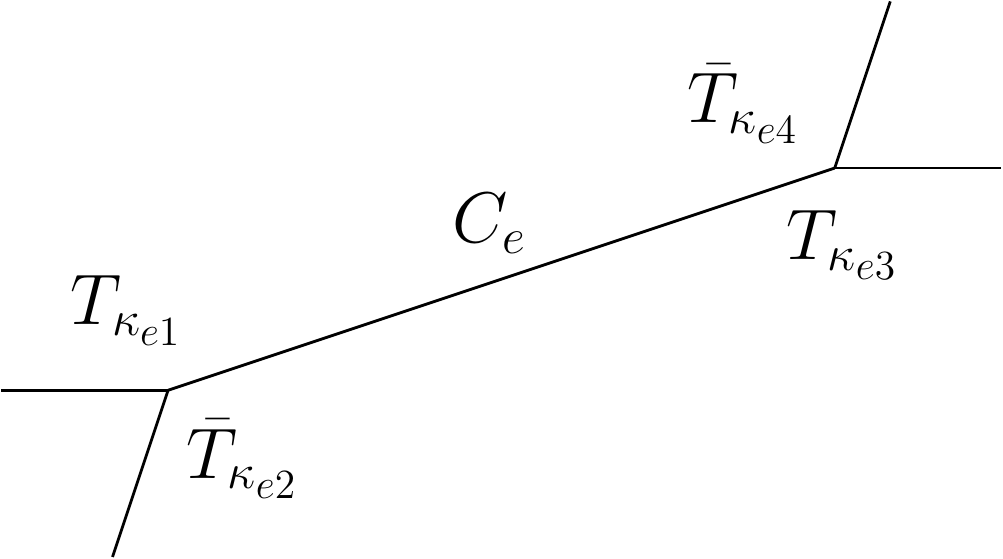} \end{array} \qquad \to \qquad 
I_e(\{T_{\kappa}, \bar{T}_{\kappa}\}) = \epsilon_e\ Q_{C_e}(T_{\kappa_{e1}}, \bar{T}_{\kappa_{e2}}; T_{\kappa_{e3}}, \bar{T}_{\kappa_{e4}})
\end{equation}
where $\epsilon_e=1$ except for one edge of $\mathcal{T}$.

The partition function \eqref{Zint1} thus becomes
\begin{equation} \label{Zint2}
Z_N(t_B) = \int \prod_{\kappa=0}^{V/2-2} dT_\kappa d\bar{T}_\kappa\ \exp - \sum_{\kappa=0}^{V/2-2} (T_\kappa |T_\kappa) + \bigl(N^{s_B} t_B\bigr)^{\frac{2}{V-2}} \sum_{e\in\mathcal{T}} I_e(\{T_{\kappa}, \bar{T}_{\kappa}\})
\end{equation}
We apply Lemma \ref{thm:QuarticBubbleHS} to each quartic bubble. For an edge $e\in \mathcal{T}$ with color set $C_e$, we obtain
\begin{multline}
e^{(N^{s_B} t_B)^{\frac{2}{V-2}} I_e(\{T_{\kappa}, \bar{T}_{\kappa}\})} \\
= \int dX_e dX^\dagger_e\ \exp \tr_{V_{C_e}} \Bigl(-X_e X^\dagger_e + (N^{s_B} t_B)^{\frac{1}{V-2}} (T_{\kappa_{e2}} | T_{\kappa_{e1}})_{\widehat{C}_e} X_e +  (N^{s_B} t_B)^{\frac{1}{V-2}} \epsilon_e\,X^\dagger_e (T_{\kappa_{e4}} | T_{\kappa_{e3}})_{\widehat{C}_e}\Bigr)
\end{multline}
with $X_e \in V^*_{C_e}\otimes V_{C_e}$ a complex matrix.

Clearly $X_e$ is naturally associated to one end of $e$ and $X^\dagger_e$ to the other. We can therefore associate a matrix $X_h$ to each half-edge $h$, with the constraint
\begin{equation}
X^\dagger_{h_1} = X_{h_2} \qquad \text{when $e = \{h_1, h_2\}$}.
\end{equation}
The edges can be oriented to distinguish between $X_e=X_{h_1}$ and $X_e^\dagger=X_{h_2}$ and the other way around, but there is no need to do so. Split $\epsilon_e$ onto the half-edges as $\epsilon_e = \epsilon_{h_1} \epsilon_{h_2}$. Since all $\epsilon_e$ are equal to 1 except one which is $-1$, we can also choose all $\epsilon_h$ equal to 1 except one which is $-1$.

As a result of the Hubbard-Stratonovich transformations, the action becomes quadratic in the tensors $\{T_u, \bar{T}_u\}$. We now focus on the corresponding quadratic form, so as to integrate the tensors and be left with a matrix model on the matrices $\{X_e, X^\dagger_e\}$. 
Let us denote $\kappa_h$ and $\kappa'_h$ the two corners incident to the half-edge $h$. Then
\begin{equation}
Z_N(t_B) = \int \prod_{e\in\mathcal{T}} dX_e dX^\dagger_e\ \prod_{\kappa=0}^{V/2-2} dT_\kappa d\bar{T}_\kappa\ e^{ -\sum_{\kappa=0}^{V/2-2} (T_\kappa|T_\kappa) -\sum_{e} \tr_{V_{C_e}} X_e X^\dagger_e + (N^{s_B} t_B)^{\frac{1}{V-2}} \sum_{h} \epsilon_h \tr_{V_{C_e}} (T_{\kappa'_h}|T_{\kappa_h}) X_h}
\end{equation}

Before we can write the quadratic form matricially, we need to extend the two-tensor contraction of \eqref{TwoTensorContraction} to the presence of operators. We recall that $E_d = \bigotimes_{c=1}^d V_c$. For $K\in E_d^*\otimes E_d$ we denote
\begin{equation}
(T_2 | K| T_1) = \sum (\bar{T}_2)_{a_1 \dotsb a_d} K_{ a_1 \dotsb a_d; b_1 \dotsb b_d} (T_1)_{b_1 \dotsb b_d} = \begin{array}{c} \includegraphics[scale=.4]{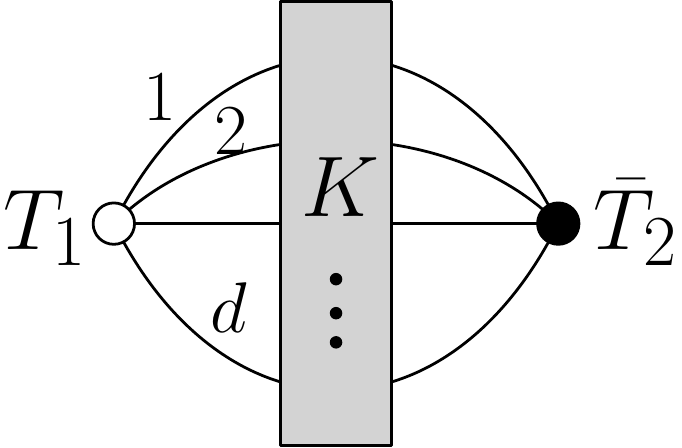} \end{array}
\end{equation}
The matrix $X_h$ is an element of $V^*_{C_e}\otimes V_{C_e}$ and it is canonically extended to $\tilde{X}_h\in E_d^*\otimes E_d$ by completing its action on the complementary colors $\widehat{C}_e$ to be the identity. Therefore $\tilde{X}_h$ can be used as an operator like $K$ above.

The quadratic form can thus be written
\begin{equation}
\sum_{\kappa, \kappa'=0}^{V/2-2} (T_\kappa |\bigl[K_{\mathcal{T}}\bigr]_{\kappa \kappa'} |T_{\kappa'}) = (T|\bigl[K_{\mathcal{T}}\bigr]_{00}|T) + \sum_{\kappa\geq 1} (T|\bigl[K_{\mathcal{T}}\bigr]_{0\kappa}|T_\kappa) + \sum_{\kappa\geq 1} (T_\kappa|\bigl[K_{\mathcal{T}}\bigr]_{\kappa 0}|T) + \sum_{\kappa, \kappa'\geq 1} (T_\kappa|\bigl[K_{\mathcal{T}}\bigr]_{\kappa \kappa'}| T_{\kappa'})
\end{equation}
where each component $\bigl[K_{\mathcal{T}}\bigr]_{\kappa \kappa'} \in E_d^*\otimes E_d$ is a linear operator on the space of tensors. Here the index $\kappa=0$ corresponds to marked corners (since $T_0=T$ only sits on marked corners) and $\kappa\geq 1$ to the $V/2-2$ unmarked corners.

The $\kappa\kappa'$-components, for unmarked corners $\kappa, \kappa'\geq 1$ and $\kappa\neq \kappa'$, are non-vanishing if and only if the two corners $\kappa$ and $\kappa'$ are adjacent in $\mathcal{T}$, i.e. if they meet along a half-edge $h$, and $\kappa$ is right before $\kappa'$ for the counter-clockwise order. Then
\begin{equation}
\bigl[K_{\mathcal{T}}\bigr]_{\kappa \kappa'} = \begin{cases} - (N^{s_B} t_B)^{\frac{1}{V-2}}\ \epsilon_h\, \tilde{X}_h \qquad & \text{if $\begin{array}{c}\includegraphics[scale=.4]{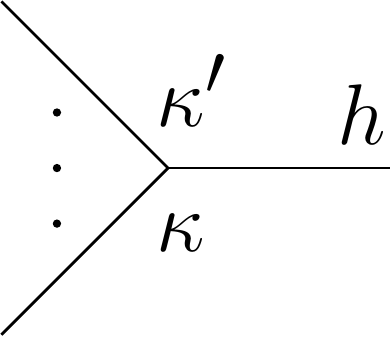}\end{array}$}\\
\mathbbm{1}_{E_d} & \text{if $\kappa=\kappa'$}\\
0 & \text{else}
\end{cases}
\end{equation}
If $\kappa$ or $\kappa'$ is a marked corner, i.e. $\kappa=0$ or $\kappa'=0$, the structure is similar
\begin{equation}
\bigl[K_{\mathcal{T}}\bigr]_{\kappa 0} = \begin{cases} - (N^{s_B} t_B)^{\frac{1}{V-2}}\ \epsilon_h\,\tilde{X}_h & \text{if $\begin{array}{c}\includegraphics[scale=.4]{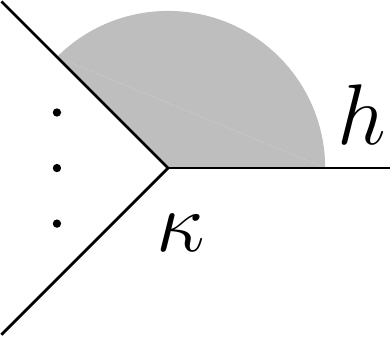}\end{array}$}\\
0 & \text{else}
\end{cases}
\hspace{2cm}
\bigl[K_{\mathcal{T}}\bigr]_{0 \kappa} = \begin{cases} -(N^{s_B} t_B)^{\frac{1}{V-2}}\ \epsilon_h\,\tilde{X}_h & \text{if $\begin{array}{c}\includegraphics[scale=.4]{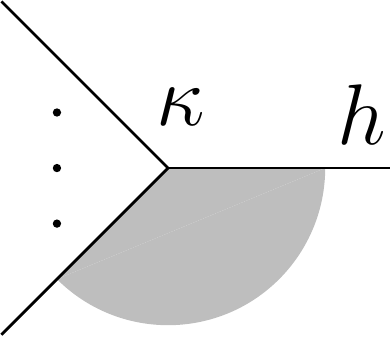}\end{array}$}\\
0 & \text{else}
\end{cases}
\end{equation}
Finally, the component $00$ picks up contributions from marked corners adjacent to themselves, in the sense that they share a half-edge. This incorporates the leaves of $\mathcal{T}$ (a leaf having a single corner, which has to be marked),
\begin{equation}
\bigl[K_{\mathcal{T}}\bigr]_{0 0} = \mathbbm{1}_{E_d} - (N^{s_B} t_B)^{\frac{1}{V-2}}\sum_{\text{leaves}} \epsilon_h\,\tilde{X}_h.
\end{equation}

It remains to evaluate the determinant of $K_{\mathcal{T}}$, as a function of the matrices $X_h$. It is mainly determined by the specific block structure of $K_{\mathcal{T}}$, so we can work out a slightly more general case given in Lemma \ref{thm:Determinant} below.

We apply this result to the quadratic form $K_{\mathcal{T}}$,
\begin{equation}
\det K_{\mathcal{T}} = \det \Bigl(\mathbbm{1}_{E_d} - \sum_{v\in\mathcal{T}} \prod_{h_v}^{\substack{\text{counter-}\\\text{-clockwise}}} (N^{s_B} t_B)^{\frac{1}{V-2}}\ \epsilon_h\tilde{X}_{h_v} \Bigr)
\end{equation}
Up to constants, this gives
\begin{equation}
Z_N(t_B) = \int \prod_{e\in \mathcal{T}} dX_e dX^\dagger_e\ \exp -\sum_{e\in\mathcal{T}} \tr_{V_{C_e}} (X_e X^\dagger_e) - \tr_{E_d}\ln \Bigl(\mathbbm{1}_{E_d} - \sum_{v\in\mathcal{T}} \prod_{h_v}^{\substack{\text{counter-}\\\text{-clockwise}}} (N^{s_B} t_B)^{\frac{1}{V-2}}\,\epsilon_h\,\tilde{X}_{h_v} \Bigr)
\end{equation}
for any tree $\mathcal{T}\in \mathbbm{T}_V$ corresponding to $B$.
\end{proof}

Here is the lemma which gives the determinant of quadratic forms of the same type as $K$ as a determinant over $E_d^*\otimes E_d$.

\begin{lemma} \label{thm:Determinant}
Let $P_{\mathcal{T}}$ be a matrix similar to $K_{\mathcal{T}}$ except that all $-(N^{s_B} t_B)^{\frac{1}{V-2}}\epsilon_h \tilde{X}_h$ are replaced with arbitrary $Y_h\in E_d^*\otimes E_d$. Moreover, $\bigl[P_{\mathcal{T}}\bigr]_{0 0} = Y_0 + \sum_{\text{leaves}} Y_h$ for $Y_0\in E_d^*\otimes E_d$. Then
\begin{equation}
\det P_{\mathcal{T}} = \det \Bigl(Y_0 + \sum_{v\in\mathcal{T}} \prod_{h_v}^{\text{counter-clockwise}} Y_{h_v} \Bigr)
\end{equation}
\end{lemma}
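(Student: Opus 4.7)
The plan is to compute $\det P_\mathcal{T}$ by a block Schur-complement with respect to the partition of corners into marked (all collected under the single index $0$) versus unmarked (indexed $\kappa = 1,\ldots,V/2-2$). Writing $P_\mathcal{T}=\begin{pmatrix} A & B \\ C & D \end{pmatrix}$ with $A=[P_\mathcal{T}]_{00}$ and $D=[P_\mathcal{T}]_{\geq 1,\geq 1}$, the identity $\det P_\mathcal{T} = \det D \cdot \det(A - BD^{-1}C)$ reduces the problem to verifying $\det D = 1$ together with $A - BD^{-1}C = Y_0 + \sum_{v\in\mathcal{T}} \prod_{h_v}^{\mathrm{ccw}} Y_{h_v}$.

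For $D$, I would first observe that the nonzero off-diagonals of $P_\mathcal{T}$ connect only corners belonging to the same vertex, since the adjacency relation is defined through a half-edge and the corners immediately preceding/following it in the ccw order at one vertex. Hence $D$ splits as a direct sum of vertex blocks. Ordering the unmarked corners of a vertex $v$ with degree $d_v \geq 2$ counter-clockwise from its marked corner as $\kappa^v_1,\ldots,\kappa^v_{d_v-1}$, the vertex block of $D$ has $\mathbbm{1}_{E_d}$ on the diagonal and only the superdiagonal entries $[D]_{\kappa^v_{i-1},\kappa^v_i}=Y_{h^v_i}$ for $2\leq i\leq d_v-1$. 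It is therefore block-unipotent upper triangular, so $\det D = 1$, and the Neumann series terminates, giving
\[
[D^{-1}]_{\kappa^v_1,\kappa^v_{d_v-1}} = (-1)^{d_v-2}\,Y_{h^v_2}\cdots Y_{h^v_{d_v-1}}.
\]

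The off-diagonal blocks are supported on two entries per vertex: $B_{0,\kappa^v_1}=Y_{h^v_1}$ (the half-edge exiting the marked corner ccw) and $C_{\kappa^v_{d_v-1},0}=Y_{h^v_{d_v}}$ (the half-edge returning ccw to it), while $A = Y_0 + \sum_{\ell\,\mathrm{leaf}} Y_{h^\ell}$ collects the degenerate wrap-around contributions from leaves. The block-diagonality of $D^{-1}$ then gives
\[
BD^{-1}C = \sum_{v:\,d_v\geq 2} B_{0,\kappa^v_1}\,[D^{-1}]_{\kappa^v_1,\kappa^v_{d_v-1}}\,C_{\kappa^v_{d_v-1},0} = \sum_{v:\,d_v\geq 2} (-1)^{d_v-2}\prod_{h_v}^{\mathrm{ccw}} Y_{h_v},
\]
and combining this with the leaf contributions already inside $A$ (where $d_v = 1$ and the ccw product degenerates to $Y_{h^\ell}$) assembles $A - BD^{-1}C$ into the single expression $Y_0 + \sum_{v\in\mathcal{T}} \prod_{h_v}^{\mathrm{ccw}} Y_{h_v}$ of the lemma.

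The main obstacle, and essentially the only subtlety, is the sign and ordering bookkeeping: the $(-1)^{d_v-2}$ produced by the Neumann expansion must be reconciled with the ccw product convention in the statement, which is precisely the sign that later yields the $-\sum_v\prod\epsilon_h \tilde{X}_{h_v}$ inside the logarithm of Theorem~\ref{thm:MatrixModel}. As a fallback I would run an induction on the number of edges of $\mathcal{T}$: the base case is a single edge, where $P_\mathcal{T}$ reduces to the $1\times 1$ block $Y_0+Y_{h_1}+Y_{h_2}$ which matches the formula directly; in the inductive step one peels off a leaf $\ell$ by contracting its half-edge, merging the two neighbor corners at the adjacent vertex into one and leaving a smaller tree $\mathcal{T}'$ with $P_{\mathcal{T}'}$ of the same shape, the excess of $\det P_\mathcal{T}$ over $\det P_{\mathcal{T}'}$ being exactly the new ccw product at the vertex whose degree has just increased.
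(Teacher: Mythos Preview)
Your direct Schur-complement argument is correct and is a genuinely different (and cleaner) route than the paper's. The paper proves the lemma by induction on the number of edges: it singles out one unmarked corner $\kappa_{V/2-2}$ adjacent to a leaf, performs a single rank-one Schur reduction on that row/column, and reinterprets the result as $P_{\mathcal{T}'}$ for the tree $\mathcal{T}'$ with the leaf removed and the neighbouring half-edge relabelled $Y_{h_3}\to Y_{h_3}Y_{h_1}$. This is exactly your ``fallback'' proposal. Your primary argument instead exploits the global structure: the observation that $D$ is block-diagonal over vertices and block-unitriangular in the counter-clockwise ordering kills the induction entirely and gives $D^{-1}$ in closed form. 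Both approaches are Schur complements; yours does them all at once, the paper's one leaf at a time.

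On the sign issue you flag: you are right to be suspicious, and your instinct that it is absorbed in the application is correct. Your computation gives
\[
A - BD^{-1}C \;=\; Y_0 + \sum_{v\in\mathcal{T}} (-1)^{d_v-1}\prod_{h_v}^{\mathrm{ccw}} Y_{h_v},
\]
with the extra $(-1)^{d_v-1}$ (which is $+1$ for leaves, so they merge with the $A$-contributions as you say). The lemma as stated omits this factor, and the paper's inductive proof also silently drops the minus sign from the Schur step when writing $Y_{h_3}\to Y_{h_3}Y_{h_1}$. This is harmless for the actual application to $K_\mathcal{T}$ in Theorem~\ref{thm:MatrixModel}, because there $Y_h = -(N^{s_B}t_B)^{1/(V-2)}\epsilon_h\tilde{X}_h$ carries its own minus sign, and $(-1)^{d_v-1}(-1)^{d_v}=-1$ produces exactly the $-\sum_v\prod\epsilon_h\tilde{X}_{h_v}$ inside the logarithm. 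So the theorem is correct; the lemma's display just needs the $(-1)^{d_v-1}$ inserted to be literally true for arbitrary $Y_h$.
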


\begin{proof}
We proceed by induction. The theorem is true when $\mathcal{T}$ has a single edge with two marked corners. We then assume that it holds for all $\mathcal{T}'\in\mathbbm{T}_{V'}$ for $V'< V$.

The key is that an unmarked corner $\kappa$ is incident to only two edges, one on its left and one on its right, using the clockwise order. This means that on the row $\kappa$, the matrix $P_{\mathcal{T}}$ has a single non-diagonal, non-trivial element, and similarly on the column $\kappa$.

\begin{figure}
\captionsetup[subfigure]{singlelinecheck=false}
\centering
\subcaptionbox{\label{fig:Leaf}In $\mathcal{T}$, the half-edges $h_1, h_2$ connect a leaf to its parent vertex with a marked corner on its left.}{\includegraphics[scale=.5]{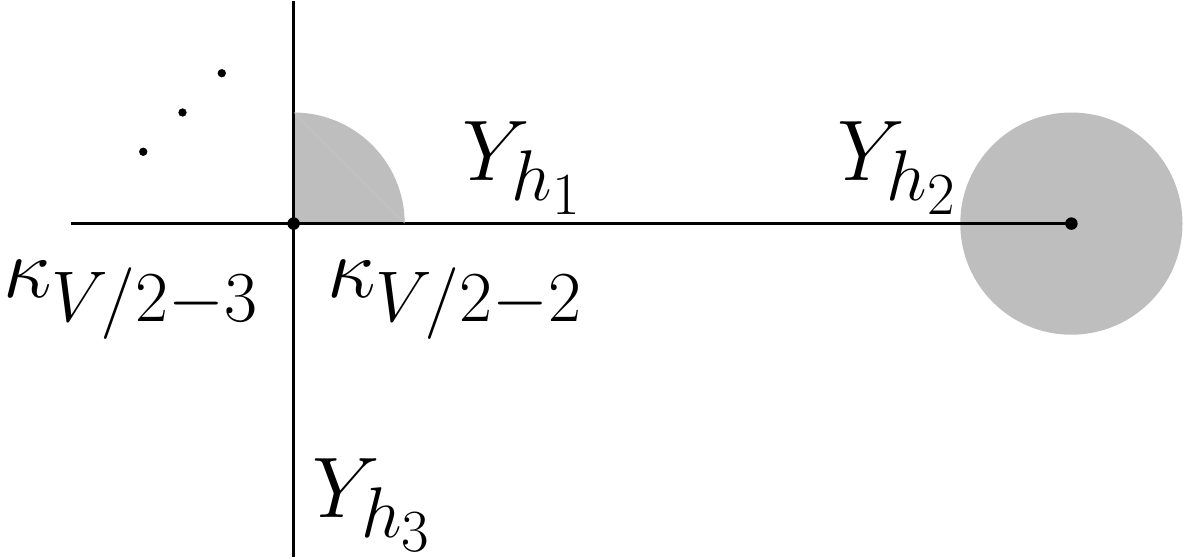}}
\hspace{2cm}
\subcaptionbox{\label{fig:NewTree}The new tree $\mathcal{T}'$ is obtained by removing the half-edges $h_1, h_2$ and re-assigning the matrix $Y_{h_3} Y_{h_1}$ to $h_3$.}{\includegraphics[scale=.5]{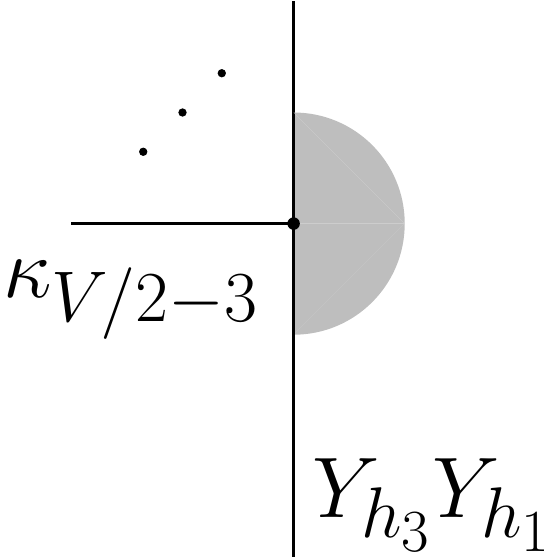}}
\caption{\label{fig:TreeInduction} The original tree $\mathcal{T}$ on the left and its transformation into $\mathcal{T}'$ on the right with one less vertex and one less edge.}
\end{figure}

Consider in $\mathcal{T}$ a marked corner, whose incident edge on the right (using the clockwise order) is connected to a leaf, as depicted in Figure \ref{fig:Leaf}. We have chosen, without loss of generality, to label the corners to the right of the edge $\kappa_{V/2-2}$ (the last label) and $\kappa_{V/2-3}$. On the row corresponding to $\kappa_{V/2-2}$, there is a single non-diagonal, non-trivial element,
\begin{equation}
\bigl[P_{\mathcal{T}}\bigr]_{V/2-2, \kappa'} = \delta_{\kappa', 0}\, Y_{h_1},
\end{equation}
corresponding to the adjacency with the marked corner and the half-edge $h_1$. Similarly there is a single non-diagonal, non-trivial element on the column $\kappa_{V/2-2}$, corresponding to the adjacency with the corner $\kappa_{V/2-3}$ and the half-edge $h_3$,
\begin{equation}
\bigl[P_{\mathcal{T}}\bigr]_{\kappa',V/2-2} = \delta_{\kappa', V/2-3}\, Y_{h_3}.
\end{equation}

Singling out the last column and last row which correspond to the corner $\kappa_{V/2-2}$, the following block structure is found
\begin{equation}
P_{\mathcal{T}} = \left( \begin{array}{ccc|c}
\multicolumn{3}{c}{\multirow{3}{*}{\scalebox{1}{$P'_{\mathcal{T}}$}}} & \\
 & & & C_{V/2-2} \\
 & & &  \\
 \hline
 & R_{V/2-2} & & \mathbbm{1}_{E_d}
\end{array}
\right)
\end{equation}
where we have introduced the column and row vectors
\begin{equation}
C_{V/2-2} = \begin{pmatrix} 0\\ \vdots \\ 0\\ Y_{h_3} \end{pmatrix} \qquad \text{and} \qquad R_{V/2-2} = \begin{pmatrix} Y_{h_1} & 0 & \dotsb &0 \end{pmatrix}.
\end{equation}
Therefore
\begin{equation}
\det P_{\mathcal{T}} =  \det \bigl(P'_{\mathcal{T}} - C_{V/2-2} R_{V/2-2}\bigr) 
\end{equation}
The matrix $P'_{\mathcal{T}} - C_{V/2-2} R_{V/2-2}$ is readily interpreted as a matrix $P_{\mathcal{T}'}$ of the same type for a new tree $\mathcal{T}'$ with one vertex less. Indeed, consider the tree obtained by removing the leaf and half-edges $h_1, h_2$ in Figure \ref{fig:TreeInduction}. The half-edge $h_3$ now receives $Y_{h_3} \to Y_{h_3} Y_{h_1}$. Moreover the matrix $Y_{h_2}$ is absorbed into $Y_0 \to Y_0 + Y_2$. The induction hypothesis can then be applied and directly gives
\begin{equation}
\det P_{\mathcal{T}} = \det \Bigl(Y_0 + \sum_{v\in\mathcal{T}} \prod_{h_v}^{\text{counter-clockwise}} Y_{h_v} \Bigr)
\end{equation}
Here $\sum_{v\in\mathcal{T}}$ corresponds to a sum over the vertices of $\mathcal{T}$ and $\{h_v\}$ are the half-edges incident to $v$.
\end{proof}


\paragraph*{Remark on the combinatorial encoding of data --} Already in \eqref{Zint1} which uses $G\in \mathbbm{G}_V$, all interactions are quartic. Therefore, already at this stage, the Hubbard-Stratonovich transformations could have been applied to each interaction. We think it is more interesting to perform them after the bijection to plane trees and use $\mathcal{T}\in\mathbbm{T}_V$, since all combinatorial data then refer to edges, half-edges, vertices and corners of $\mathcal{T}$.

\paragraph*{Reduction of the number of degrees of freedom --} The matrix $X_e\in \bigotimes_{c\in C_e} V^*_c\otimes V_c$ has 
\begin{equation}
N^{2|C_e|} \leq N^d
\end{equation}
degrees of freedom, which is crucially less than or equal to the number of degrees of freedom of $T$. For all $C_e$-bidipole insertions with $|C_e|<d/2$, the new degrees of freedom are fewer than in the original tensor formulation, meaning that the Hubbard-Stratonovich transformation on edge $e\in \mathcal{T}$ effectively integrates degrees of freedom.

Only for the insertions which are balanced in the number of colors, in the sense that $|C_e|=d/2$, for instance $C_e=\{1, 2, \dotsc, d/2\}$ in even dimensions, does this procedure not reduce the number of degrees of freedom. This is because in the corresponding quartic polynomial $Q_{C_e}$, a tensor splits its degrees of freedom into two equal sets of contractions with other tensors. There is therefore no obvious way to effectively integrate some degrees of freedom.

Had we exchanged $C\leftrightarrow \widehat{C}$, $X_e$ would have had $N^{2d - 2|C_e|}$ degrees of freedom, which is more than the original $N^d$ which we started with! This justifies the choice of splitting along $C_e$ to perform the Hubbard-Stratonovich transformations.

\subsection{Example}

Melonic cycles are formed by insertions of melonic dipoles always on edges of the same color. For melonic cycles used as interactions, the intermediate field of \cite{StuffedWalshMaps, PhDLionni} gives matrix models with matrices of size $N\times N$. This is the same with our new matrix models. Thus, our new matrix model does not really bring anything new in those cases.

A simple generalization of those melonic cycles are bubbles formed by recursive insertions of $C$-bidipoles for the same set $C$. The intermediate field of \cite{StuffedWalshMaps} then gives matrices of size $N^{|C|}\times N^{|C|}$ and again our new matrix model cannot improve on this.

However, our new matrix model is relevant whenever a GM bubble is formed by insertions of bidipoles with at least two distinct sets $C_1 \neq C_2$. The simplest case is at $d=3$ with $C_1 = \{1\}$ and $C_2 = \{2\}$, with $B$ and $G$ such that $\partial G = B$ as follows
\begin{equation}
B = \begin{array}{c} \includegraphics[scale=.4]{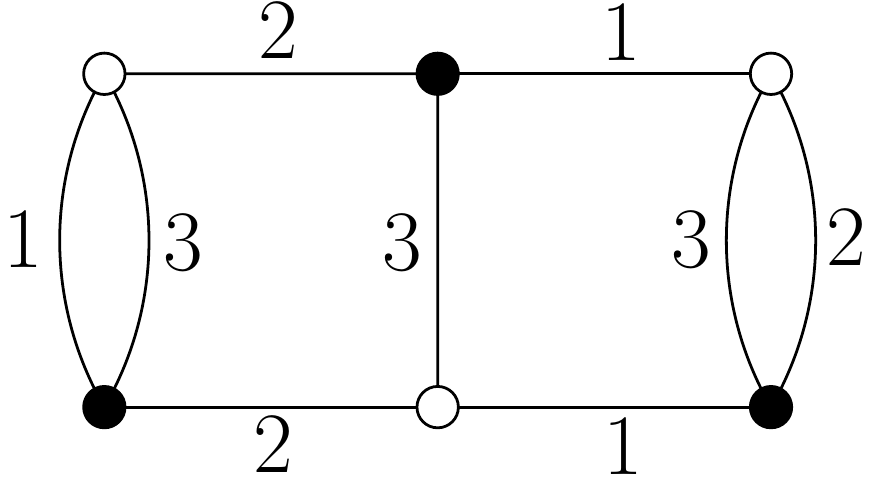} \end{array} \qquad
G = \begin{array}{c} \includegraphics[scale=.4]{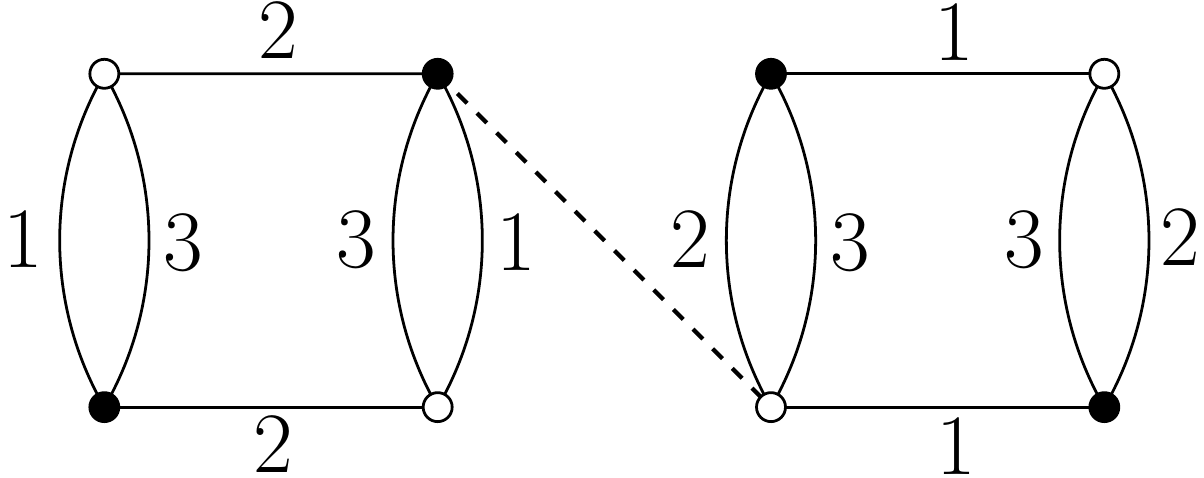} \end{array}
\end{equation}
We apply the map $J$ to get a tree representation,
\begin{equation}
\mathcal{T} = \begin{array}{c} \includegraphics[scale=.55]{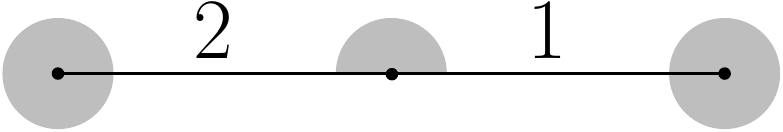} \end{array}
\end{equation}

Since it is a melonic bubble, we know from \cite{Uncoloring} that $s_B = -2(d-1)$. Theorem \ref{thm:MatrixModel} requires introducing two complex matrices $X_1\in V_1^*\otimes V_1$ and $X_2\in V_2^*\otimes V_2$, thus both of size $N\times N$. It then
gives
\begin{multline}
Z_N(t) = \int dX_1 dX_1^\dagger dX_2 dX_2^\dagger\ \exp -\tr_{V_1} (X_1 X_1^\dagger) - \tr_{V_2} (X_2 X_2^\dagger) \\
\times\ \exp - N \tr_{V_1\otimes V_2} \ln \Bigl(\mathbbm{1}\otimes \mathbbm{1} - N^{-(d-1)/2} t^{1/4}\bigl( X_1\otimes \mathbbm{1} + \mathbbm{1} \otimes X_2\bigr) + N^{-(d-1)} t^{1/2} X_1^\dagger \otimes X_2^\dagger\Bigr)
\end{multline}
The factor $N$ in front of the interaction comes from the trivial trace over $V_3$ which factorizes since there are no non-trivial operators acting on the third color. By expanding the logarithm, we can rewrite the interaction as a sum over all finite words $w$ over the alphabet $\{a, b, c\}$. More precisely, let $W$ be the set of finite words over the alphabet $\{a, b, c\}$ which have the same number $|w|$ of $a$, $b$, $c$. Then
\begin{multline}
\tr_{V_1\otimes V_2} \ln \Bigl(\mathbbm{1}\otimes \mathbbm{1} - N^{-(d-1)/2} t^{1/4}\bigl( X_1\otimes \mathbbm{1} + \mathbbm{1} \otimes X_2\bigr) + N^{-(d-1)} t^{1/2} X_1^\dagger \otimes X_2^\dagger\Bigr)\\
= \sum_{w\in W} \frac{1}{3|w|} (-t)^{|w|}\ N^{-2(d-1)|w|}\ \tr_{V_1\otimes V_2} (X_w)
\end{multline}
where $X_w$ is obtained by mapping $a\mapsto X_1\otimes \mathbbm{1}$, $b\mapsto \mathbbm{1}\otimes X_2$ and $c\mapsto X_1^\dagger\otimes X_2^\dagger$.

\subsection{Relation with another intermediate field model}

Another intermediate field model for random tensor models was presented in \cite{StuffedWalshMaps, PhDLionni}. The two obvious and crucial differences with the present intermediate field are: $i)$ it applies to arbitrary invariant interactions while ours only applies to GM interactions, $ii)$ it however generally increases the number of degrees of freedom, in sharp contrast with our present proposal.

In the case of GM interactions, both \cite{StuffedWalshMaps} and the present model apply equally. Since our model decreases the number of degrees of freedom, we expect to be able to derive it from the intermediate field of \cite{StuffedWalshMaps} by integrating some variables. We show how do that. First we must give some details about the model of \cite{StuffedWalshMaps}.

We start with a tree $\mathcal{T}$ such that $\partial J^{-1} \mathcal{T} = B$ as we have done here. Each vertex $v\in\mathcal{T}$ has a marked corner which we now equip with an additional half-edge and a matrix $\phi_v$. If $C_{e,v}$ denotes the color set of the edge $e$ of $\mathcal{T}$ incident to $v$, then denote $C_v = \bigcup_e C_{e,v}$ the set of all colors meeting at $v$. Then
\begin{equation}
\phi_v \in\ \bigotimes_{c\in C_v} V^*_c \otimes V_c.
\end{equation}
The dimension of this space is only bounded by $N^{2d}$ so it can increase the number of degrees of freedom of the tensor model. For instance in $d=3$, where GM interactions are just ordinary melonic interactions, it is sufficient to have two colors meeting at $v$ to get $\phi_v$ with $N^4$ degrees of freedom, compared to the $N^3$ of $T$ we started with. In our model we are guaranteed to have matrices $X, X^\dagger$ with $N^2$ degrees of freedom.

The intermediate field model of \cite{StuffedWalshMaps} gives a recipe to associate to $\mathcal{T}$ a function $\mathcal{I}_{\mathcal{T}}$ of the matrices $\{\phi_v\}_{v\in\mathcal{T}}$. It is obtained by considering the tensor product of all $\phi_v$ and taking partial traces as follows. Consider the forest $\mathcal{T}_c$ of color $c$ obtained by removing all edges whose color sets do not contain $c$. For each connected component, one follows the face and deduce from it a partial trace over the indices of color $c$ of all the matrices $\phi_v$ of that connected component. Doing so for each connected component and all colors we obtain the funtion $\mathcal{I}_{\mathcal{T}}(\{\phi_v\})$.

To glue those interactions together, a sort of hyper-propagator is required, i.e. hyperedges instead of edges, obtained by means of a logarithm. The action is
\begin{equation}
\sum_v \tr \phi_v \phi_v^\dagger + \tr \ln \Bigl(\mathbbm{1} - \frac{1}{N^{d-1}} \sum_v \tilde{\phi}_v^\dagger\Bigr) + N^s\, t\ \mathcal{I}_{\mathcal{T}}(\{\phi_v\}).
\end{equation}
Here $\tilde{\phi}_v^\dagger$ is the embedding of $\phi_v^\dagger$ into $E_d^*\otimes E_d$ by adding the identity on $V_c$ for $c\not\in C_v$. To get a unit coefficient in front of the interaction, we rescale $\phi_v$ and $\phi_v^\dagger$ and get
\begin{equation}
S_N(\{\phi_v, \phi_v^\dagger\},t) = \sum_v \tr \phi_v \phi_v^\dagger + \tr \ln \Bigl(\mathbbm{1} - N^{-d+1+2s/V} t^{V/2} \sum_v \tilde{\phi}_v^\dagger\Bigr) + \mathcal{I}_{\mathcal{T}}(\{\phi_v\}).
\end{equation}
The partition function is
\begin{equation}
Z_N(t) = \int \prod_v d\phi_v d\phi_v^\dagger\ \exp -S_N(\{\phi_v, \phi_v^\dagger\},t)
\end{equation}

We now show how to transform directly the above expression into the new formulation of Theorem \ref{thm:MatrixModel}. The first thing to do is to introduce the matrices of $X_e$ which appear in Theorem \ref{thm:MatrixModel}. Here it is done by ``cutting'' $\mathcal{I}_{\mathcal{T}}$ along the edges of $\mathcal{T}$. Consider an edge $e\subset \mathcal{T}$. It separates $\mathcal{T}$ into two trees $\mathcal{T}_1, \mathcal{T}_2$. Both trees also have marked half-edges $h_1, h_2$ which are the two halves of $e = \{h_1, h_2\}$. Then the Hubbard-Stratonovich transformation assigns them matrices $X_e = X_{h_1}$, and $X_{h_2}  = X_{e}^\dagger$, and splits the interaction into two parts,
\begin{equation}
\exp - \mathcal{I}_{\mathcal{T}}(\{\phi_v\}) = \int dX_{e} dX_{e}^\dagger\ \exp -\tr X_{e} X_{e}^\dagger - \mathcal{I}_{\mathcal{T}_1} (\{\phi_v\};X_{h_1}) + \mathcal{I}_{\mathcal{T}_2} (\{\phi_v\};X_{h_2})
\end{equation}
where $\mathcal{I}_{\mathcal{T}_i}(\{\phi_v\},X_{h_i})$ is calculated like $\mathcal{I}_{\mathcal{T}_i}$ with an additional insertion of $X_{h_i}$ on the marked half-edge $h_i$. This is the same procedure as in \eqref{HSTensor}. By applying it repeatedly, one splits $\mathcal{T}$ into a set of disjoint $V/2$ vertices
\begin{equation}
Z_N(t) = \int \prod_v d\phi_v d\phi_v^\dagger\ \prod_e dX_e dX_e^\dagger\ \exp -\sum_v \tr \phi_v \phi_v^\dagger - \tr \ln \Bigl(\mathbbm{1} - N^{-d+1+2s/V} t^{V/2} \sum_v \tilde{\phi}_v^\dagger\Bigr) + \sum_{v\in\mathcal{T}} \tr \phi_v \prod_{h_v}^{\substack{\text{counter-}\\ \text{-clockwise}}} \epsilon_{h_v} \tilde{X}_{h_v}
\end{equation}
where all $\epsilon_h=1$ except one which is $-1$.

We now use a matrix version of
\begin{equation}
\int dzd\bar{z}\ f(\bar{z})\ \exp-z\bar{z} + az  = f(a),
\end{equation}
which is itself just another version of Hubbard-Stratonovich \eqref{HS}. We apply it to the integrals over $\phi_v$, and $\phi_v^\dagger$ for each vertex, and $a$ is the counter-clockwise product of the $\epsilon_h \tilde{X}_h$ around each vertex. This leads directly to Theorem \ref{thm:MatrixModel}.

\subsection{Saddle points for the new matrix models}

Theorem \ref{thm:Gaussian} is based on a direct combinatorial analysis of the Feynman graphs. Since the result asserts that the models are Gaussian, we may expect to reproduce the equation on the covariance \eqref{LargeN2point} at large $N$ using a saddle point analysis of the new matrix model \eqref{NewMatrixModel}.

We consider the same situation as in Theorem \ref{thm:Gaussian}, translated into the plane tree representation of $B$. The tree $\mathcal{T}$ satifies $B = \partial J^{-1}\mathcal{T}$ with $B$ totally unbalanced. It gives $|C_e|<d/2$ for all edges $e$ of $\mathcal{T}$. In general, the matrices involved in the new matrix model \eqref{NewMatrixModel} do not commute, $[X_e,X_e']\neq 0$ whenever $C_e\cap C_{e'}\neq \emptyset$. Therefore, using eigenvalues will leave angular integrals which are difficult to perform.

Instead, we make some ansatz based on Theorem \ref{thm:Gaussian}. From a matrix model which becomes Gaussian at large $N$, we expect the Vandermonde determinant to be subdominant so that all eigenvalues fall into the potential well instead of spreading around it. This claim is supported by the direct analysis of the quartic melonic case in \cite{Quartic-Nguyen-Dartois-Eynard}. We make the ansatz
\begin{equation} \label{MatrixAnsatz}
X_e = (N^{s}t)^{-1/(V-2)}\,x_e\,\mathbbm{1}_{V_{C_e}},\qquad X^\dagger_e = (N^{s}t)^{-1/(V-2)}\,x^*_e\,\mathbbm{1}_{V_{C_e}}
\end{equation}
Recall that $X_e$ is associated naturally to a half-edge of $e$, say $h_1$, and $X_e^\dagger$ to the other half, say $h_2$. We thus use the notation $x_e = x_{h_1}$ and $x_e^* = x_{h_2}$ too. Here the prefactor $(N^{s}t)^{-1/(V-2)}$ is introduced to get rid of the explicit dependence in $N$ and $t$ in the logarithm in \eqref{NewMatrixModel}. However, keep in mind that we will need $x_e, x_e^*$ to also scale with $N$ so that all terms of the potential scale the same at large $N$.

It is clear that taking the variation of the potential with respect to $X_h$ and then plugging in the ansatz \eqref{MatrixAnsatz} is equivalent to plugging in the ansatz directly into the potential and taking the variation with respect to $x_h$. Therefore we focus on the latter approach and denote the potential as a function of $\{x_h\}$ as
\begin{equation}
V_{\mathcal{T},N}(\{x_h\}) = t^{-2/(V-2)} \sum_e N^{|C_e| - \frac{2s}{V-2}} x_e\, x_e^* + N^d \ln \Bigl(1 - \sum_{v\in\mathcal{T}} \prod_{h_v} \epsilon_{h_v} x_{h_v}\Bigr)
\end{equation}

\paragraph*{The tree $\mathcal{T}_h$ --} To analyze the potential $V_{\mathcal{T},N}(\{x_h\})$ we introduce for each half-edge $h$ the tree $\mathcal{T}_h\subset \mathcal{T}$ which is the subtree ``opposite'' to $h$ together with the edge containing $h$. More formally, partition $\mathcal{T}$ as the edge $e=\{hh'\}$, the subtree $\mathcal{T}_h^0$ incident to $h'$ and $\mathcal{T}^0_{h'}$ incident to $h$. Then $\mathcal{T}_h = e\cup \mathcal{T}_h^0$,
\begin{equation}
\begin{array}{c} \includegraphics[scale=.4]{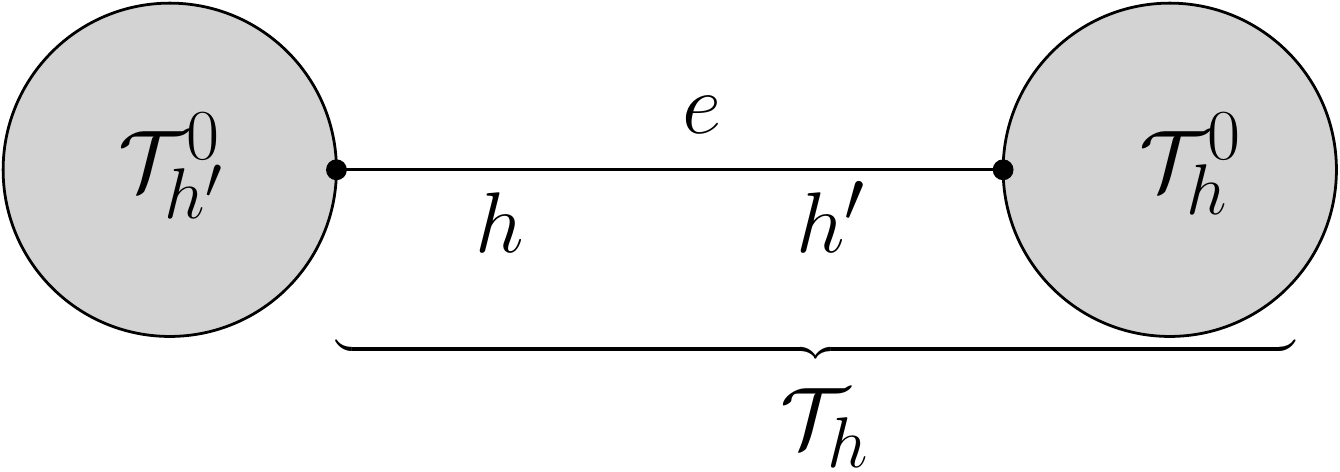} \end{array}
\end{equation}

\begin{proposition}
There is a unique set of rescaling coefficients $\{\eta_h\}$ such that all terms of $V(\{N^{\eta_h} x_h\})$ scale the same, i.e. 
\begin{equation} \label{ScalingConstraint}
V_{\mathcal{T}, N}(\{x_h\}) = N^d\, \hat{V}_{\mathcal{T}}(\{N^{-\eta_h}\,x_h\}) \qquad \text{with}\quad \hat{V}_{\mathcal{T}}(\{y_h\}) = t^{-2/(V-2)}\sum_{e\subset \mathcal{T}} y_e y_e^* + \ln \Bigl(1 - \sum_{v\in\mathcal{T}} \prod_{h_v} \epsilon_{h_v} y_{h_v}\Bigr)
\end{equation}
being independent of $N$.
%
Denoting $E(\mathcal{T}_h)$ the number of edges of $\mathcal{T}_h$, it is given by
\begin{equation} \label{Rescaling}
\eta_h = \Bigl(d + \frac{2s}{V-2}\Bigr) E(\mathcal{T}_h) - \sum_{e\subset \mathcal{T}_h} |C_e|
\end{equation}
\end{proposition}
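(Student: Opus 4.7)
The plan is to translate the scaling requirement \eqref{ScalingConstraint} into a linear system on the half-edge coefficients $\{\eta_h\}$, establish existence and uniqueness of its solution, and then verify that the explicit formula \eqref{Rescaling} satisfies it.

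First, I would write down the constraints. Setting $y_h = N^{-\eta_h} x_h$, the kinetic term $N^{|C_e| - 2s/(V-2)}\, x_e x_e^*$ must equal $N^d y_e y_e^* = N^{d - \eta_{h_1} - \eta_{h_2}}\, x_e x_e^*$ for each edge $e = \{h_1, h_2\}$, giving
\[
\eta_{h_1} + \eta_{h_2} = d - |C_e| + \tfrac{2s}{V-2}.
\]
Each vertex monomial $\prod_{h_v} y_{h_v}$ inside the logarithm must be $N$-independent, yielding, for each vertex $v$,
\[
\sum_{h_v} \eta_{h_v} = 0.
\]
This is a linear system with $V-2$ unknowns, $V/2 - 1$ edge equations, and $V/2$ vertex equations.

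Next, I would check consistency: summing the edge and the vertex equations both yield expressions for $\sum_h \eta_h$. The vertex total is manifestly $0$, since each half-edge sits at exactly one vertex. For the edge total, using $\sum_e |C_e| = \sum_C |C|\, b_C^{(B)} = s + d(V-2)/2$ (from Theorem \ref{thm:Scaling}) and $|E(\mathcal{T})| = V/2 - 1$, the two sums agree. This exhibits exactly one linear relation among the $V-1$ constraints, leaving $V-2$ independent ones, matching the count of unknowns. Existence and uniqueness then follow by a straightforward tree induction: at any leaf, the vertex constraint pins $\eta_h = 0$ on its unique half-edge; the incident edge constraint then determines $\eta_{h'}$ on the other half; pruning that leaf reduces the problem to a smaller tree of the same type.

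Finally, I would verify that \eqref{Rescaling} solves this system. For the edge constraint at $e = \{h_1, h_2\}$, the trees $\mathcal{T}_{h_1}$ and $\mathcal{T}_{h_2}$ together cover $\mathcal{T}$ with overlap exactly $\{e\}$, so $E(\mathcal{T}_{h_1}) + E(\mathcal{T}_{h_2}) = V/2$ and $\sum_{e'\in\mathcal{T}_{h_1}}|C_{e'}| + \sum_{e'\in\mathcal{T}_{h_2}}|C_{e'}| = \sum_{e'\in\mathcal{T}} |C_{e'}| + |C_e|$; adding the two instances of \eqref{Rescaling} and simplifying with $\sum_e |C_e| = s + d(V-2)/2$ recovers the right-hand side $d - |C_e| + 2s/(V-2)$. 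For the vertex constraint at $v$, the subtrees $\{\mathcal{T}_{h_v^{(i)}}\}_i$ across the half-edges incident to $v$ \emph{partition} the edges of $\mathcal{T}$, so the $E(\mathcal{T}_{h_v^{(i)}})$ sum to $V/2 - 1$ and the corresponding color-set sums to $s + d(V-2)/2$; plugging into \eqref{Rescaling} yields exactly zero. The only subtle point is distinguishing the covering behavior of the $\mathcal{T}_h$ along an edge (where two subtrees overlap on that edge) from their partition behavior around a vertex; once this bookkeeping is in place the verification is immediate.
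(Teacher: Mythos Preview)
Your proposal is correct and follows essentially the same route as the paper: derive the edge and vertex constraints, establish uniqueness via leaf-pruning on the tree, and prove existence by verifying that the explicit formula \eqref{Rescaling} satisfies both constraints using the overlap/partition bookkeeping of the subtrees $\mathcal{T}_h$. Your additional consistency/dimension count is a minor embellishment but does not change the approach.
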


\begin{proof}
Introduce $y_h = N^{-\eta_h} x_h$ and consider the Taylor expansion of \eqref{ScalingConstraint} around the point where $x_h=0$ for all half-edges. The term corresponding to an edge, i.e. $x_e x_e^*$, gives the constraint
\begin{equation} \label{EdgeConstraint}
\forall e\subset\mathcal{T} \qquad N^{|C_e| - \frac{2s}{V-2}}\, x_e\, x_e^* = N^d\, y_e y_e^* 
\end{equation}
Similarly, the term corresponding to a vertex gives
\begin{equation} \label{VertexConstraint}
\forall v\in\mathcal{T} \qquad \prod_{h_v} x_{h_v} = \prod_{h_v} y_{h_v}.
\end{equation}

We first prove that if a rescaling satisfying all constraints \eqref{EdgeConstraint}, \eqref{VertexConstraint} exists, then it is unique. Indeed, start from the half-edge $h$ incident to a leaf of $\mathcal{T}$. There the constraint \eqref{VertexConstraint} gives $y_h=x_h$, then \eqref{EdgeConstraint} enforces $x_{h'}=N^{d + \frac{2s}{V-2} - |C_e|} y_{h'}$ for $e=\{hh'\}$.

Then consider a vertex incident to edges $e_1, \dotsc, e_q$ connected to leaves except one half-edge $h$. The variables $y_{e_i}, y^*_{e_i}$ are all determined uniquely and the constraint \eqref{VertexConstraint} then determines $y_h$. Repeating those steps, we see that a trivial induction proves uniqueness, provided existence is ensured.

To prove existence, it is enough to show that \eqref{Rescaling} is a solution to \eqref{EdgeConstraint} and \eqref{VertexConstraint}. For \eqref{EdgeConstraint}, one needs to observe that summing over all edges of $\mathcal{T}_h$ and $\mathcal{T}_{h'}$ is like summing over all edges of $\mathcal{T}$ and counting $e$ twice, so that
\begin{equation}
\eta_h + \eta_{h'} = \Bigl(d + \frac{2s}{V-2}\Bigr) \bigl(E(\mathcal{T})+1\bigr) - \sum_{e'\subset \mathcal{T}} |C_{e'}| - |C_e| 
\end{equation}
This greatly simplifies with the value of $s$ given in Theorem \ref{thm:Scaling},
\begin{equation}
s = -\frac{d(V-2)}{2} + \sum_{e'\subset \mathcal{T}} |C_e|
\end{equation}
and $E(\mathcal{T}) = (V-2)/2$ making the relation between the number of vertices of the bubble and the number of edges of the tree, so that
\begin{equation}
\eta_h + \eta_{h'} = d + \frac{2s}{V-2} - |C_e|,
\end{equation}
which gives \eqref{EdgeConstraint}. The equation \eqref{VertexConstraint} is verified similarly: if $v$ is a vertex with incident half-edges $h_1, \dotsc, h_{d(v)}$, then observe that $\bigcup_{i=1}^{d(v)} \mathcal{T}_{h_i} = \mathcal{T}$, so that with the above value of $s$,
\begin{equation}
\sum_{i=1}^{d(v)} \eta_{h_i} = \Bigl(d + \frac{2s}{V-2}\Bigr) E(\mathcal{T}) - \sum_{e\subset \mathcal{T}} |C_e| = 0.
\end{equation}
\end{proof}

\begin{proposition} \label{thm:SaddlePoint}
The new matrix model admits the saddle point
\begin{equation}
X_h = t^{-1/(V-2)}\,N^{\eta_h - \frac{s}{V-2}}\, y_h(t)\,\mathbbm{1}_{V_{C_h}} \qquad \text{with} \quad
y_h(t) = \frac{1}{\epsilon_h} \biggl(\prod_{e\subset \mathcal{T}_h} \epsilon_e\biggr)\ \Bigl(t^{\frac{2}{V-2}}\, W_{\mathcal{T}}(t)\Bigr)^{E(\mathcal{T}_h)},
\end{equation}
where $W_{\mathcal{T}}(t)$ satisfies the equation of the large $N$, 2-point function
\begin{equation}
W_{\mathcal{T}}(t) = 1 - t\,W_{\mathcal{T}}(t)^{V/2}.
\end{equation}
\end{proposition}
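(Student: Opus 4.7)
The plan is to reduce the matrix saddle-point equations to scalar equations using the ansatz \eqref{MatrixAnsatz}, then verify the closed form for $y_h(t)$ by induction on the tree structure, and finally extract the polynomial equation for $W_{\mathcal{T}}(t)$.

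First I would vary the action in \eqref{NewMatrixModel} with respect to $X_h$ and $X_h^\dagger$, then substitute the scalar ansatz. Setting $y_h = N^{-\eta_h} x_h$ as in the previous proposition, the factor $N^d$ factors out globally (by the very definition of the $\eta_h$) and the saddle-point condition $\partial \hat{V}_{\mathcal{T}}/\partial y_h = 0$ becomes
\begin{equation*}
y_{h^*}\, W \;=\; t^{2/(V-2)}\,\epsilon_h \prod_{h'_v\neq h}\epsilon_{h'_v}\, y_{h'_v}, \qquad W \;=\; 1 - \sum_{v\in\mathcal{T}} \prod_{h_v}\epsilon_{h_v}\, y_{h_v},
\end{equation*}
where $v$ is the unique vertex containing $h$ and $h^*$ is the partner of $h$ along its edge $e=\{h,h^*\}$.

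Next I would plug the proposed closed form for $y_h$ into the right hand side of this equation. The two combinatorial identities I would use are already established in the proof of the scaling proposition: (i) the subtrees $\{\mathcal{T}_{h'_v}\}$ for $h'_v$ incident to $v$ partition $\mathcal{T}$, so $\sum_{h'_v\neq h} E(\mathcal{T}_{h'_v}) = E(\mathcal{T}) - E(\mathcal{T}_h) = E(\mathcal{T}_{h^*})-1$ via the edge-sharing identity $E(\mathcal{T}_h) + E(\mathcal{T}_{h^*}) = E(\mathcal{T})+1$; and (ii) the corresponding product of edge-sign prefactors telescopes since each edge lies in exactly one $\mathcal{T}_{h'_v}$. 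Balancing both sides then reduces to a single scalar identity of the shape $y_{h^*}W \propto (t^{2/(V-2)}W)^{E(\mathcal{T}_{h^*})}$, which matches the ansatz for $y_{h^*}$ up to a numerical sign factor that is fixed by the convention $\prod_h \epsilon_h = -1$.

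Having confirmed that the ansatz solves every saddle-point equation as soon as $W$ satisfies one scalar constraint, the last step is to extract that constraint from the defining equation $W = 1 - \sum_v \prod_{h_v} \epsilon_{h_v} y_{h_v}$. For each vertex $v$, the subtrees $\{\mathcal{T}_{h_v}\}$ for $h_v$ incident to $v$ partition $\mathcal{T}$, so $\sum_{h_v} E(\mathcal{T}_{h_v}) = E(\mathcal{T}) = (V-2)/2$ and the product of the edge-sign prefactors over $h_v$ covers $\mathcal{T}$ exactly once. Inserting the ansatz, the vertex contribution $\prod_{h_v}\epsilon_{h_v} y_{h_v}$ simplifies to the same quantity, proportional to $t\, W^{(V-2)/2}$, \emph{for every} $v\in\mathcal{T}$. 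Summing over the $V/2$ vertices and using $\prod_h\epsilon_h=-1$ collapses the defining equation into the asserted polynomial equation for $W_{\mathcal{T}}(t)$, which one then identifies with the Gaussian two-point function equation of Theorem~\ref{thm:Gaussian}.

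The main obstacle is the sign and prefactor bookkeeping: keeping track of the half-edge signs $\epsilon_h$, their pairwise products $\epsilon_e=\epsilon_{h_1}\epsilon_{h_2}$, and the telescoping of the edge-sign products over subtrees, together with the crucial fact that the vertex contributions are $v$-independent after substitution. Once this bookkeeping is carried out, the matching of the resulting equation with \eqref{LargeN2point} is immediate because both sides have the same combinatorial origin: the $V/2$ ways of distributing the external leg through the $V$ vertices of $B$.
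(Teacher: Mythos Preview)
Your plan matches the paper's proof almost exactly: reduce to the scalar potential $\hat V_{\mathcal T}$ via the previous proposition, write the stationarity equation half-edge by half-edge, exploit the tree combinatorics $\bigcup_{h_v}\mathcal T_{h_v}=\mathcal T$ and $\sum_{h_v}E(\mathcal T_{h_v})=E(\mathcal T)=(V-2)/2$ to see that the vertex product $\prod_{h_v}\epsilon_{h_v}y_{h_v}$ is $v$-independent, and read off the polynomial equation for $W_{\mathcal T}$. The paper derives the closed form for $y_h$ by a leaf-induction on $\mathcal T$, whereas you verify the proposed formula directly; both routes use the same edge-counting and sign-telescoping identities.

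One point to watch: your $W=1-\sum_v\prod_{h_v}\epsilon_{h_v}y_{h_v}$ is the \emph{reciprocal} of the quantity that appears in the ansatz. The paper sets $\hat W_{\mathcal T}=1/(1-\sum_v\cdots)$, and it is $\hat W_{\mathcal T}$, not its inverse, that enters $y_h\propto(\,t^{2/(V-2)}\hat W_{\mathcal T}\,)^{E(\mathcal T_h)}$ and is identified with $W_{\mathcal T}(t)$ at the saddle. So when you substitute the ansatz into your defining equation you will obtain $1/W_{\mathcal T}=1+\dotsb$, which only after inversion becomes the stated polynomial relation; writing the vertex contribution as ``proportional to $tW^{(V-2)/2}$'' with your $W$ would give the wrong power. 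This is a bookkeeping pitfall rather than a gap in the strategy.
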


\begin{proof}
From the previous proposition, the potential takes the form $V_{\mathcal{T},N}(\{x_h\}) = N^d\,\hat{V}_{\mathcal{T}}(\{y_h\})$. It comes in particular that there are no Vandermonde contributions at large $N$, since the Vandermonde for the matrices $X_e, X_e^\dagger$ (on non-coinciding eigenvalues) would scale like $\sum_{1\leq i< j\leq N^{|C_e|}} \ln |x_{e,i} - x_{e,j}| \sim N^{2|C_e|}$ in the exponential and $|C_e|<d/2$ for all edges.

Let $e = \{h_1 h_2\}$ be an edge in $\mathcal{T}$ made of the half-edges $h_1, h_2$, incident to $v_1$ and $v_2$ respectively,
\begin{equation}
\begin{array}{c} \includegraphics[scale=.5]{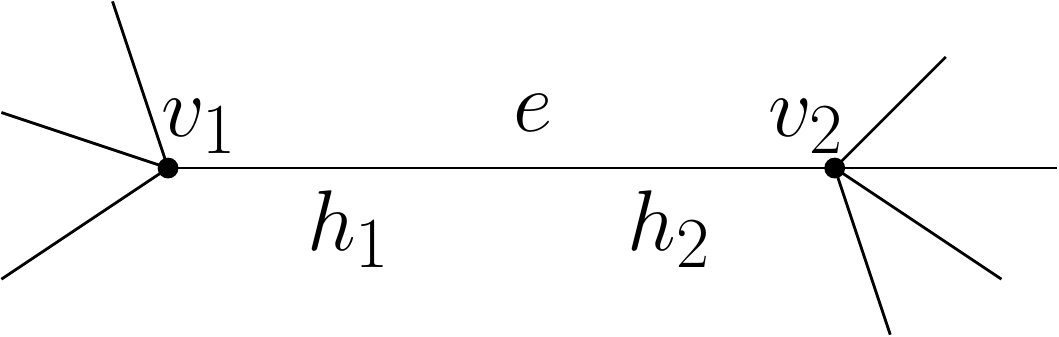} \end{array}
\end{equation}
The saddle point equation with respect to $h_2$ is $\partial \hat{V}_{\mathcal{T}}(\{y_h\})/\partial y_{h_2} = 0$ and gives
\begin{equation}
y_{h_1} = \epsilon_{h_2} t^{\frac{2}{V-2}} \hat{W}_{\mathcal{T}}(\{y_h\}) \prod_{h_{v_2} \neq h_2} \epsilon_{h_{v_2}} y_{h_{v_2}}
\end{equation}
where the product is over all half-edges incident to $v_2$ except $h_2$ and the function $\hat{W}_{\mathcal{T}}(\{y_h\})$ is
\begin{equation}
\hat{W}_{\mathcal{T}}(\{y_h\}) = \frac{1}{1-\sum_{v\in\mathcal{T}} \prod_{h_v} \epsilon_{h_v} y_{h_v}}.
\end{equation}
If $v_2$ is a leaf, the empty product is 1. A simple induction then shows that $y_{h_1}$ can be written as a product over the edges of $\mathcal{T}_{h_1}$,
\begin{equation}
y_{h_1} = \frac{1}{\epsilon_{h_1}} \prod_{e\subset \mathcal{T}_{h_1}} \Bigl( \epsilon_e\,t^{\frac{2}{V-2}} \hat{W}_{\mathcal{T}}(\{y_h\})\Bigr) = \frac{\prod_{e\subset \mathcal{T}_{h_1}} \epsilon_e}{\epsilon_{h_1}} \Bigl(t^{\frac{2}{V-2}} \hat{W}_{\mathcal{T}}(\{y_h\})\Bigr)^{E(\mathcal{T}_{h_1})}
\end{equation}

Then we need to determine the value of $\hat{W}_{\mathcal{T}}(\{y_h\})$ at the saddle point. We use the above equation for all half-edges $h_v$ incident to a given vertex $v$,
\begin{equation}
\prod_{h_v} \epsilon_{h_v} y_{h_v} = \prod_{h_v} \Bigl(\prod_{e\subset \mathcal{T}_{h_v}} \epsilon_{e}\Bigr) \ \Bigl(t^{\frac{2}{V-2}} \hat{W}_{\mathcal{T}}(\{y_h\})\Bigr)^{\sum_{h_v} E(\mathcal{T}_{h_1})} = \Bigl(\prod_{e\subset \mathcal{T}}\epsilon_e \Bigr) \Bigl(t^{\frac{2}{V-2}} \hat{W}_{\mathcal{T}}(\{y_h\})\Bigr)^{E(\mathcal{T})} = - t\,\hat{W}_{\mathcal{T}}(\{y_h\})^{\frac{V-2}{2}}
\end{equation}
where again we have used $\bigcup_{h_v} \mathcal{T}_{h_v} = \mathcal{T}$ and $E(\mathcal{T}) = (V-2)/2$, and the fact that the product of the $\epsilon_e$ is $-1$. This leads to 
\begin{equation}
\hat{W}_{\mathcal{T}}(\{y_h\}) = 1 - t\ \hat{W}_{\mathcal{T}}(\{y_h\})^{V/2}
\end{equation}
meaning that $\hat{W}_{\mathcal{T}}(\{y_h\}) = W_{\mathcal{T}}(t)$ at the saddle point is determined by $t$ only..
\end{proof}

\section*{Acknowledgements}

This research was supported by the ANR MetAConc project ANR-15-CE40-0014.


\end{document}